\Crefname{algocf}{Algorithm}{Algorithms}
\crefname{algocfline}{line}{lines}
\Crefname{invariant}{Invariant}{Invariants}
\Crefname{claim}{Claim}{Claims}
\Crefname{subclaim}{Subclaim}{Subclaims}
\newcommand{\IGNORE}[1]{}
\newcommand{\snote}[1]{{\color{red}\textsl{\small[#1]}\marginpar{\tiny\textsc{SS-Comment}}}}
\newcommand{\vgnote}[1]{{\color{blue}\textsl{\small[#1]}\marginpar{\tiny\textsc{VG-Comment}}}}
\newtheorem{theorem}{Theorem}[section]
\newtheorem{claim}[theorem]{Claim}
\newtheorem{lemma}[theorem]{Lemma}
\newtheorem{corollary}[theorem]{Corollary}
\theoremstyle{definition}
\newtheorem{remark}[theorem]{Remark}
\newtheorem{open}[theorem]{Open Problem}
\newtheorem{definition}[theorem]{Definition}
\newcommand{\disc}{{\mathsf{disc}}}
\newcommand{\currG}[1]{G_{\mathsf{curr}}(#1)}
\newcommand{\levelG}[2]{G_{#1}(#2)}
\newcommand{\expG}[3]{G_{#1,#2}(#3)}
\newcommand{\dir}[1]{{\vec{#1}}}
\newcommand{\dirp}[1]{{{\widetilde #1}}}
\newcommand{\updatevec}{{\textproc{UpdateVector}}\xspace}
\newcommand{\yold}{y^{old}}
\newcommand{\ynew}{y^{new}}
\newcommand{\calT}{\mathcal{T}}
\newcommand{\DBG}{{\textproc{DBG}}\xspace}
\newcommand{\DBGUpdate}{{\textproc{DBGUpdate}}\xspace}
\def\ist{{i^\star}}
\def\eps {\epsilon}
\newcommand{\one}{\mathbf{1}\xspace}
\newcommand{\poly}{\operatorname{poly}}
\newcommand{\sse}{\subseteq}
\newcommand{\nf}{\nicefrac}
\renewcommand{\emptyset}{\varnothing}
\newcommand{\bargrin}{\textsf{B\'{a}r\'{a}ny-Grinberg}\xspace}
\newcommand{\ovb}{\textsc{Online Vector Balancing}\xspace}
\newcommand{\fdvb}{\textsc{Fully-Dynamic Vector Balancing}\xspace}
\newcommand{\vol}{\mathsf{vol}}
\newcommand{\local}{\textproc{Local-Search}\xspace}
\newcommand{\pathlocal}{\textproc{Path-Local-Search}\xspace}
\newcommand{\Prune}{\textproc{Prune}\xspace}
\newcommand{\e}{\varepsilon}
\newcommand{\R}{\mathbb{R}}
\newcommand{\DR}{\textproc{Dyadic Resigning}\xspace}
\newcommand{\cA}{\ensuremath{\mathcal{A}}}
\newcommand{\instant}{\text{moment}\xspace}
\newcommand{\instants}{\text{moments}\xspace}
\title{Online Discrepancy with Recourse for Vectors and Graphs}
\author{
	Anupam Gupta\thanks{ (anupamg@cs.cmu.edu) 	Computer Science Department, 	Carnegie Mellon University. Supported in part by NSF awards CCF-1907820, CCF1955785, and CCF-2006953.}
	\and Vijaykrishna Gurunathan \thanks{(krishnavijay1999@gmail.com) Computer Science Department, Stanford University}
	\and Ravishankar Krishnaswamy \thanks{	(rakri@microsoft.com) 	Microsoft Research.	}
		\and Amit Kumar \thanks{ (amitk@cse.iitd.ac.in) 		Department of Computer Science and Engineering, 		Indian Institute of Technology Delhi.		}
		\and Sahil Singla \thanks{        (ssingla@gatech.edu)        School of Computer Science,         Georgia Tech.         }
}
\begin{document}
% \newcommand\relatedversion{}
% \renewcommand\relatedversion{\thanks{The full version of the paper can be accessed at \protect\url{https://arxiv.org/abs/1902.09310}}} % Replace URL with link to full paper or comment out this line

%\setcounter{chapter}{2} % If you are doing your chapter as chapter one,
%\setcounter{section}{3} % comment these two lines out.

% \title{\Large SIAM/ACM Preprint Series Macros for Use With LaTeX\relatedversion}
% \author{Corey Gray\thanks{Society for Industrial and Applied Mathematics.}
% \and Tricia Manning\thanks{Society for Industrial and Applied Mathematics.}}

\date{}

\maketitle

% Copyright Statement
% When submitting your final paper to a SIAM proceedings, it is requested that you include
% the appropriate copyright in the footer of the paper.  The copyright added should be
% consistent with the copyright selected on the copyright form submitted with the paper.
% Please note that "20XX" should be changed to the year of the meeting.

% Default Copyright Statement
% \fancyfoot[R]{\scriptsize{Copyright \textcopyright\ 2022 by SIAM\\
% Unauthorized reproduction of this article is prohibited}}

\begin{abstract}
  The vector-balancing problem is a fundamental problem in
  discrepancy theory: given $T$ vectors in $[-1,1]^n$, find a signing
  $\sigma(a) \in \{\pm 1\}$ of each vector $a$ to minimize the
  discrepancy $\| \sum_{a} \sigma(a) \cdot a \|_{\infty}$. This problem has
  been extensively studied in the static/offline setting. In this paper we
  initiate its study in the fully-dynamic setting with \emph{recourse}: the
  algorithm sees a stream of $T$ insertions and deletions of vectors,
  and at each time must maintain a low-discrepancy signing, while also
  minimizing the amortized recourse (the number of times any vector
  changes its sign) per update.

  For general vectors, we show algorithms which almost match Spencer's
  $O(\sqrt{n})$ offline discrepancy bound, with
  ${O}(n\poly\!\log T)$ amortized recourse per update. The crucial idea
  behind our algorithm is to compute a basic feasible solution to the
  linear relaxation in a distributed and recursive manner, which helps find a low-discrepancy signing. We bound the recourse
  using the distributed computation of the basic solution, and argue that only
  a small part of the instance needs to be re-computed at each update.

  Since vector balancing has also been greatly studied for sparse vectors, 
  we then give algorithms for \emph{low-discrepancy edge orientation},
  where we dynamically maintain signings for $2$-sparse vectors in an
  $n$-dimensional space. Alternatively, this can be seen as orienting
  a dynamic set of edges of an $n$-vertex graph to minimize the
  discrepancy, i.e., the absolute difference between in- and
  out-degrees at any vertex. We present a deterministic algorithm with
  $O(\poly\!\log n)$ discrepancy and $O(\poly\!\log n)$ amortized
  recourse. The core ideas are to dynamically maintain an
  expander-decomposition with low recourse (using a very simple
  approach), and then to show that, as the expanders change over time,
  a natural local-search algorithm converges quickly (i.e., with low
  recourse) to a low-discrepancy solution. We also  give strong lower
  bounds (with some matching upper bounds) for local-search  discrepancy minimization algorithms for
   vector balancing and edge orientation. 
% Finally, owing to their simplicity and natural applicability to online vector balancing with recourse, we extensively study the strengths and weaknesses of a natural family of local search algorithms.
\end{abstract}

%\thispagestyle{empty}

%\medskip 

% \setcounter{tocdepth}{1}

%{\small
%\begin{spacing}{0}
%   \tableofcontents
%\end{spacing}
%}

%\newpage

%\setcounter{page}{1}

\section{Introduction}
\label{sec:introduction}

In the % classical
\ovb problem introduced by Spencer~\cite{Spencer77}, vectors
$a_1, a_2, \ldots, a_T \in [-1,1]^n$ arrive online, and the algorithm
irrevocably assigns a sign $\sigma(a_t)$ immediately upon seeing $a_t$,
with the goal of minimizing the \emph{discrepancy of the signed sum},
i.e., $\| \sum_t \sigma(a_t)\cdot a_t \|_{\infty}$. Following a sequence of works~\cite{BS20,BJSS20,BJMSS-SODA21}, the state-of-the-art bounds for this problem is an elegant randomized algorithm that maintains a
discrepancy of $O(\sqrt{n} \log (nT))$~\cite{ALS-STOC21}. Their result assumes an
\emph{oblivious} adversary, so that the choice of arriving vectors does not
depend on the internal state of the algorithm. 
Indeed, if we allow \emph{adaptive}
adversaries then every online algorithm incurs $\Omega(\sqrt{T})$ discrepancy~\cite{Spencer77}.
We think of $T \gg n$, so $\Omega(\sqrt{T})$ is much larger than $O(\sqrt{n} \log (nT))$.

%\medskip \noindent {\bf }
 
We initiate the study of  \fdvb, where vectors can both arrive
or depart at each time step, and the algorithm must always maintain a
low-discrepancy signing of the vectors present in the system at all
times. Since it is easy to construct examples where no algorithm can
guarantee non-trivial discrepancy bounds if it is forced to commit to
the sign of a vector upon arrival, we study the problem where the
algorithm can \emph{re-sign the vectors from time to time}. Indeed,
many real-world applications that motivate such discrepancy-based
methods (such as in fair allocations, sparsification routines, etc.)
have a fully-dynamic flavor to them, with the corresponding inputs
being dynamic in nature due to both insertions and deletions. 

%\snote{Say a line and cite papers to say that this Recourse framework has been greatly studied for other problems before.}

%\begin{problem} [\fdvb]
\textbf{Problem} (\fdvb). \textit{
  We start with an empty collection of \emph{active} vectors $A(0)$. At each time/update
  $t \in [T]$, an adaptive adversary either inserts a new vector
  $a_t \in [-1,1]^n$, i.e., $A(t) = A(t-1) \cup \{a_t\}$, or removes
  an existing vector $a \in A(t-1)$, i.e.,
  $A(t) = A(t-1) \setminus \{a\}$. The goal is to maintain
  \emph{signings} $\sigma_t: A(t) \rightarrow \{\pm 1\}$ to minimize
  the norm $\| \sum_{a \in A(t)} \sigma_t(a) \cdot a \|_{\infty}$. The
  algorithm can reassign the sign of a vector $a$ (i.e., set
  $\sigma_t(a) \neq \sigma_{t-1}(a)$), and the total \emph{recourse} is the sum total of the reassignments.
  }
%\end{problem}

Two trivial solutions exist: (a)~recomputing low-discrepancy signings
on the active set  of vectors after every update operation incurs optimal offline discrepancy guarantees with a recourse of $\Theta(T)$ per
update, and (b) an independent and uniformly random signing of every
new vector % as $\pm 1$ with probability $0.5$ each
maintains  at any time $t$ a signing of discrepancy
$\Theta(\sqrt{T \log n})$ w.h.p., while performing no recourse
whatsoever. 
Since $T \gg n$, this is much larger than the
optimal offline discrepancy bounds of $O(\sqrt{n})$ for any collection of $T$ vectors in 
$[-1,1]^n$~\cite{Spencer85,Bansal-FOCS10,Lovett-Meka-SICOMP15}.
We ask: \emph{can we get near-optimal\footnote{In this paper, we use ``near-optimal'' to mean optimal up to poly-logarithmic factors.} discrepancy bounds with a small amount of
recourse?}

\subsection{Our Results and Techniques}

\paragraph{Fully-Dynamic Vector Balancing.}

%(the first bounds~\cite{Spencer85} were non-constructive, and recently they were made efficient for this setting as well as the generalization to vectors with bounded norm~\cite{Bansal-FOCS10,Lovett-Meka-SICOMP15,BansalDG-FOCS16}. 

Our first main contribution is the design of an  algorithm which maintains low-discrepancy
signings for the fully-dynamic problem that nearly matches the offline discrepancy
bounds while giving an amortized recourse that is only logarithmic in the
sequence length $T$.
%(which we think of as being much larger than the dimension $n$).

\begin{theorem}[Fully-Dynamic: General Vector Balancing]
  \label{thm:main1}
  There is an efficient algorithm for \fdvb with update 
  vectors in $[-1,1]^n$ which maintains signings
  $\sigma_t(\cdot)$ with discrepancy $O(\sqrt{n})$ and an amortized recourse of $O(n \log T)$ per update, even against adaptive   adversaries.   For Komlos' setting, i.e., if all the updates vectors have $\ell_2$ length at most $1$ (instead of $\ell_\infty$ length),  the algorithm achieves discrepancy $O(\sqrt{\log(n)})$ with an amortized  recourse of $O(n \log T)$ per update.
\end{theorem}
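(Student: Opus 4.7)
The plan is to reduce the dynamic problem to a recursively defined partial-coloring scheme based on basic feasible solutions (BFS) of the discrepancy LP, and then amortize the recourse across a hierarchy of updates. I would split the argument into a static ingredient (a BFS-based partial-coloring algorithm achieving the claimed discrepancy) and a dynamic ingredient (a hierarchical maintenance scheme achieving the claimed recourse).

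For the static bound, the starting point is the observation that any BFS of the LP
\[
\Bigl\{x\in [-1,1]^{A(t)} \;:\; \Bigl\|\textstyle\sum_{a\in A(t)} x_a\cdot a\Bigr\|_\infty \leq D\Bigr\}
\]
has at most $2n$ fractional coordinates, since only $2n$ of the discrepancy constraints can be tight beyond the box constraints. Iterating this observation in the style of Beck--Fiala and then invoking a Bansal/Lovett--Meka type partial-coloring on the residual $O(n)$ fractional vectors yields an $O(\sqrt{n})$ discrepancy signing. Under the Komlos assumption of unit $\ell_2$ vectors, the per-level partial coloring is replaced with a Banaszczyk-style round that tightens the discrepancy contribution to $O(\sqrt{\log n})$, and the same recursive framework applies.

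For the dynamic setting, I would maintain a hierarchy of $O(\log T)$ nested instances, where level $0$ contains all of $A(t)$, and level $k+1$ consists of the $O(n)$ vectors that are fractional in the BFS at level $k$. Each level has its own BFS, yielding a partial signing of its integral vectors; the full signing is obtained by concatenating across levels. When an update arrives, it enters level $0$, and the BFS at that level is re-pivoted to restore feasibility; each pivot changes the status of $O(1)$ variables between integral and fractional, which both contributes to recourse and modifies the input to the next level. Thus a single update can trigger $O(n)$ sign changes at each of the $O(\log T)$ levels.

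The amortization uses a standard hierarchical rebuilding schedule: level $k$ is rebuilt from scratch every $\Theta(2^k)$ updates, at a cost of $O(n\cdot 2^k)$ recourse per rebuild, yielding $O(n)$ amortized recourse at each level and $O(n\log T)$ total per update. The main obstacle is establishing that the recursion is \emph{stable} under the pivot propagation: each pivot at level $k$ must modify only $O(1)$ entries of the level-$(k+1)$ instance, so that the update cost does not cascade geometrically in $k$, and the partial-signing invariants must be preserved through each pivot so that the discrepancy bound still sums to $O(\sqrt{n})$ (resp.\ $O(\sqrt{\log n})$ for Komlos) across the $O(\log T)$ levels. Finally, since pivots are driven by the algorithm's internal state and not by the exposed signing, the bound holds even against adaptive adversaries.
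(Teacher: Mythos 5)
Your static ingredient is essentially the B\'ar\'any--Grinberg rounding step the paper also uses, and your instinct to combine a BFS-based partial coloring with hierarchical amortization is the right one. But the dynamic half has a genuine gap, and it is exactly the point you flag as ``the main obstacle'' without resolving: there is no reason a BFS of the level-$0$ LP (over all of $A(t)$) changes in only $O(n)$ coordinates, or via $O(1)$ pivots, when a single column is inserted or deleted. Basic solutions of $\{Ay = Ax,\ y \in [-1,1]^T\}$ for two matrices differing in one column can differ in $\Omega(T)$ coordinates, so ``re-pivoting to restore feasibility'' is not a low-recourse operation. Moreover, your nested hierarchy (level $k+1$ equals the fractional vectors of level $k$) collapses after one level, since a single BFS already leaves only $O(n)$ fractional vectors; it does not produce $O(\log T)$ levels to amortize over, and it is unclear how it interacts with the rebuild-every-$\Theta(2^k)$ schedule.

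The paper's fix is structural: partition the time sequence into disjoint blocks of $2n$ vectors at the leaves of a binary tree of height $O(\log T)$, and run the rounding lemma bottom-up so that each internal node merges the at most $n$ fractional residuals of its two children and reduces them back to at most $n$, while maintaining the invariant that the signed (fractional) sum at every node is exactly zero. That zero-sum invariant is what prevents discrepancy from accumulating across levels --- the entire discrepancy comes from rounding the $n$ fractional variables at the root with an offline algorithm (Spencer-type for $O(\sqrt{n})$, Banaszczyk-type for the Komlos case) --- and the disjoint-block tree is what gives locality: an update touches one leaf and its $O(\log T)$ ancestors, each of which recomputes an LP on only $O(n)$ variables, yielding $O(n\log T)$ recourse with no appeal to BFS stability. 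A phase argument with zero-vector padding keeps the tree fixed as the number of active vectors changes. As written, your proposal establishes neither the recourse bound nor the fact that the discrepancy avoids a $\log T$ factor; supplying the disjoint-block tree and the zero-sum invariant is the missing content.
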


Since in this theorem we are competitive against \emph{adaptive adversaries}, it illustrates the power of recourse:  in the absence of recourse, we get $\Omega(\sqrt{T})$ lower bounds 
on the discrepancy even for arrival-only sequences of $2$-dimensional vectors. This is because the adversary can always make the next vector to be
orthogonal to the current signed sum.

At a very high level, our algorithm divides the instance into many
parts of size $O(n)$, obtains a good partial signing for each part
(such that all but $n$ vectors are signed), and recurses on the
residual instance. The algorithm imposes a tree-like hierarchy on
these parts, so that it can easily adapt to inserts or deletes with
bounded recourse by only re-running the computations on the part
suffering the insertion/deletion, and on any internal node on the
corresponding root-leaf path from that part to the root. If we are not
careful, the discrepancy of the overall vector can be proportional to
the number of parts, since we could accrue error in each
part. However, we use linear algebraic ideas inspired
by~\cite{BaranyGrinberg81} to \emph{couple all the parts}, thereby
always ensuring that the sum of the partial signings across all nodes
of the tree (except the root) is zero. 

%\medskip \noindent {\bf Fully Dynamic Edge Orientation or Carpooling.}

\paragraph{Fully-Dynamic Edge Orientation/Carpooling and Local Search.}
%{Fully Dynamic Edge-Orientation/Carpooling and Signing Sparse Vectors}

Given the general result above,  next we focus on the special case of
orienting edges of a graph to minimize the maximum imbalance between
the in- and out-degrees. %This problem was posed by
Fagin and Williams~\cite{Fagin83}
posed the \emph{carpooling problem},
% Here, each arriving vector
% corresponds to an edge $(u,v)$ in a graph over $n$ vertices, and the
% algorithm must orient it either $u \rightarrow v$ or
% $v \rightarrow u$. The objective is to maintain an orientation where
% every vertex is as balanced as possible w.r.t its in- and out-edges,
% i.e., $\max_{v} | \delta^-(v) - \delta^+(v) |$ is minimized, where
% $\delta^-(v)$ and $\delta^+(v)$ denote the set of in- and out-edges
% incident to $v$. This ensures that each vertex sees roughly an equal
% number of incoming edges and outgoing edges in the
% orientation. Indeed, it is easy to see that this problem
which corresponds to vector balancing with vectors of the form
$(0, \ldots, 0, 1, 0, \ldots, 0, -1, 0, \ldots) \in \R^n$, and the
graph discrepancy objective is precisely the $\| \cdot \|_{\infty}$ of
the signed sum of vectors.
%The problem was originally studied by Ajtai et al.~\cite{Ajtai98} in
\cite{Fagin83,Ajtai98} use this problem to model fairness in
scheduling, where edges represent shared commitments (such as
carpooling), orientations give primary and secondary partners of the
commitment (e.g., driver and co-driver), and hence the discrepancy
measures fairness for individuals, in terms of how many commitments
he/she is the primary partner for, relative to the total number of
commitments he/she is a part of.

Somewhat surprisingly, \cite{Ajtai98} showed that any algorithm must
suffer $\Omega(n)$ discrepancy on some worse-case adaptive sequence of
edge arrivals. On the other hand, for an oblivious sequence of edge
arrivals, it is easy to maintain orientations with
$O(\sqrt{n \log n})$ discrepancy by simply orienting edges randomly
(while always orienting repeated parallel edges $(u,v)$ oppositely).
To mitigate such strong lower bounds, \cite{Ajtai98} and recently
Gupta et al.~\cite{GuptaKKS20} study a stochastic version of the
problem where the arriving edges are sampled from a known distribution:
they design algorithms to maintain $\poly\!\log(n,T)$-discrepancy. The
recent algorithm of Alweiss et al.~\cite{ALS-STOC21} also extends to
this special case giving $O(\log (nT))$ discrepancy bounds for any
oblivious sequence of edge arrivals, not just stochastic ones. None of these prior algorithms extend to a
fully-dynamic input consisting of both insertions and deletions. 
Moreover,  \Cref{thm:main1} guarantees near-optimal discrepancy  only with $O(n \log T)$ amortized recourse. 
In this paper, we give \emph{deterministic} near-optimal discrepancy algorithms with near-optimal amortized recourse. % where edges can be
% inserted or removed at each time step.

\begin{theorem}[Fully-Dynamic Edge Orientation]
  \label{thm:main2}
  There is an efficient deterministic algorithm that maintains an
  orientation of $\poly\!\log n$ discrepancy while performing an
  amortized recourse of $\poly\!\log n$ per update.
\end{theorem}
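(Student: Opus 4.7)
My plan follows the two-pronged high-level strategy foreshadowed by the paper, combining a dynamic expander decomposition with a local-search subroutine for balancing orientations inside each expander piece.

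\textbf{Dynamic expander decomposition.} I would maintain a hierarchical edge partition at each time $t$: at each level, partition the remaining vertices into clusters whose induced subgraphs are $\phi$-expanders with $\phi = 1/\poly\!\log n$, and push the $O(\phi \cdot |E|)$ inter-cluster edges down to the next level. Because each level reduces the edge count by a constant factor, only $O(\log n)$ levels are needed. For the dynamic maintenance with low recourse, I would use a lazy rebuild rule: keep the current partition intact as long as possible, and when an update destroys the conductance of some cluster, find a sparse cut and move its small side to the next level. A standard analysis of such lazy decompositions should give amortized vertex-to-cluster reassignment cost of only $\poly\!\log n$ per edge update.

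\textbf{Local search on each expander.} Within each expander cluster $H$, I maintain an orientation using the following rule: if some vertex $v$ has $|\disc(v)|$ above a threshold $\tau = \poly\!\log n$, find a short path $P$ in $H$ from $v$ to a vertex $u$ with oppositely-signed (or zero) imbalance, and flip the orientation of every edge along $P$. A path-flip changes $\disc(\cdot)$ only at the two endpoints (by $\pm 2$), so with the right choice of $u$ the $\ell_2$-potential $\Phi(H) := \sum_v \disc(v)^2$ strictly decreases. Expansion of $H$ guarantees that such a path exists of length $O(\log n / \phi) = \poly\!\log n$, giving $\poly\!\log n$ edge-flips per rebalancing step. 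A $\Phi$-decrease argument should then show that only $\poly\!\log n$ path-flips per triggering event (edge update or cluster reassignment) suffice to restore $\max_v |\disc(v)| \leq \tau$.

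\textbf{Amortized recourse across the hierarchy.} Each edge insertion or deletion directly perturbs only two vertices' discrepancies and touches only $O(\log n)$ levels of the decomposition. By the low-recourse property of the decomposition, each update additionally triggers at most $\poly\!\log n$ cluster reassignments; each such reassignment behaves like a handful of edge updates for the affected clusters. I would charge the recourse of every local-search path-flip against the $\Phi$-decrease it produces, while bounding the total $\Phi$-increase against edge updates and cluster reassignments. Since each of these only contributes $\poly\!\log n$ to the global $\Phi = \sum_i \Phi(H_i)$, the total amortized recourse comes out to $\poly\!\log n$ per update, and summing over the $O(\log n)$ levels preserves the $\poly\!\log n$ bound.

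The main obstacle I anticipate is the coupling between the two subsystems: when a vertex migrates between clusters its imbalance contribution migrates with it, and the edges incident to it may jump levels of the hierarchy, potentially cascading the local-search work. I plan to handle this by working with the single global potential $\Phi$ and by arguing that each reassignment changes $\Phi$ by at most $\poly\!\log n$, since only $\poly\!\log n$ edges move with a reassigned vertex and each contributes at most $O(\tau)$ to the potential. This should close the token-passing argument and reconcile the two subsystems' recourse budgets cleanly.
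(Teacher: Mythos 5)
Your architecture matches the paper's: a dynamically maintained expander decomposition (with lazy sparse-cut removal under deletions and level-based rebuilds under insertions), plus an $\ell_2$-potential local search inside each expander piece. The discrepancy half of your plan is sound---the paper proves exactly the statement you need, that local optima on (weakly-regular) expanders have $O(\log m/(\phi\gamma))$ discrepancy, via a BFS argument along incoming edges, and your path-flip variant admits essentially the same analysis.

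The genuine gap is in your last paragraph, where you assert that each cluster reassignment changes the global potential $\Phi=\sum_v \disc(v)^2$ by at most $\poly\!\log n$ ``since only $\poly\!\log n$ edges move with a reassigned vertex.'' This is false: the set $\Delta P$ removed by a sparse-cut prune can have volume up to $\Theta(t/\phi)$ (amortized over $t$ deletions), and more importantly a single \emph{remaining} vertex $u$ may be incident to many edges of $E(u,\Delta P)$. Deleting those edges en masse shifts $\disc(u)$ by up to $\deg_{\Delta P}(u)$, so $\Phi$ can jump by $\sum_u \deg_{\Delta P}(u)^2$, which is not controlled by $\tau\cdot\vol(\Delta P)$ and can be polynomially large; degrees inside an expander cluster are $\Theta(m/n)$, not $\poly\!\log n$. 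This is precisely the ``potential problem'' the paper isolates as the main technical obstacle, and it is why the paper does not remove $\Delta P$ in one shot: it first replaces $\Delta P$ by an equal number of \emph{fake} vertices and redistributes the crossing edges in a balanced round-robin fashion (so the potential does not increase and every intermediate graph remains an expander), then deletes the fake vertices one at a time, re-running local search after each deletion and charging each step's $O(\deg(f_j)\cdot \log m/(\phi^2\gamma))$ potential increase against the volume of the pruned set. Without this (or some equivalent interpolation between $G$ and $G[V\setminus\Delta P]$), your charging argument does not close, because the total $\Phi$-increase per update is not $\poly\!\log n$ and the recourse of the subsequent local search cannot be amortized.
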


Since this algorithm   is deterministic, 
  the guarantees also hold against
adaptive adversaries: there are $\Omega(n)$
discrepancy  bounds for no-recourse algorithms against such adversaries, even
for the setting of only arrivals.

At a high level, our algorithm can be seen as a composition of two
modules. Firstly, we consider a simple local-search procedure, which
flips an edge from $u \rightarrow v$ to $v \rightarrow u$ if the
current discrepancy of $v$ exceeds that of $u$ by more than
$2$. Clearly, this reduces the discrepancy of the maximum of these two
vertices. Our crucial observation is that  this process always maintains
low-discrepancy signings when the graph is an \emph{expander}. We find
this interesting, since we can show that there are bad local optima with
$\poly(n)$ discrepancy for general graphs.
Secondly, we show how to dynamically maintain a partitioning of the
edge set of an arbitrary graph $G$ into a disjoint collection of
expanders $G_1, G_2, \ldots, G_{\ell}$ with each vertex appearing in
at most $\poly\!\log n$ many expanders, such that the \emph{amortized
  number of changes} to $G_1$, $G_2$, $\ldots, G_\ell$,  per
update to $G$ is bounded. (This expander decomposition can be viewed
as a ``preconditioning'' step.)
We build on ideas recently developed for dynamic graph
algorithms~\cite{saranurak2019expander,bernstein2020fullydynamic}: our challenge is to show
that dynamic expander decomposition can be done along with
local search on the individual expanders, and specifically to control
the potential functions that guide our proofs.
% the algorithms and
% analyses here become simpler since we are concerned only with the
% amortized recourse rather than the update time.

Indeed, using the above two modules to obtain~\Cref{thm:main2}
requires new ideas.
% carefully leads to the proof of \Cref{thm:main2}: 
When an update  (insertion or deletion) occurs
to $G$, we first modify our expander decomposition,
and re-run local search \emph{starting from the prior local optima} in
each expander. While this ensures good discrepancy bounds, it could lead to  many local search moves.
In order to bound the latter quantity, 
our  idea
is to use a potential
function in each expander to bound the recourse, such that each step
of local search  always decreases the potential by at least a constant. This
is somewhat delicate: a single update in $G$ can change any particular
expander $G_i$ by a lot 
(even though the amortized
recourse is bounded), and hence the single-step potential change can
be huge. We show how to maintain some Lipschitzness properties for our
potential function under inserts and deletes, which gives us the final bounds
of $\poly\!\log n$ on both the discrepancy and the
recourse. 
% To this end, if $G^{{\sf prev}}_i$ and $G^{{\sf cur}}_i$
% represent a particular expander before and after the updates, we craft
% a series of intermediate ``fake'' expanders
% $G^{{\sf prev}}_i = G^0_i, G^1_i, \ldots, G^p_i = G^{{\sf cur}}_i$ as
% an interpolation from $G^{{\sf prev}}_i$ to $G^{{\sf cur}}_i$, and
% \emph{run local search after each intermediate update starting from
%   the previous local optimum}. The intermediate graphs are created to
% ensure that the overall increase in potential as we change the graphs
% from $G^{{\sf prev}}_i$ to $G^{{\sf cur}}_i$ is bounded in terms of
% the number of edge-changes between $G^{{\sf prev}}_i$ and
% $G^{{\sf cur}}_i$.

Along the way, we also develop a better understanding of the strengths
and limitations of local search as a technique for discrepancy minimization
problems, both for  graphs and  for general vectors.

%(and, worse yet, $\Omega(2^n)$ discrepancy for general $\pm 1$ vectors, and $\Omega(\sqrt{T})$ discrepancy for general real vectors). 

\begin{theorem}[Informal: Discrepancy of Local Optima]
  \label{thm:main3}
  For edge orientation in expanders, any locally optimal solution for
  local search using the simple potential
  $\Phi = \sum_{v \in V} disc(v)^2$ has discrepancy  $O(\log
  n)$. For arbitrary graphs, however, the discrepancy can be as bad as
  $\Omega(n^{1/3})$. For general vectors in $\{\pm 1\}^n$ (and in
  $[-1,1]^n$), the local search bound using the simple potential   $\Phi = \sum_{i \in [n]} \big(\sum_t \sigma(a_t)\cdot a_t(i) \big)^2$
  deteriorates to $\Omega(2^n)$ (and to
  $\Omega(\sqrt{T})$). 
\end{theorem}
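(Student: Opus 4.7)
The theorem bundles one upper bound (expanders) and three lower bounds, so I'll attack each separately.

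\emph{Expander upper bound.} Set $\delta(v) := \mathrm{out}(v) - \mathrm{in}(v)$, so $\Phi = \sum_v \delta(v)^2$. Flipping an oriented edge $u \to v$ changes $\Phi$ by $4(\delta(v)-\delta(u)) + 8$, which at a local optimum is $\ge 0$, yielding the pointwise ``smoothness'' inequality $\delta(u) \le \delta(v) + 2$ on every oriented edge. Let $\phi = \max_v |\delta(v)|$ be attained at $v^\star$ with $\delta(v^\star) = \phi$, and define super-level sets $S_j := \{v : \delta(v) \ge \phi - 2(j-1)\}$, so that $v^\star \in S_1$. Smoothness forces every oriented edge leaving $S_j$ to land in $S_{j+1} \setminus S_j$. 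Combining the flow identity $\sum_{v \in S_j} \delta(v) = E_{\mathrm{out}}(S_j) - E_{\mathrm{in}}(S_j) \ge (\phi - 2j + 2)|S_j|$ with edge expansion $E_{\mathrm{out}}(S_j) + E_{\mathrm{in}}(S_j) \ge \alpha|S_j|$ (valid when $|S_j| \le n/2$) gives $E_{\mathrm{out}}(S_j) \ge \tfrac12 |S_j|(\phi - 2j + 2 + \alpha)$. Dividing by the max degree $d$, which upper-bounds how many oriented edges a single head can absorb, yields the growth recurrence $|S_{j+1}| \ge |S_j|\bigl(1 + \Omega(\alpha/d)\bigr)$. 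Iterating forces $|S_j| > n/2$ within $O((d/\alpha)\log n)$ levels, hence $\phi = O(\log n)$ on a constant-degree expander.

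\emph{General-graph lower bound.} Without expansion, the above argument degenerates into $\phi = O(\sqrt{d \log n})$, which is $\Omega(n^{1/3})$ at $d \approx n^{2/3}$. I would therefore construct an $n$-vertex graph of degree $\Theta(n^{2/3})$ that saturates this threshold: a multi-level or bipartite gadget with $\approx n^{1/3}$ ``hub'' vertices that accumulate discrepancy $\Omega(n^{1/3})$, while each hub-neighbor has discrepancy within $2$ of its hub, so that smoothness is tight on every oriented edge and no single flip reduces $\Phi$.

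\emph{Vector lower bounds.} Flipping $\sigma(a_t)$ changes $\Phi = \|s\|_2^2$ by $-4\sigma(a_t)\langle s, a_t\rangle + 4\|a_t\|_2^2$, so local optimality is equivalent to $\sigma(a_t)\langle s, a_t\rangle \le \|a_t\|_2^2$ for every $t$. Summing over $t$ gives $\|s\|_2^2 \le T \max_t \|a_t\|_2^2$, which for $\{\pm 1\}^n$ permits $\|s\|_\infty$ up to $\sqrt{Tn}$, and for unit-norm vectors in $[-1,1]^n$ permits $\|s\|_\infty$ up to $\sqrt{T}$. The plan is to realize these: for $\{\pm 1\}^n$, take $T = \Theta(2^{2n}/n)$ vectors drawn from a Hadamard-like family, with a signing that concentrates $s$ on a single coordinate while keeping each $\langle s, a_t\rangle \le n$, giving $\|s\|_\infty = \Omega(2^n)$; for $[-1,1]^n$, rescale the same construction to unit-norm vectors to obtain $\Omega(\sqrt{T})$.

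The expander upper bound is a clean counting argument. The main technical challenge lies in the lower-bound constructions: for the general-graph case, engineering a graph in which smoothness is saturated everywhere at discrepancy $\Omega(n^{1/3})$; for the vector cases, designing explicit families in which the signed sum $s$ essentially lives in a one-dimensional direction, yet every single-coordinate flip is simultaneously certified not to help via the per-vector local-optimality inequality.
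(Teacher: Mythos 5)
Your expander upper bound is essentially the paper's argument (Theorem 4.3): local optimality gives the edge-wise inequality $\disc$-difference $\le 2$, level sets around the max-discrepancy vertex grow geometrically because expansion forces many boundary edges while the sign condition forces at least half of them to point inward, and a zero-total-discrepancy argument guarantees the complement stays large for the first $k/4$ levels. One caveat: your recurrence counts \emph{vertices} and divides by the maximum degree $d$, so it only yields $O(\log n)$ for bounded-degree expanders; the paper instead tracks $|E(S_i)|$ and uses volume-based conductance plus weak regularity, giving $O(\log m/(\phi\gamma))$ for arbitrary degrees, which is what the dynamic algorithm actually needs. Your framing of the vector case is also the right one (local optimality $\Leftrightarrow$ $\sigma(a_t)\langle S,a_t\rangle \le \|a_t\|_2^2$ for all $t$, hence $\|S\|_2 \le \sqrt{T}\max_t\|a_t\|_2$), and correctly identifies the feasible regime for each lower bound.

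The genuine gap is that all three lower bounds are stated as targets rather than proved: the entire content of those parts is the explicit constructions, and none is exhibited or verified. For $[-1,1]^n$ you propose rescaling a Hadamard-type family, but the paper's example is far simpler and you should be able to produce something like it: $T/2$ copies each of $(1,1/\sqrt{T})$ and $(-1,1/\sqrt{T})$ in dimension $2$, all signed $+1$, giving $S=(0,\sqrt{T})$ with every flip increasing the potential. For $\{\pm 1\}^n$, a ``signing that concentrates $S$ on a single coordinate while keeping each $\langle S,a_t\rangle\le n$'' is not obviously realizable, and it is not what the paper does: there $S=(s_1,s_1,\dots,s_{n/2},s_{n/2})$ has geometrically growing coordinates $s_i$, obtained by choosing multiplicities $\vec r$ of $n/2$ repeating $2\times n$ blocks so that a lower-triangular system makes $\langle S,a\rangle = n$ hold with equality for every row; inverting that triangular matrix is what produces the $\Omega(2^{n/2})$ growth (note the formal bound is $2^{n/2}$, not $2^n$). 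For graphs, your ``hub gadget'' intuition is right, but you must actually specify the layered digraph: $k=\Theta(n^{1/3})$ layers, complete bipartite connections oriented from $L_{i+1}$ to $L_i$, and layer sizes satisfying $n_{i+1}-n_{i-1}=k-2i$ so that every vertex in $L_i$ has discrepancy exactly $k-2i$ and every edge saturates the smoothness inequality; one then checks the total vertex count is $O(k^3)$. Without these constructions and the verification that each is a local optimum, the three lower-bound claims remain unproved.
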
 

\medskip \noindent {\bf Signing $s$-Sparse Vectors for Online Arrivals.} Finally,
we consider the problem with $s$-sparse vectors, which interpolates
between the graphical case of $s=2$ and the general case. 
% with $s$-sparse vectors, which have at most $s$ non-zero entries from
% $[-1,1]$ (and the remaining $n-s$ entries are zero).
In the offline setting, the classical linear-algebraic algorithm of
Beck and Fiala~\cite{BeckFiala-DAM81} constructs a signing with
disrepancy $2s-1$ (independent of $n$ and $T$). Subsequent works by
Banaszczyk~\cite{Banaszczyk-Journal98} and Bansal, Dadush, and Garg~\cite{BansalDG-FOCS16} develop techniques to get %signings with
discrepancy $O(\sqrt{s \log n})$, and a long-standing question in
discrepancy theory is to improve this bound to $O(\sqrt{s})$.
Here we study the \ovb problem only with arrivals. In this setting, the algorithm of \cite{ALS-STOC21} maintains signings
of discrepancy $O(\sqrt{s} \log (nT))$ without recourse, but against
an \emph{oblivious} adversary. In \Cref{sec:insert-only} we give a generic reduction that
can maintain near-optimal discrepancy  for \ovb
against adaptive adversaries, with small recourse.

\begin{theorem}[Arrivals Only: Online Vector Balancing with Recourse]
  \label{thm:main4}
  There is an efficient algorithm for \ovb with $s$-sparse vectors
  that achieves $O(\sqrt{s \log n} \log T)$ discrepancy and $O(\log T)$
  amortized recourse per update against an adaptive adversary.
\end{theorem}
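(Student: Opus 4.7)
The plan is to reduce to the offline algorithm by a standard dyadic/binary-counter grouping, exploiting the fact that a signing produced by an offline algorithm on a fixed batch does not care whether future vectors are chosen adaptively. Concretely, I would use a known offline algorithm that, given any batch of $s$-sparse vectors in $[-1,1]^n$ with $\ell_2$ length at most $\sqrt{s}$, produces a signing of discrepancy $O(\sqrt{s \log n})$ with high probability (e.g., Bansal–Dadush–Garg realizing Banaszczyk's bound).

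First I would maintain, at each time $t$, a partition of the $t$ arrived vectors into groups whose sizes are exactly the powers of $2$ appearing in the binary expansion of $t$: if $t = \sum_i 2^{k_i}$ then the groups have sizes $2^{k_1},2^{k_2},\ldots$. For each group $B$ the algorithm stores a signing $\sigma_B$ obtained by running the offline algorithm on $B$, and the global signing at time $t$ is the union $\bigcup_B \sigma_B$. Upon each arrival, I create a singleton group containing the new vector, and then repeatedly merge any two groups of equal size, recomputing the offline signing on the merged batch of $2^{k+1}$ vectors; this cascade terminates after at most $\log_2 T$ merges, exactly mimicking a binary increment.

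The discrepancy analysis follows by the triangle inequality: at any time the global signed sum is the sum of signed sums over at most $\lceil \log_2(T+1) \rceil$ groups, each of which has discrepancy $O(\sqrt{s \log n})$, giving the claimed $O(\sqrt{s \log n}\,\log T)$ bound. The recourse analysis is also immediate: a vector changes sign only when the group containing it is merged into a larger group, and this can happen at most $\log_2 T$ times in its lifetime; hence the total recourse across $T$ arrivals is $O(T \log T)$, i.e.\ $O(\log T)$ amortized per update.

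The main obstacle is the adaptive adversary. The per-call offline guarantee holds with some constant probability over the algorithm's internal randomness, and the adversary chooses the next vector after seeing the revealed signing. The key observation is that, conditional on the current batch, the offline subroutine's guarantee is a statement about the fixed input batch, so adaptivity of later arrivals does not harm it. To make the union bound over the $\le T$ offline invocations clean, I would amplify each invocation's success probability to $1 - 1/T^{c}$ by repeating $O(\log T)$ times and keeping any signing whose certifiable discrepancy meets the target; this preserves the bound up to constants. The rest of the argument is a direct accounting of discrepancies and sign changes as described above.
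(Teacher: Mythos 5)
Your proposal is correct and is essentially the paper's own \DR (Dyadic Resigning) algorithm: the binary-counter merging produces exactly the dyadic groups the paper maintains, and the discrepancy (triangle inequality over $O(\log T)$ groups) and recourse (each vector re-signed at most $\log_2 T$ times) analyses coincide. Your extra care in amplifying the success probability of the randomized offline subroutine against the adaptive adversary is a reasonable refinement that the paper glosses over, but it does not change the argument.
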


\subsection{Further Related Work}
\label{sec:related-work}

Discrepancy theory is a rich and vibrant area of research~\cite{Chazelle-Book01,Matousek-Book09}. While some
initial works~\cite{Spencer77,Barany79} focused on the online
discrepancy problem, the majority of research dealt with the offline
setting, where the $T$ vectors are given upfront. Near-optimal
results are known for settings such as discrete set
systems~\cite{Spencer85,Bansal-FOCS10,Lovett-Meka-SICOMP15} (i.e., vectors in $\{0,1\}^n$), \emph{sparse}
set systems~\cite{BeckFiala-DAM81} ($s$-sparse binary vectors),
and general vectors in the unit ball~\cite{Banaszczyk-Journal98,Beck-Combinatorica81,Giannopoulos,Rothvoss14,BansalDGL18}.

There has been a renewed interest in the online discrepancy setting, where many of
the techniques developed for the offline setting no longer
extend. Most of the results for online vector discrepancy deal with
\emph{stochastic settings} of the problem where the arriving vectors
satisfy some distributional
assumptions~\cite{Ajtai98,BS20,BJMSS-SODA21}. A
recent breakthrough work~\cite{ALS-STOC21} gives a very elegant
randomized algorithm with near-optimal discrepancy for general vectors
arriving online. However, to the best of our knowledge, none of these
ideas easily extend to the fully-dynamic setting where vectors can
also depart---which is the focus of this paper. In fact, we do not
know how to adapt existing ideas to establish non-trivial results for
even the simple \emph{deletions-only} setting: starting with $T$
vectors, a uniformly random subset of $\nicefrac{T}{2}$ of these vectors are
deleted one-by-one. Can we always maintain a low-discrepancy signing
of the remaining vectors with small recourse?

The study of dynamic algorithms also has a rich history, both in the
\emph{recourse} model, which measures the number of updates made the
algorithm per update, and the \emph{update-time} model, which measures
the running time of the algorithm per update. Apart from graph
problems, these models have been studied in a variety of settings such
network design~\cite{imase1991dynamic, gupta2014online, gu2016power,
  lkacki2015power}, clustering \cite{guo2020power, cohen2019fully},
matching \cite{grove1995online,chaudhuri2009online, bosek2014online},
and scheduling \cite{phillips1993online, westbrook2000load,
  andrews1999improved, sanders2009online, skutella2010robust,
  epstein2014robust, gupta2014maintaining}, and set cover
\cite{bhattacharya2018deterministic, bhattacharya2017deterministic,
  bhattacharya2019deterministically, abboud2019dynamic,
  bhattacharya2019new, bhattacharya2020improved,
  Gupta:2017:ODA:3055399.3055493}.

%The \emph{carpooling} problem was first studied in~\cite{Fagin83} and later in \cite{Ajtai98}, where the authors show poly-logarithmic discrepancy bounds for edges arriving uniformly at random from the complete graph $K_n$. Recent works~\cite{GuptaKKS20,ALS-STOC21} extend this framework to handle more general arrivals from an arbitrary graph $G$. \snote{The above discussion seems repetitive since we already have it earlier.}
%Fairness in algorithms has become a central area of research with diverse works such as XYZ.\agnote{fill up or delete this last line}
A different version of
edge-orientation, commonly known as graph balancing, involves
minimizing just the maximum in-degree (see,
e.g.,~\cite{MR1734116,MR2313458,MR3238400}): the techniques used for
that version seem quite different from those needed here.

\paragraph{Paper Outline.} We present the results for
\fdvb and specifically \Cref{thm:main1} in \S\ref{sec:distrib-BG}. The
results for graph balancing appear in \S\ref{sec:expand-decomp}. Other results for local-search algorithms appear in
\S\ref{section-lower-bounds-local-search} and
\S\ref{sec:local-search-upper}. We close with an insertion-only algorithm
for sparse vectors, and  conclusions and open problems  in \S\ref{sec:insert-only}.

\iffalse
Tentative story
\begin{enumerate}
    \item Vector Balancing for $\{\pm 1\}^d$ vectors has been greatly studied in both Offline and Online models. 
    
    \item For many applications, the input is dynamic and vectors can both arrive and depart.  Cite   works from Online Algos with Recourse to motivate the Recourse model. Existing algorithmic techniques for discrepancy cannot handle adaptive insertions and or (random) deletions.

    \item The linear algeraic approach of Barany-Grinberg can be made to work using a Hierarchical Tree to get near-optimal discrepancy with $O(n)$ amortised recourse.
    
    \item Another approach to try is Local Search. Unfortunately, it loses exponential in $d$, which is easy to get since there are only $2^d$ vectors (Remark: we can also get $d^d$ using local search).
    
    \item Since $O(n)$ recourse due to Baranyi-Grinberg is a lot, especially for $s$ sparse vectors, a natural question is that can we do much better for $2$ sparse vectors? This is precisely the discrepancy problem on graphs. Linear Algebraic approach doesn't give anything better since the problem is still $n$ dimensional.
    Local search can also be shown to be quite bad.
    
    \item However, we observe the Local Search works well on Expanders. This brings us to our Expander Decomposition approach, which gets us near-optimal bounds for both  discrepancy and recourse.
\end{enumerate}
\fi

\section{Fully-Dynamic Vector Balancing}
\label{sec:distrib-BG}

In this section, we prove~\Cref{thm:main1}. Given a set of vectors
$a_1, a_2, \ldots, a_T \in [-1,1]^n$, the \bargrin algorithm signs
them such that the discrepancy of the signed sum is at most
$2n$. However, this signing is highly sensitive to insert or delete
operations.  We address this issue by recursively dividing the input
sequence such that we lose only $O(n)$ discrepancy at each level of
this recursion tree---we call this the distributed \bargrin
algorithm. We then show how it can easily handle insert and delete
operations with low recourse.

The main idea underlying the %Recursive-
\bargrin algorithm~\cite{BaranyGrinberg81} is the following linear algebraic lemma.
\begin{lemma}[Rounding Lemma~\cite{BaranyGrinberg81}]
  \label{lem:BG}
  Let $a_1, a_2, \ldots, a_T \in [-1,1]^n$ be the columns of matrix
  $A \in [-1,1]^{n \times T}$. For any initial fractional signing
  $x \in [-1,1]^T$, there exists a (near-integral) signing $y$ with
  all but $n$ variables being $\pm 1$ such that $A y = A
  x$. %\alert{Why capitals $X,Y$?}
\end{lemma}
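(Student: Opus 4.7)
The plan is to prove the Rounding Lemma by a standard basic-feasible-solution argument on the polytope of fractional signings consistent with the target linear image. Concretely, I would define
\[
  P \;:=\; \{\, y \in [-1,1]^T \;:\; A y = A x \,\},
\]
which is a bounded polytope in $\R^T$, non-empty because it contains the initial fractional signing $x$. The vector $y$ we seek will be any vertex (basic feasible solution) of $P$.

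Next I would argue that any vertex $y^\star$ of $P$ has at most $n$ coordinates strictly inside $(-1,1)$. Let $F \subseteq [T]$ be the set of indices $i$ with $y^\star_i \in (-1,1)$. On the face of $P$ where the box constraints $y_i = y^\star_i \in \{-1,+1\}$ for $i \notin F$ are all tight, the remaining degrees of freedom are governed by the linear system $A_{F}\, y_F = A x - \sum_{i \notin F} y^\star_i\, a_i$, where $A_F$ is the submatrix of columns indexed by $F$. Since $y^\star$ is a vertex of $P$, the only vector in the affine subspace $\{ y_F : A_F y_F = Ax - \sum_{i \notin F} y^\star_i a_i\}$ lying in the open box $(-1,1)^{|F|}$ must be $y^\star_F$ itself; equivalently, $A_F$ must have trivial kernel (otherwise one could perturb $y^\star_F$ in both directions along a kernel vector while staying in $P$, contradicting vertexness). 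Hence $|F| \le \operatorname{rank}(A_F) \le n$, and the vertex $y^\star$ satisfies the conclusion.

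For existence of such a vertex, I would either appeal to the fact that a bounded non-empty polytope always has vertices, or describe the simple constructive ``walk'' that also doubles as the algorithm used later: while the current point $y$ has more than $n$ fractional coordinates, the columns of $A_F$ are linearly dependent, so pick a nonzero $v \in \ker(A_F)$, extend by zeros to a vector $\tilde v \in \R^T$ supported on $F$, and move $y \to y + t \tilde v$ for the largest $t$ (of either sign) keeping $y$ in $[-1,1]^T$. This preserves $Ay = Ax$ and tightens at least one new box constraint, so after at most $T-n$ iterations the number of fractional coordinates drops to at most $n$. The main (and only) subtlety is the kernel/rank accounting in the paragraph above; everything else is routine polyhedral/linear-algebraic bookkeeping.
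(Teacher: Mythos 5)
Your proposal is correct and matches the paper's approach: the paper proves \Cref{lem:BG} in one line by "moving to a basic feasible solution of $\{Ay = Ax,\; y \in [-1,1]^T\}$," which is exactly the vertex/kernel argument you spell out. Your constructive walk is the standard way to reach such a BFS and is consistent with how the paper later uses the lemma algorithmically.
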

The signing $y$ is obtained by moving to a basic feasible solution
(BFS) of the following set of linear constraints
$ \{Ay = Ax, \; y \in [-1,1]^T\}$, where $x$ is treated as being
fixed.  Based on \Cref{lem:BG}, B\'ar\'any and
Grinberg~\cite{BaranyGrinberg81} gave the following offline algorithm:
starting with the all-zeros vector as the fractional signing (i.e.,
$x = \bf{0}$), let $y$ be the almost-integral vector satisfying
$A y = 0$. Now % arbitrarily setting the fractional entries to $\pm 1$
% gives discrepancy at most $2n$. Or, 
randomly rounding the fractional
variables (with bias given by the $y_i$ values) and using
concentration bounds shows a discrepancy of $O(\sqrt{n \log n})$, or
using sophisticated rounding schemes can give the tight $O(\sqrt{n})$
discrepancy~\cite{Spencer85,Bansal-FOCS10,Lovett-Meka-SICOMP15}. 

\subsection{An Equivalent, Recursive Viewpoint}
\label{sec:dist-BG}

A natural question is: \emph{can we extend the above \bargrin algorithm to
  the dynamic case}? Naively using the rounding lemma does not work,
since the rounded solutions $y$ and $y'$ for matrices $A$ and $A'$
differing in one column could be very different. Our idea is to
simulate the \bargrin algorithm in a distributed and recursive manner. We divide the sequence $\{1, \ldots, T\}$ into
sub-sequences of length $2n$ each, which gives us a set of $m:= T/2n$
sub-sequences (assume w.l.o.g., e.g., by padding, that $T/2n$ is a
power of 2). Let $P_1, \ldots, P_m$ denote these sub-sequences ordered
from left to right. We build a binary tree $\calT$ of height
$\log_2 m$ on $m$ leaves, where leaf $j$ corresponds to the
sub-sequence $P_j$. Similarly, for an internal node $v$, define $P_v$ 
as the sub-sequence formed by taking the union of $P_j$ over all leaves
$j$ below $v$.

The signing algorithm $\DBG(v)$, where $v$ is a node of $\calT$ is
shown in~\Cref{algorithm:recursivebg}. It assigns values
$y_i^v \in [-1,1]$ to the vectors $a_i$ for $i \in P_v$ such that the
following two conditions are satisfied: 
\begin{enumerate}  [partopsep=0pt,topsep=0pt,parsep=0pt]
	\item[(I1)] $\sum_{i \in P_v} y_i^v a_i = 0$, and 
	\item[(I2)] all but at most $n$
variables $y_i^v, i \in P_v$ are either $+1$ or $-1$.  
\end{enumerate}
Applying this
property to the root node yields~\Cref{lem:BG}. While the end result
is identical to the one-shot \bargrin algorithm, this yields some
crucial advantages in the dynamic setting.  Indeed, when a vector is
inserted/deleted, only a single leaf's sub-sequence changes.  We will
show that this leads to making changes in the signing assigned by the
ancestors of just this leaf, giving a total recourse of
$\widetilde{O}(n)$ per update!

For a subset $I$ of indices, let $A_I$ denote the submatrix of $A$ given by the columns corresponding to $I$. Similarly, for a vector $z$ indexed by $P_v$ and a subset $F$ of $P_v$, define $z|_F$ as the restriction of $z$ to $F$. The algorithm $\DBG(v)$ begins by recursively assigning values to the two sub-sequences corresponding to its two children. Since these assignments, satisfy the two invariant conditions above, combining the two solutions into a new solution $x$ (in line~\ref{l:BGcombine}) leads to at most $2n$ fractional variables. Using~\Cref{lem:BG}, we reduce the number of fractional variables to $n$. 
Finally, we can maintain a (integral) signing $\sigma$ by randomly assigning signs to  the fractional variables $F_r$ at the root $r$ and retaining the values $y^r$ for rest of the vectors. We now show by induction that the two invariant properties are satisfied, the proof is deferred to~\Cref{sec:appsec2}. 
\begin{restatable}{lemma}{dbginv}
  \label{lem:dbginvariant}
  The variables $y_i^v, i \in P_v$ satisfy the invariant properties~(I1) and (I2) at the end of $\DBG(v)$.  
\end{restatable}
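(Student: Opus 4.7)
The proof plan is a straightforward structural induction on the height of $v$ in the binary tree $\calT$, with the Rounding Lemma (\Cref{lem:BG}) doing all the real work at each level. The main content is just bookkeeping: showing that (I1) is preserved exactly by the recursive combination step, and that (I2) is reinstated by invoking \Cref{lem:BG} on the right submatrix.

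For the base case, let $v$ be a leaf, so $|P_v| = 2n$. The algorithm invokes \Cref{lem:BG} on the matrix $A_{P_v}$ starting from the fractional signing $x = \bf{0}$, producing $y^v$ with $A_{P_v} y^v = A_{P_v} \bf{0} = 0$ (giving (I1)) and with all but $n$ entries in $\{\pm 1\}$ (giving (I2)).

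For the inductive step, let $v$ have children $v_1, v_2$, and assume the vectors $y^{v_1}, y^{v_2}$ produced by the recursive calls satisfy (I1)--(I2). In line~\ref{l:BGcombine} the algorithm concatenates these into a vector $x \in [-1,1]^{P_v}$ with $x_i = y_i^{v_j}$ whenever $i \in P_{v_j}$. Applying (I1) to both children gives
\[
\sum_{i \in P_v} x_i \, a_i \;=\; \sum_{i \in P_{v_1}} y_i^{v_1} a_i + \sum_{i \in P_{v_2}} y_i^{v_2} a_i \;=\; 0.
\]
By (I2) applied to each child, the set $F \subseteq P_v$ of indices on which $x$ is non-$\pm 1$ has $|F| \le 2n$. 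The algorithm now calls \Cref{lem:BG} on the submatrix $A_F$ with starting fractional signing $x|_F$, obtaining $y|_F$ with $A_F\, y|_F = A_F\, x|_F$ and at most $n$ non-$\pm 1$ entries; on $P_v \setminus F$ we set $y_i^v = x_i \in \{\pm 1\}$. This immediately yields (I2) for $v$, since the only non-$\pm 1$ entries of $y^v$ live inside $F$ and are bounded by $n$.

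For (I1), we decompose
\[
\sum_{i \in P_v} y_i^v a_i \;=\; \sum_{i \in F} y_i^v a_i + \sum_{i \notin F} x_i a_i \;=\; A_F \, y|_F + \sum_{i \notin F} x_i a_i \;=\; A_F \, x|_F + \sum_{i \notin F} x_i a_i \;=\; \sum_{i \in P_v} x_i a_i \;=\; 0,
\]
where the third equality uses the conclusion of \Cref{lem:BG} and the last equality is the displayed identity above. This completes the induction. The only ``obstacle'' worth flagging is the need to run the Rounding Lemma on the restricted matrix $A_F$ rather than on all of $A_{P_v}$: doing so preserves the already-integral coordinates of $x$ for free, and ensures that the final number of fractional coordinates is bounded by $n$ (the dimension) rather than by $|F|$.
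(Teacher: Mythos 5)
Your proof is correct and follows essentially the same route as the paper's: induction on the height of $v$, with the concatenation step preserving (I1) exactly and \Cref{lem:BG} applied to $A_F$ restoring (I2). The only (minor) difference is that the paper explicitly strengthens the induction hypothesis to assert that $F_v$ contains \emph{all} fractional indices of $y^v$, whereas you carry this fact implicitly by identifying $F$ with the set of non-$\pm 1$ coordinates of $x$ --- which is exactly the same observation needed to conclude $y_i^v = x_i \in \{\pm 1\}$ off $F$.
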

% \begin{proof}
% The proof is by induction on the height of $v$: we also add to the induction hypothesis the statement that all the indices $i \in P_v$ such that $-1 < y^v_i < 1,$ belong to $F_v$. For a leaf node, the set $F = P_j$, and so using line~\ref{l:bg}, we see that $\sum_{i \in P_j} y'_i a_i = 0$. Since $y^v_i = y'_i$ for all $i \in P_j$, the invariant~(I1) follows. Invariant~(I2) holds because of~\Cref{lem:BG}. 

% Now suppose $v$ is an internal node and assume that the induction hypothesis holds for its children $v_L$ and $v_R$. Since the assignment $x$ just combines $y^{v_L}$ and $y^{v_R}$ (line~\ref{l:BGcombine}), it follows from induction hypothesis that 
% $\sum_{i \in P_v} x_i a_i = 0. $
% We ensure in line~\ref{l:BGcombine} that $\sum_{i \in F} y'_i a_i = \sum_{i \in F} a_i x_i. $ Therefore, 
% $$ \sum_{i \in P_v} y^v_i a_i = \sum_{i \in F} y'_i a_i + \sum_{ i \in P_v \setminus F} x_i a_i = \sum_{i \in P_v} x_i a_i = 0. $$
% This proves that~(I1) is satisfied for $y^v$. For property~(I2), first observe that if $i \notin F$, then $x_i \in \{-1, +1\}$ by induction hypothesis, and so $y^v_i = x_i \in \{-1,+1\}$ as well. For the indices $i \in F$, at most $2n$ of these satisfy $y'_i \in (-1,+1)$ (by~\Cref{lem:BG}) and so all the variables $y^v_i, i \notin F_v$ are either $+1$ or $-1$. 
% \end{proof}

\begin{algorithm}
  \caption{Distributed-\bargrin: $\DBG(v)$}
  \label{algorithm:recursivebg}
  \textbf{Input:} A node $v$ of $\calT$. \\
  \textbf{Output:} $(y^v, F_v)$: an assignment $y^v_i \in [-1,1]$ for each $i \in P_v$, and $F_v \subseteq P_v$ is the index set of ``fractionally'' signed vectors, i.e., indices $i$ such that $-1 < y^v_i < 1$. 
  \begin{algorithmic}[1]
    \If{$v$ is not a leaf}
    \State Let $v_L$ and $v_R$ be the left and the right children of $v$ respectively. 
    \State $(y^{v_L}, F_{v_L}) \leftarrow \DBG(v_L), (y^{v_R}, F_{v_R}) \leftarrow \DBG(v_R).$
    \State Define $F := F_{v_L} \cup F_{v_R}$, $x_i := y^{v_L}_i$ for all $i \in P_{v_L}, x_i := y^{v_R}_i$ for all $i \in P_{v_R}$.  \label{l:BGcombine}
    \Else 
       \State Define $F:= P_v, x_i = 0$ for all $i \in P_v$. 
    \EndIf
    \State Using~\Cref{lem:BG} find a vector $y' \in [-1,1]^{|F|}$
    such that (i) $A_{F} \cdot y'= A_F \cdot  x|_F$, and (ii) there
    are at most $n$ indices (denoted by the set $F_v \subseteq F$) having $-1 < y_i' < 1.$
    \label{l:bg}
    \State Define $y^v_i = x_i$ for $i \in P_v \setminus F$ and $y^v_i = y'_i$ for $i \in F$. 
    \State Return $(y^v, F_v)$. 
  \end{algorithmic}
  \end{algorithm}

\subsection{Dealing with Update Operations} 
Before describing the insert/delete operations, we describe a useful subroutine \updatevec, which given an assignment  $\yold$ to $a_1, \ldots, a_T$, updates it to a new assignment $\ynew$ when one of the vectors $a_j$ in the sequence changes.  The algorithm is very similar to~\Cref{algorithm:recursivebg}, but it needs to recurse on only one child of $v$. the one containing the index $j$. %The final solution $\ynew$ is the solution returned by the procedure~\DBGUpdate called at the root. 
As a result, the vectors $\yold$ and $\ynew$ differ in at most $O(n\log T)$ coordinates. Details are deferred to~\Cref{sec:appsec2}.

\paragraph{Dynamic Insert and Delete.} We now discuss the algorithm when an insert or delete operation happens. The algorithm works in phases:
a new phase starts when the number of vectors becomes $2^\ell$ for some $\ell$, and ends when this quantity reaches $2^{\ell-1}$ or $2^{\ell+1}$. Whenever a new phase starts, we run $\DBG$ algorithm to find an assignment $y$. During a phase, we always maintain exactly $2^{\ell+1}$ vectors -- this can be ensured at the beginning of this phase by padding with $2^\ell$ zero vectors. This ensures that the tree $\calT$ does not change during a phase.

When a delete operation happens, we call \DBGUpdate, where the deleted vector gets updated to the zero vector. Similarly, when an insert operation happens, we update one of the zero vectors to the inserted vector. Thus, we get the following result:

\begin{lemma}
  The amortized recourse of this fully-dynamic algorithm is
  $O(n \log N)$ per update operation, where $N$ is the maximum number
  of active vectors at any point in time.
\end{lemma}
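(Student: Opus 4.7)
The plan is to split the amortized recourse into two contributions: (i) the cost of each call to \DBGUpdate during a phase, and (ii) the cost of recomputing \DBG from scratch at the start of every new phase. Throughout, let $N'$ denote the padded size $2^{\ell+1}$ of the current phase, so $N' \le 2N$, and the corresponding tree $\calT$ has $N'/(2n)$ leaves and height $h = O(\log N)$.

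For (i), I would observe that when a single leaf's sub-sequence changes, \DBGUpdate recurses only along the root-to-leaf path $\pi$ containing that leaf: at every node $v \notin \pi$ both child recursions return exactly the same $(y, F)$ pairs as in the previous computation, so the solution at $v$ is unchanged. At a node $v \in \pi$, only the coordinates in $F_{v_L} \cup F_{v_R}$ (plus the at-most-$n$ fractional indices in $F_v$ newly produced by \Cref{lem:BG}) can receive new values; by invariant (I2) this set has size $O(n)$. Summing across the $h$ levels of $\pi$ and including the $O(n)$ sign flips caused by re-rounding the fractional variables at the root, the recourse per insert/delete during a phase is at most $O(nh) = O(n \log N)$.

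For (ii), a fresh \DBG call at the start of a new phase can in the worst case flip every sign, costing $O(N')$ recourse. However, a new phase begins exactly when the active count crosses from $2^{\ell \pm 1}$ to $2^\ell$, which requires at least $2^{\ell-1}$ updates during the just-completed phase (going either from $2^{\ell-1}$ up to $2^\ell$, or from $2^{\ell+1}$ down to $2^\ell$). Charging the $O(N') = O(2^{\ell+1})$ restart cost to these $\Omega(2^{\ell-1})$ operations yields $O(1)$ amortized cost per operation from phase restarts.

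Summing (i) and (ii) gives total amortized recourse $O(n \log N) + O(1) = O(n \log N)$ per update, proving the lemma. The only step that is not immediate is the per-level $O(n)$ bound in (i); this relies on invariant (I2) from \Cref{lem:dbginvariant} together with the stated property of \updatevec that the subtrees hanging off $\pi$ produce identical outputs in the old and new computations, so their contributions cancel out of the symmetric difference. I expect this ``off-path cancellation'' to be the only delicate point, and it is essentially forced by the fact that \updatevec only invokes recursion on the child containing the modified index.
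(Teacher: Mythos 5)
Your proposal is correct and follows essentially the same route as the paper: the phase-restart cost is charged to the $\Omega(2^{\ell-1})$ updates of the completed phase for $O(1)$ amortized recourse, and the per-update bound is exactly the paper's induction along the root-to-leaf path (Claim~\ref{cl:updateBG} and Corollary~\ref{cor:BG}), where invariant (I2) gives $|F_{v_L}\cup F_{v_R}|\le 2n$ and hence $O(n)$ newly changed coordinates per level. Your ``off-path cancellation'' observation is precisely the reason \DBGUpdate\ only recurses on one child, so the two arguments coincide.
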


\begin{proof}
The work done at the beginning of a phase can be charged to the length of the input sequence at this time. This results in $O(1)$ amortized recourse. 
We show in~\Cref{cor:BG} that the amortized recourse after each update during a phase is $O(n \log N)$. This proves the overall amortized recourse bound. 
%   Note that the depth of the tree is at most $O(\log N)$, since the
%   number of fractional variables halves as we go up the depth of the
%   tree. Moreover, the total recourse at each node on the root-leaf
%   path is at most $2n$, since these are the only variables whose
%   signing might change when updating the BFS solution.
\end{proof}

Finally, since there are only $n$ fractional variables at the root,
we can use any state-of-the-art offline discrepancy minimization algorithm to
sign these vectors, e.g., to get $O(\sqrt{n})$ discrepancy for vectors
with unit $\ell_\infty$-norm \cite{Spencer85,Bansal-FOCS10,Lovett-Meka-SICOMP15}, or to get
$O(\sqrt{\log n})$ discrepancy for vectors with unit $\ell_2$-norm~\cite{BansalDG-FOCS16,BansalDGL18}.
This proves \Cref{thm:main1}.

% \agnote{Add in lemma for insert only: $O(1)$ recourse with
%   $\tilde{O}(\sqrt{n})$ discrep, and also for sparse case (but
%   recourse becomes $\tilde{O}(1)$), or maybe add a little section
%   talking about $s$-sparse vectors in taht setting.}

%%% Local Variables:
%%% mode: latex
%%% TeX-master: "main"
%%% End:

% Expanders and expander decomposition
\section{Fully-Dynamic Edge Orientation}
\label{sec:expand-decomp}

We next consider the case of dynamically orienting edges in a graph to
maintain bounded discrepancy.  In this problem, at each time/update an
adaptive adversary either inserts a new edge $e_t = (u,v)$ or removes
an existing edge $e$ from a graph $G(t)$. Assigning an \emph{orientation} to each edge
$(u,v)$ as $u \rightarrow v$ or $v \rightarrow u$, the discrepancy of
a vertex $v$ is
$\disc(v) = |~ |\delta_{\rm in}(v)| - |\delta_{\rm out}(v)| ~ |$,
where $\delta_{\rm in}(v)$ and $\delta_{\rm out}(v)$ are the sets of
in- and out-edges incident at $v$. Our goal is to minimize $\max_{v \in V} \disc(v)$.  The algorithm is allowed to
re-orient any edge $e$, and the \emph{amortized recourse} is the
average number of re-orientations per edge insertion/deletion. We now
present the first fully-dynamic algorithms with ${\rm polylog}(n)$
discrepancy and recourse.

\paragraph{Useful Notation}
For an undirected graph $G$ and any set $S \sse V$, define $E(S)$ as
the set of edges whose endpoints are both in $S$; for sets $S, T$,
define $E(S,T) = \{ e \in E \mid |e \cap S| = |e \cap T| = 1\}$.
Define the \emph{volume} of a set $S$ to be
$\vol(S) = \sum_{v \in S} \deg(v)$.

\begin{definition}[$\phi$-Expander]
	\label{def:expander}
	A graph $G$ is a \emph{$\phi$-expander} if for all subsets $S\subseteq V$, 
	\[
	|E(S,V-S)|\geq \phi \cdot \min(\vol(S),\vol(V-S)) \enspace . 
	\]
	In this case, we also say the graph $G$ has \emph{conductance} at least $\phi$.
\end{definition}

\begin{definition}[$\gamma$-Weak-Regularity]
	\label{def:weak-reg}
	For $\gamma \in [0,1]$, an undirected graph $G$ is
	\emph{$\gamma$-weakly-regular} if the minimum degree of any vertex is at
	least $\gamma$ times the average degree $2m/n$.
\end{definition} 

%We explain this in~\Cref{sec:dynamic-expanders}.

\subsection{High Level Overview} \label{sec:graph-overview}
We now provide a detailed overview of our algorithm, and then delve into the individual components.
A natural algorithm for the edge orientation problem is a {\em local search}
procedure: while there exists an edge $(u,v)$ currently oriented $u \rightarrow v$ such that
 ${\rm disc}(v) > {\rm disc}(u) + 2$, flip its orientation to $v \rightarrow u$. 
Although  locally optimal orientations 
could have discrepancy $\Omega(n^{1/3})$ for general graphs (see an example in~\Cref{sec:local-search-graph-lb}), our first crucial result is that 
they always have low discrepancy on \emph{expanders}.

\begin{theorem}
	\label{theorem:local-search-expander}
	Let $G(V,E)$ be a $\gamma$-weakly-regular $\phi$-expander. Then the
	discrepancy of any solution produced by \local is
	$O\big(\frac{\log m}{\phi \gamma}\big)$.
\end{theorem}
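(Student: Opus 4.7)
The plan is to work with the signed discrepancy $d(v) := |\delta_{\rm in}(v)| - |\delta_{\rm out}(v)|$, and to analyze the super-level sets $V_k := \{v : d(v) \geq k\}$ at any local optimum of \local. Local optimality means every edge oriented $u \to v$ satisfies $d(v) - d(u) \leq 2$; equivalently, for every $k$, every edge between $V_k$ and $V \setminus V_{k-2}$ must be oriented out of $V_k$, since the reverse orientation would have $d$-endpoint-difference at least $3$ and hence violate the local condition.

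The core of the argument is a geometric growth lemma: whenever $\vol(V_k) \leq m$, we have $\vol(V_{k-2}) \geq (1 + \phi/2)\,\vol(V_k)$. To prove it, set $T := V_k$ and $B := V_{k-2} \setminus V_k$, and let $x$ (resp.\ $y$) be the number of $E(T,B)$-edges oriented into (resp.\ out of) $T$. Internal edges of $T$ contribute $0$ to $\sum_{v \in T} d(v)$, and by the structural observation above every edge in $E(T, V \setminus V_{k-2})$ points out of $T$; combining this with $d(v) \geq k$ on $T$ gives
\[
  k\,|T| \;\le\; \sum_{v \in T} d(v) \;=\; x - y - |E(T, V \setminus V_{k-2})|.
\]
Adding $x + y = |E(T,B)|$ and using the expansion $|E(T, V \setminus T)| \geq \phi\,\vol(T)$, this rearranges to $2x \geq \phi\,\vol(T)$, so $\vol(B) \geq |E(T,B)| \geq x \geq (\phi/2)\,\vol(T)$, yielding the claimed growth.

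To conclude, iterate the lemma from the top of the chain. Let $k^* := \max_v d(v)$; by $\gamma$-weak-regularity any $v \in V_{k^*}$ has $\deg(v) \geq 2\gamma m/n$, so $\vol(V_{k^*}) \geq 2\gamma m/n$. Repeated application gives $\vol(V_{k^* - 2i}) \geq (1+\phi/2)^i \cdot (2\gamma m/n)$, and since no volume can exceed $2m$, this forces $i = O(\log(n/\gamma)/\phi)$, hence $k^* \leq 2i = O(\log m/(\phi\gamma))$. The analogous bound on $\min_v d(v)$ follows by reversing all edge orientations and applying the same argument.

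The main obstacle I would address carefully is that the growth lemma requires $V_k$ to be the smaller side of the cut; if the chain reaches $\vol(V_k) > m$ while $k$ is still positive, I would use a complementary ``shrinking'' variant, obtained by repeating the same edge-counting with expansion applied on the smaller side, namely $\vol(V \setminus V_{k-2}) \leq (1 - \phi/2)\,\vol(V \setminus V_k)$. Since the complement must be nonempty whenever $k \geq 1$ (otherwise $\sum_v d(v) \geq |V|$, contradicting $\sum_v d(v) = 0$), this second phase can contribute at most $O(\log m/\phi)$ additional steps before forcing $k \leq 0$, so the overall discrepancy bound remains $O(\log m/(\phi\gamma))$ as claimed.
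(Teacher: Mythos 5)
Your proof is correct, and it takes a noticeably different route from the paper's. The paper performs a BFS from the maximum-discrepancy vertex along incoming edges, notes that local optimality forces every vertex at BFS-depth $i$ to have discrepancy at least $k-2i$, and then shows the \emph{edge count} $|E(S_i)|$ of the BFS balls grows by a factor $(1+\phi\gamma/3)$ per level; the larger-side issue for expansion is handled by a separate claim (using $\sum_v \disc$-signed$(v)=0$ plus weak regularity) that the complement of $S_{k/4}$ retains volume $\Omega(\gamma m)$. You instead work directly with the super-level sets $V_k$ of the signed discrepancy, derive the clean identity $\sum_{v\in T} d(v) = x - y - |E(T,V\setminus V_{k-2})|$ from the fact that all edges crossing two levels point "downhill," and track \emph{volume} growth with ratio $(1+\phi/2)$ independent of $\gamma$; weak regularity enters only through the seed bound $\vol(V_{k^*})\ge 2\gamma m/n$, and the larger-side issue is handled by your complementary shrinking phase. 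One consequence worth noting: your argument actually yields the slightly sharper bound $O\big((\log m + \log(1/\gamma))/\phi\big)$, whereas the paper's growth rate degrades multiplicatively with $\gamma$; conversely, the paper's BFS formulation is the one that generalizes directly to \pathlocal (flips along length-$L$ paths, \Cref{subsection:graphs-L-local-search}) and to the relaxed regularity condition of \Cref{cor:weakerlocalsearch} needed for the intermediate graphs $H^j$, so if you intend to reuse your level-set argument there you would need to re-examine the seed-volume step when the maximum-discrepancy vertex may itself be one of the low-degree (fake) vertices.
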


The proof of this theorem appears in~\Cref{sec:localExpander}. In
order to apply our local search algorithm to arbitrary graphs, our
plan is to use the powerful idea of expander decompositions (see,
e.g.,~\cite{ST-STOC04,saranurak2019expander,bernstein2020fullydynamic}).
%\snote{any important citations that I'm missing?} 
At a high level, such schemes decompose any graph $G$ into a disjoint union of expanders with each vertex appearing in a small number of them. For concreteness, we use the following result
from~\cite[Theorem~19]{GuptaKKS20}\footnote{For ease of
	exposition, we use a result that runs in exponential time;
	using approximate low-conductance cuts gives a polynomial runtime
	with additional
	logarithmic factors.}.

\begin{theorem}[Decomposition into Weakly-Regular Expanders]
	\label{theorem:gkks20}
	Any graph $G = (V,E)$ can be decomposed into an edge-disjoint union
	of smaller graphs $G_1, G_2, \ldots,G_k$ such that: (a) each vertex
	appears in at most $O(\log^2 n)$ many smaller graphs, and (b) each
	of the smaller subgraphs $G_i$ is a $\phi/4$-weakly-regular
	$\phi$-expander, where $\phi=\Theta(1/\log n)$.
\end{theorem}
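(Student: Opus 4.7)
The plan is a recursive expander-decomposition scheme. At each level of the recursion I would (a) partition the current graph into induced $\phi$-expanders using a standard low-conductance-cut approach, (b) inside each such expander, peel low-degree vertices to ensure $\phi/4$-weak-regularity, and (c) collect the inter-cluster edges and the peeled edges into a leftover subgraph on which to recurse.

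For step (a), I would invoke the classical fact: any graph with $m_H$ edges admits a vertex partition whose parts are each $\phi$-expanders, with only $O(\phi \cdot m_H \cdot \log n)$ edges between parts. This is the textbook ``keep cutting along low-conductance cuts'' argument: as long as the current piece has a cut of conductance less than $\phi$, split along it and recurse on both sides, charging the cut edges. For step (b), within an induced expander $H[U]$, I would iteratively delete vertices whose degree is less than $(\phi/4) \cdot (2|E(H[U])|/|U|)$. A single round of peeling removes at most a $\phi/4$-fraction of the current edges by the definition of the threshold, and an inductive argument shows that after $O(1)$ rounds the remaining subgraph is still a $\phi/2$-expander and is $\phi/4$-weakly-regular by construction; this subgraph is output. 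The peeled edges, together with the inter-cluster edges from step (a), are what we hand back to step (c).

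The recursion terminates because the leftover at each level has at most an $O(\phi \log n)$-fraction of the input edges; choosing $\phi = c/\log n$ for a small enough constant $c$ makes this a fraction strictly less than $1$, so the edge count halves at every level and the depth is $O(\log m) = O(\log n)$. To count vertex appearances: at any single level, step (a) is a vertex partition, so each vertex lies in exactly one cluster, and the weak-regularity pruning may cause it to be re-associated with at most one additional ``peel subgraph'' at that level. Summing over $O(\log n)$ recursion levels then gives the claimed $O(\log^2 n)$ bound on the number of output subgraphs containing any single vertex.

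The main obstacle I anticipate is controlling the weak-regularity peeling so that it neither destroys the expansion just established nor inflates the vertex-appearance count. The delicate point is the trade-off in the degree threshold and the expansion constant: peeling too aggressively could drop the residual below $\phi$-expansion, while peeling too timidly would leave heavy leftover edges and break the geometric shrinkage needed for the $O(\log n)$ depth. The balance at threshold $\phi/4$ with $\phi = \Theta(1/\log n)$ is exactly what the theorem advertises, and verifying that this joint choice simultaneously gives (i) a constant-factor drop in edge count per recursion level and (ii) a constant-factor preservation of expansion after peeling is the technical heart of the argument.
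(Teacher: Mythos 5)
First, note that the paper does not prove this statement at all: it is imported verbatim from \cite{GuptaKKS20} (their Theorem~19), so there is no in-paper proof to compare against. Your overall architecture (recursive low-conductance cutting to get $\phi$-expanders, degree peeling for weak regularity, recursing on the crossing plus peeled edges, geometric edge decay with $\phi = \Theta(1/\log n)$) is the right general shape. But there is a genuine gap at the step you yourself flag as the ``technical heart'': the claim that after $O(1)$ rounds of peeling vertices of degree below $(\phi/4)\cdot 2|E|/|U|$, the surviving subgraph is still a $\phi/2$-expander. This is false in general. Removing even a set $P$ of total volume $O(\phi m)$ from a $\phi$-expander can create sparse cuts, or even disconnect the remainder: take a set $S$ of surviving vertices, each of degree $\bar d$, that route a $0.9$-fraction of their edges into low-degree vertices of $P$ and keep the remaining $0.1\bar d \ge (\phi/4)\bar d$ edges inside $S$. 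Every cut of the original graph involving $S$ is $\phi$-expanding (via the edges into $P$), no vertex of $S$ is ever peeled, yet after $P$ is removed $S$ has no edges leaving it. Degree lower bounds alone do not certify conductance, so your step (b) cannot be repaired by tuning the threshold; this is exactly the difficulty that the paper's own \Prune procedure (\Cref{theorem:pruning-ravi}) exists to handle in the dynamic setting, where restoring expansion after deletions requires iteratively removing \emph{sparse cuts}, not low-degree vertices, and separately bounding the volume so removed.

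The standard fix is to reorder the loop so that no piece is ever output on the strength of an expansion certificate that predates the peeling: a piece is emitted only when it is \emph{simultaneously} verified to be a $\phi$-expander and $\phi/4$-weakly-regular; if peeling destroys expansion, you split along the newly created sparse cut and continue, charging those cut edges together with the peeled edges to the leftover. (Equivalently, one can follow each peeling phase by a pruning phase \`a la \Cref{algo:prune} and bound the extra volume removed.) Two smaller issues: your vertex-appearance accounting of ``one cluster plus one peel subgraph per level, times $O(\log n)$ levels'' yields $O(\log n)$, not $O(\log^2 n)$ --- which would be fine as a stronger bound, except that once re-splitting after peeling is incorporated the per-level count is no longer obviously $O(1)$, and this is presumably where the second logarithm in the theorem comes from; and the peeled edges do not form an output subgraph at that level, they are simply returned to the recursion, so the notion of a ``peel subgraph'' containing the vertex should be dropped.
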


In order to make this into a dynamic decomposition, our algorithm follows a  natural idea of maintaining $\log m$ \emph{levels/scales}, and placing each edge of the current graph $G$ at one of these levels. We use $G_i$ to denote the
subgraph formed by the level-$i$ edges; crucially, we ensure that $G_i$ has at
most $2^i$ edges. For each level $i$, we maintain the expander decomposition of $G_i$ into $\bigcup_{j} \, G_{i,j}$ where $G_{i,j}$ represents the $j^{th}$
expander in this decomposition. Since each vertex appears in at most $\log^2 n$ expanders at every level, overall any vertex will appear in $O(\log^3 n)$
expanders. Hence, our goal is to maintain a low-discrepancy signing for each expander, with bounded number of re-orientations as the expander changes due to updates. Next we discuss how insertions are easier to handle, but deletions require several new ideas.

{\bf Insertions.} When edges are inserted into $G$, we insert it into $G_1$ (the lowest scale) and orient it arbitrarily. Whenever a level $i$ becomes full, i.e., $|G_i| > 2^i$, we remove all edges and add them to the higher level $i+1$, and \emph{recompute the expander decomposition using~\Cref{theorem:gkks20} from scratch for the graph consisting of all edges in this level}. We also recompute an optimal offline low-discrepancy discrepancy orientation for each expander\footnote{It is easy to optimally orient any graph in the offline setting: we consistently orient the edges of all cycles, to be left with a forest. We can then again orient all the maximal paths between pairs of leaves in a consistent manner, to end up with an orientation where every vertex has discrepancy in $\{-1,0,1\}$.}. Of course, we may need to cascade to higher levels if the next level also overflows. However, the total cost of all these edge reorientations can be easily charged to the recent arrivals that caused the overflow.

{\bf Deletions.} Our insertion procedure guarantees that an expander $G_{i,j}$ only observes deletions in its lifetime (before the expander decomposition at its level is recomputed). So when the adversary deletes an edge from $G$ (called a {\em primary deletion}), we can remove it from the corresponding expander $G_{i,j}$ it belongs to, and simply re-run local search from the current orientation if it continues to have expansion at least, say $\phi/6$. We can then bound the recourse by tracking the changes to the associated $\ell_2$ potential $\Phi$ for this graph. However, \emph{what do we do when $G_{i,j}$ ceases to be an expander?} 
Our idea is to simply identify a cut  of sparsity $< \phi/6$ and remove the smaller side $\Delta P$ from the graph $G_{i,j}$, and repeat if necessary. 
This is called the $\Prune$ procedure and we formally describe it in~\Cref{subsection:dynamic-expander-pruning}. The edges which are incident to $\Delta P$ are re-inserted into the system using the insertion algorithm. In~\Cref{theorem:pruning-ravi}, we bound the number of pruned edges (also called \emph{secondary deletions}) in terms of the number of actual adversarial edge deletions which caused the drop in expansion, and so we are able to amortize the recourse of re-inserting these pruned edges back into our algorithm.

\begin{theorem}
	\label{theorem:pruning-ravi}
	Let $G_0 = (V_0, E_0)$ be a $\phi$-expander with $m$ edges, $n$
	vertices, and minimum degree $\delta$. 
	For a subset $S \sse V_0$, let $\vol_0(S)$ denote its initial volume
	in $G_0$.   There is an algorithm called \Prune (described in~\Cref{subsection:dynamic-expander-pruning}), which for every adversarial deletion of any edge in $G_0$, outputs a (possibly empty) set of vertices $\Delta P$ to be pruned/removed which satisfies the following properties. 
	
	Let $P_t$ denote the aggregate set of vertices pruned over a sequence of $t$ adversarial deletions inside $G_0$, i.e., $V_t := V_0 \setminus P_t$ and $G_t$ is the graph with the undeleted edges of $E_0$ that are induced on $V_t$. Then, for each $1 \leq t \leq \phi^2 m/20$:
	\begin{enumerate} [label=(\roman*),partopsep=0pt,topsep=0pt,parsep=0pt,itemsep=0pt] 
		\item $P_t \subseteq P_{t+1}$. \label{cond:PIncreases}
		\item $G_t$ is a $\phi/6$-``strong expander'', i.e., for any subset
		$A \subseteq V_t$,
		$$ |E_t(A, V_t \setminus A)| \geq (\phi/6)\cdot
		\min\big(\vol_0(A), \vol_0(V_t \setminus A) \big). $$ Hence
		the minimum degree of a vertex in $V_t$ is at least
		$\phi\delta/6$.
		\item $\vol_0(P_t) \leq 6t/(5\phi)$. \label{cond:volP}
	\end{enumerate}
\end{theorem}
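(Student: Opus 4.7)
The plan is to specify the \Prune procedure and then verify the three properties, with (iii) being the main technical step. For the algorithm, after each adversarial deletion from $G_0$, we iteratively check whether the current graph on $V_t = V_0 \setminus P_t$ contains a cut $(A, V_t \setminus A)$ with $|E_t(A, V_t \setminus A)| < (\phi/6) \min(\vol_0(A), \vol_0(V_t \setminus A))$; if so, we pick $A$ to be the side of smaller $\vol_0$, append it to $P_t$, and repeat; otherwise we terminate for this time step. Properties (i) and (ii) are then immediate by construction, and the minimum-degree bound in (ii) follows from applying the expansion to singleton cuts $\{v\}$.

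For (iii), I would use a charging argument that combines three inequalities. Enumerate the pruned subsets as $A_1, A_2, \ldots$ in removal order (at moments $\tau_1 < \tau_2 < \cdots$), and set $X := \sum_{i<j} |E_0(A_i, A_j)|$ (edges of $G_0$ internal to $P_t$ that cross different pruning batches) and $Y := |E_0(P_t, V_t)|$. For each $A_i$, compare the original $G_0$-expansion $|E_0(A_i, V_0 \setminus A_i)| \ge \phi \vol_0(A_i)$ with the prune condition $|E^{(i)}(A_i, V_{<i}\setminus A_i)| < (\phi/6)\vol_0(A_i)$ at the moment $A_i$ is removed (where $V_{<i} := V_0 \setminus \bigcup_{j<i} A_j$ and $E^{(i)}$ denotes the surviving edges at that moment); the missing edges come either from $E_0(A_i, P_{<i})$ or from adversarial deletions within $E_0(A_i, V_{<i}\setminus A_i)$. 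Summing over $i$, and arguing that each adversarial deletion is charged to at most one $A_i$ (once a vertex is pruned, any edge through it disappears as a cross-edge in all later prunings), yields
\[
(5\phi/6)\,\vol_0(P_t) \;\le\; X + t.
\]
Next, sum the prune conditions themselves: using the decomposition $V_{<i}\setminus A_i = V_t \sqcup \bigsqcup_{j>i} A_j$, the left-hand sum splits into contributions toward $Y$ and toward $X$, each reduced only by the charged adversarial deletions, giving
\[
X + Y \;\le\; (\phi/6)\,\vol_0(P_t) + t.
\]
Finally, since we always prune the smaller side, we can maintain inductively that $\vol_0(P_t) \le \vol_0(V_t)$, so the original $G_0$-expansion applied to $P_t$ itself yields $Y \ge \phi \vol_0(P_t)$. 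Substituting this into the second inequality gives $X \le t - (5\phi/6)\vol_0(P_t)$, which when plugged into the first produces $(5\phi/3)\vol_0(P_t) \le 2t$, i.e., $\vol_0(P_t) \le 6t/(5\phi)$, as required.

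\textbf{Main obstacle.} The delicate step is the charging scheme---specifically, that each adversarial deletion contributes to at most one missing-edge term across all prunings. This relies on the ordering of prunings together with the observation that once a vertex is pruned, edges through it can no longer appear as cross-edges in any later prune, so the accounting does not double-count. A secondary subtlety is maintaining the inductive invariant $\vol_0(P_t) \le \vol_0(V_t)$ used in the third inequality; this is guaranteed by coupling the hypothesis $t \le \phi^2 m / 20$ with the pruning bound itself, giving $\vol_0(P_t) \le \tfrac{6t}{5\phi} \le \tfrac{3\phi m}{50} \ll \vol_0(V_0)/2$.
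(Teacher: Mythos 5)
Your algorithm matches the paper's \Prune, and properties (i) and (ii) are handled the same way. For (iii) your accounting is genuinely different: the paper inducts on the prefix unions $B^\ell = A_1\cup\cdots\cup A_\ell$, showing $|E_t'(B^\ell, V_0\setminus B^\ell)| \le (\phi/6)\vol_0(B^\ell)$ and then applying the $G_0$-expansion \emph{once} to $B^\ell$, whereas you apply expansion to each $A_i$ separately and once more to $P_t$, and close the system via a global double count of the internal cross-edges $X$ and boundary edges $Y$. Your three inequalities are each correct (in particular the charging you flag as the main obstacle does work: a deleted edge $(u,v)$ with $u\in A_i$, $v\in A_j$, $i<j$, is charged only at the $A_i$-term in each of the two sums, and an edge into $V_t$ only at its unique pruned endpoint), and the arithmetic $(5\phi/3)\vol_0(P_t)\le 2t$ gives exactly $6t/(5\phi)$.

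The genuine gap is the step you call a "secondary subtlety": the invariant $\vol_0(P_t)\le \vol_0(V_t)$, which you need in order to write $Y\ge\phi\,\vol_0(P_t)$ (and without which inequality~2 collapses and $X$ cannot be bounded --- note inequality~1 alone is vacuous since $X\le\vol_0(P_t)/2$ trivially). Your justification is circular as stated: you invoke the conclusion $\vol_0(P_t)\le 6t/(5\phi)$ to certify the hypothesis needed to prove it. An induction over pruning steps does not immediately repair this, because the inductive hypothesis only controls $\vol_0(P_{\mathrm{old}})$, while the newly pruned piece $A_{\mathrm{new}}$ is merely the smaller-$\vol_0$ side of a cut of $V_{<\mathrm{new}}$ and so could a priori have volume close to $m$, making $\vol_0(P_{\mathrm{old}}\cup A_{\mathrm{new}})$ exceed $m$. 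One must separately bound $\vol_0(A_{\mathrm{new}})$ before the expansion of $G_0$ can be applied to the new aggregate; this is exactly where the paper spends its effort, via the two-case analysis in the second part of Claim~\ref{claim:CutPEdges} (comparing $|E_0(A,V_0\setminus B)|\ge \phi\vol_0(A)-\vol_0(B')$ against the sparse-cut condition to show that a large $A_{\mathrm{new}}$ would force $t>\phi^2m/20$). A one-step argument of the same flavor closes your induction too --- from $\phi\vol_0(A_{\mathrm{new}})-(\phi/6)\vol_0(A_{\mathrm{new}})\le \vol_0(P_{\mathrm{old}})+t$ one gets $\vol_0(A_{\mathrm{new}})=O(\phi m)$ under the hypotheses --- but it must be supplied; as written the proof does not establish it.
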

Similar ideas have recently been used for dynamic graph algorithms, e.g. in~\cite{saranurak2019expander,bernstein2020fullydynamic}, but our algorithms and analyses are more direct since we are concerned only with the amortized recourse rather than the update time. However, new  challenges appear due to our discrepancy minimization setting.

{\medskip \noindent \bf A `Potential' Problem. } 
While the above procedure identifies the set of edges to prune, so that the residual graph remains an expander, we still need to maintain a low-discrepancy orientation on the expander as it undergoes deletions and prunings. Indeed, the above ideas essentially allow us to cleanly reduce the fully dynamic problem to the follow special case of only handling deletions on expanders: let $G = (V,E)$ be a $\phi/6$-expander, currently oriented according to local search. Then, suppose $e$ is an adversarial deletion, and suppose $\Delta P$ is the set of vertices to be removed as computed by the $\Prune$ procedure. Then, how many flips would we need to end-up at a locally-optimal orientation on $G[V \setminus \Delta P]$, which we know has bounded discrepancy since $G[V \setminus \Delta P]$ is an expander? If we can bound this in terms of the number of edges incident to $\Delta P$, then we would be done, since these are precisely the number of secondary deletions, which are in turn bounded in terms of adversarial deletions. 

A natural attempt is to simply re-run local search on $G[V \setminus \Delta P]$ starting from the current orientation. While this will converge to a low-discrepancy solution because $G[V \setminus \Delta P]$ is an expander, our recourse analysis proceeds by tracking a  quadratic potential function, and this could increase a lot if we suddenly remove all edges incident to $\Delta P$ en masse. Removing the edges one by one is also also an issue as the intermediate graphs won't satisfy the desired expansion to argue both discrepancy as well as recourse (which indirectly depends on having good discrepancy bounds to control the potential). To resolve this issue, we craft a collection of ``fake'' intermediate graphs that interpolate between the graphs $G$ and $G[V \setminus \Delta P]$ which ensure that (i) all of them 
have good expansion properties, and (ii) the potential change in moving from one to another is bounded. Our overall algorithm is to then repeatedly re-run local search after moving to each intermediate graph, until we end up with the final orientation on $G[V \setminus \Delta P]$. 

We now formalize this in the following theorem, which bounds the recourse needed to move from a locally optimal orientation in $G[V]$ to one in $G[V \setminus \Delta P]$. Let $H$ denote any graph with a current orientation represented by $\dir{H}$. We then define the following potential 
\[ \Phi(\dir{H}) := \sum_{v \in V(H)} \disc(v)^2 \, .\]

\begin{theorem} \label{thm:iterated-local-search}
	Let $G_{t-1} = (V_{t-1}, E_{t-1})$ be a $\phi/6$-expander as maintained by our algorithm, and suppose the adversary deletes an edge $e_t \in E_{t-1}$. Moreover, suppose an associated set of vertices $\Delta P \subseteq V_{t-1}$ are pruned by $\Prune$ to obtain the graph $G_{t} = (V_t, E_t)$ which is a $\phi/6$-expander, where $V_t = V_{t-1} \setminus \Delta P$ and $E_{t}$ is the subset of $E_{t-1} \setminus \{e_t\}$ induced on $V_t$. 
	
	Then, starting from a locally optimal orientation $\dir{G}_{t-1}$ we can compute a locally optimal orientation $\dir{G}_t$ by performing at most $L_t$ flips satisfying
	$$ L_t \leq \left(\frac{ \log m}{\phi^2 \gamma}  + \frac{\vol_0(\Delta P) \log m}{\phi^2 \gamma}\right) + \Phi(\dir{G}_{t-1}) - \Phi(\dir{G}_{t})  \, . $$
\end{theorem}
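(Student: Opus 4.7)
The plan is to decompose the ``big'' move from $\dir{G}_{t-1}$ to $\dir{G}_t$ into a sequence of small, expander-preserving steps, and to telescope the quadratic potential $\Phi(\dir{H}) = \sum_v \disc(v)^2$ across them. Concretely, I will construct intermediate (possibly fake) graphs $H_0 = G_{t-1}, H_1, \dots, H_K = G_t$, obtain a locally optimal orientation $\dir{H}_i$ by running \local on $H_i$ initialized from the orientation inherited from $\dir{H}_{i-1}$, and then bound the total number of flips $L_t$ phase-by-phase.

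\paragraph{Properties demanded of the intermediate graphs.} I will insist that: (i)~each $H_i$ is a $\phi'$-expander with weak-regularity $\gamma' = \Omega(\phi \gamma)$ for some $\phi' = \Omega(\phi)$, so that by \Cref{theorem:local-search-expander} any locally optimal orientation on $H_i$ has discrepancy at most $D = O(\log m / (\phi^2 \gamma))$ at every vertex; (ii)~consecutive $H_{i-1}, H_i$ differ in only $O(1)$ edges; and (iii)~$K = O(1 + \vol_0(\Delta P))$. The min-degree guarantee of \Cref{theorem:pruning-ravi} (min-deg $\ge \phi \delta/6$), together with the fact that \Prune cannot remove a constant fraction of the vertices in one adversarial step, supplies the weak-regularity $\gamma' = \Omega(\phi \gamma)$; this is exactly the $\phi$-loss that ends up in the denominator of the theorem statement.

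\paragraph{Telescoping the potential.} Two ingredients drive the accounting. First, the strict local-search condition $\disc(v) > \disc(u) + 2$ together with integrality forces each flip to drop $\Phi$ by at least $4$ (one computes $(s(v){-}2)^2 + (s(u){+}2)^2 - s(v)^2 - s(u)^2 = -4(s(v) - s(u)) + 8 \le -4$, using $s(v) - s(u) \ge 3$). Second, moving from $\dir{H}_{i-1}$ on $H_{i-1}$ to the inherited orientation on $H_i$ (before any flips) changes $\Phi$ by at most $\Delta_i = O(D) = O(\log m / (\phi^2 \gamma))$, because at most $O(1)$ edges change and each touched endpoint had discrepancy at most $D$ under the locally optimal $\dir{H}_{i-1}$. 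Summing the per-phase bound ``flips in phase $i$ $\le \tfrac14(\Phi(\dir{H}_{i-1}) + \Delta_i - \Phi(\dir{H}_i))$'' telescopes to
\[
L_t \;\le\; \tfrac14\Bigl( \Phi(\dir{G}_{t-1}) - \Phi(\dir{G}_t) + \textstyle\sum_{i=1}^K \Delta_i \Bigr) \;\le\; O\!\Bigl(\tfrac{(1+\vol_0(\Delta P)) \log m}{\phi^2 \gamma}\Bigr) + \Phi(\dir{G}_{t-1}) - \Phi(\dir{G}_t),
\]
which matches the statement once the factor $4$ is absorbed into $O(\cdot)$.

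\paragraph{Main obstacle.} The difficulty is not the potential accounting but the construction of the interpolating sequence: naively removing the edges of $S := \{e_t\} \cup \{\text{edges incident to } \Delta P\}$ one at a time destroys property~(i), while executing the deletion en masse (as \Prune would do in a single shot) violates the $O(1)$-edges-per-step bound needed for the Lipschitz estimate on $\Delta_i$. My plan is to interleave single-edge deletions from $S$ with fine-grained re-invocations of \Prune, possibly augmented with auxiliary ``placeholder'' edges, so that (i) is restored after each individual removal. Verifying that such a schedule exists with $K = O(1 + \vol_0(\Delta P))$ total phases---crucially without inflating the pruned volume beyond what \Cref{theorem:pruning-ravi} already provides---is the technical heart of the argument.
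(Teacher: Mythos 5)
Your high-level strategy --- interpolate between $G_{t-1}$ and $G_t$ through a chain of intermediate graphs that all remain expanders, run \local after each step, and telescope the quadratic potential --- is exactly the paper's strategy, and your per-phase accounting (local-search flips strictly decrease $\Phi$, the inter-phase jump is bounded by the discrepancy of the previous local optimum times the number of edges touched) is sound. However, the proposal has a genuine gap precisely where you flag ``the technical heart'': you never construct the interpolating sequence, and the schedule you sketch (single-edge deletions from $S$ interleaved with ``fine-grained re-invocations of \Prune, possibly augmented with auxiliary placeholder edges'') does not obviously exist. Re-invoking \Prune after each single-edge removal can cascade and prune additional volume not charged to the $t$ adversarial deletions, which would break the $\vol_0(P_t) \le 6t/(5\phi)$ accounting of \Cref{theorem:pruning-ravi}; and without some replacement structure the partially-stripped graph simply is not an expander (the remnant of $\Delta P$ is a sparse cut). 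So the proof as written does not go through.

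The paper's resolution is a concrete construction you would need to supply: replace $\Delta P$ by $N = |\Delta P|$ \emph{fake vertices} $F$ and reattach every edge of $E_{t-1}(V_t,\Delta P)\setminus\{e_t\}$ to $F$ in a round-robin manner that (a) preserves the discrepancy of every real vertex and (b) makes the fake-vertex discrepancies as balanced as possible, so that the potential does not increase in this step (\Cref{cl:potential1}); then delete the fake vertices one at a time, re-running \local after each. Note two divergences from your plan: consecutive intermediate graphs do \emph{not} differ in $O(1)$ edges --- removing $f_j$ deletes $\delta(f_j)$ edges at once --- but the potential jump is $O(d\cdot\delta(f_j))$ and these sum to $O(d\cdot\vol_0(\Delta P))$, which is all the theorem needs; and the expansion of the intermediate graphs $H^j$ is established not by a per-step Lipschitz argument but from the cumulative bound $\vol_0(P_t)\le \phi m/4$ (valid since $t\le \phi^2 m/20$), which shows the fake part has volume much smaller than $V_t$ and hence each $H^j$ is a $\phi/36$-expander (\Cref{lemma:expansion-fake-vertices}); weak-regularity of the fake vertices is sidestepped via \Cref{cor:weakerlocalsearch}, which tolerates up to $n/7$ low-degree vertices. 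Without this (or an equivalent) construction, your argument is a correct reduction to an unproven combinatorial claim rather than a proof.
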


With this our algorithm description is complete. For the discrepancy analysis, note that our algorithm at all times maintains a locally optimal orientation in each expander at each level, and every vertex appears in at most ${\rm polylog}(n)$ expanders from~\Cref{theorem:gkks20}, giving us an overall discrepancy of ${\rm polylog}(n)$ by combining with~\Cref{theorem:local-search-expander}. For the recourse analysis, any time the insertion algorithm overflows and a rebuild happens in the higher level, we can charge the recourse to the adversarial insertions as well as re-insertions of the edges removed by $\Prune$. The latter is in turn bounded in terms of the adversarial deletions by~\Cref{theorem:pruning-ravi}. Finally, we bound the total recourse within an expander, as parts of it are pruned out, for which we appeal to~\Cref{thm:iterated-local-search}. Since~\Cref{theorem:pruning-ravi}\,(iii) ensures that the total volume of all the sets which are pruned can be bounded in terms of $O(1/\phi)$ times the number of adversarial deletions, we get that the total number of flips done over a sequence of adversarial deletions in any expander is at most $\frac{\log m}{\phi^3 \gamma}$ times the number of adversarial deletions plus the potential $\Phi(\dir{G}_0)$ of the initial expander, which is small since we start with an optimal orientation where each vertex has discrepancy at most $1$ when the expander is formed.

\subsection{Local-Search for Weakly-Regular Expanders} \label{sec:localExpander}

In this section we prove \Cref{theorem:local-search-expander} that local search ensures low discrepancy on any weakly-regular expander. 
Recall that the local search flips an edge $(u,v)$ oriented from $u$ to $v$ whenever $\disc(v) > \disc(u) + 2$.   

\begin{algorithm}
	\caption{\local}
	\label{algorithm:local}
	\textbf{Input:} Graph $G=(V,E)$ and an initial partial orientation.\\
	\textbf{Output:} Revised orientation which is a local optimum.
	\begin{algorithmic}[1]
		\State Arbitrarily direct any undirected edges in $G$.
		\State While there exists a directed edge $(u,v)$ such
		that flipping it decreases $\Phi:= \sum_u \disc(u)^2$, flip
		it. %\alert{$\Phi$ not defined yet?}
	\end{algorithmic}
\end{algorithm}

\begin{proof}[Proof of~\Cref{theorem:local-search-expander}]
	Let $\vec{G}=(V, \vec{E})$ be the directed graph
	corresponding to a local optimum.  Consider the node $v$ with
	largest discrepancy $k$; without loss of generality, assume
	$k \geq 0$. We perform a breadth-first-search (BFS) in $\vec{G}$
	starting from $v$, but only following the \emph{incoming} edges at each step. Let
	$L_i$ be the vertices at level $i$ during this BFS, i.e., $L_i$ is
	the set of vertices $w$ for which the shortest path in $\vec{G}$ to
	$v$ contains $i$ edges. Let $S_i$ denote the set of vertices up to
	level $i$, i.e., $S_i := \bigcup_{i'=0}^i L_{i'}$. The fact
	that $\vec{G}$ is a local optimum means there are no improving flips,
	and hence the discrepancy of any vertex in $L_i$ is at least
	$k-2i$. In turn, this implies that there are at least $k/2$ layers,
	and the discrepancy of any vertex in $S_{k/4}$ is at least $k/2$.
	We now show that the volume of $S_{k/4}$'s complement is large.
	
	\begin{claim} \label{claim:largeVol}
		$\vol(V \setminus S_{k/4}) \geq 2\gamma m/3$. 
	\end{claim}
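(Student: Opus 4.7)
The plan is to combine a sum-of-discrepancies budget with $\gamma$-weak-regularity: high discrepancy across all of $S_{k/4}$ caps its cardinality, and then weak-regularity translates a small cardinality for $S_{k/4}$ into a large volume for its complement.

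Concretely, I would invoke the fact just established that every $w \in S_{k/4}$ satisfies $\text{disc}(w) \geq k/2$. Since $\text{disc}(w) \leq \deg(w)$ pointwise, $\sum_{w\in V}\text{disc}(w) \leq \sum_{w} \deg(w) = 2m$, so restricting to $S_{k/4}$ gives
\[
\tfrac{k}{2}\cdot |S_{k/4}| \;\leq\; \sum_{w\in S_{k/4}}\text{disc}(w) \;\leq\; 2m,
\]
and hence $|S_{k/4}| \leq 4m/k$. Once $k \geq 6m/n$, this yields $|S_{k/4}| \leq 2n/3$, so $|V \setminus S_{k/4}| \geq n/3$. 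Then $\gamma$-weak-regularity, which guarantees every vertex has degree at least $\gamma\cdot (2m/n)$, finishes the job:
\[
\vol(V\setminus S_{k/4}) \;\geq\; |V \setminus S_{k/4}|\cdot\frac{2\gamma m}{n} \;\geq\; \frac{n}{3}\cdot\frac{2\gamma m}{n} \;=\; \frac{2\gamma m}{3}.
\]

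The main obstacle is the regime $k < 6m/n$, where the cardinality bound $4m/k$ is too weak to force $|S_{k/4}| \leq 2n/3$ by itself. I expect this to be handled either (i) implicitly, since the proof of \Cref{theorem:local-search-expander} is a contradiction argument that only needs the claim when $k$ is large (otherwise $k$ is already small enough that the desired logarithmic discrepancy bound may be argued separately); or (ii) by a complementary inequality $\vol(V\setminus S_{k/4}) \geq |S_{k/4}|\cdot k/2$ obtained from the global signed-degree balance $\sum_w(\text{in}(w)-\text{out}(w)) = 0$ together with the BFS-propagation property that every $w \in S_{k/4}$ has $\text{in}(w) - \text{out}(w) \geq k/2$ (provable by an induction on BFS layers using the $\Phi$-local-optimality condition). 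Taking the maximum of the two lower bounds recovers the claim for all $k \geq 2\gamma m/n$, and the residual very-small-$k$ case is easy because $S_{k/4} = \{v\}$ and weak-regularity forces the max degree to be at most $2m(1 - \gamma(n-1)/n)$, so $\vol(V\setminus\{v\}) \geq 2\gamma m(1-1/n) \geq 2\gamma m/3$.
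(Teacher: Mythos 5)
Your main argument has a genuine gap. The bound $|S_{k/4}|\le 4m/k$ coming from $\disc(w)\le \deg(w)$ only forces $|V\setminus S_{k/4}|\ge n/3$ when $k\ge 6m/n$, and neither fallback rigorously closes the remaining range. Fallback (i) is not available: the claim is invoked unconditionally inside the proof of \Cref{theorem:local-search-expander} (it feeds \Cref{cl:expinc} for every $i\le k/4$), and there is no separate argument there for moderate $k$. Fallback (ii), as you state it, yields the volume inequality $\vol(V\setminus S_{k/4})\ge |S_{k/4}|\cdot k/2$; taking the maximum with your first bound indeed covers $k\ge 2\gamma m/n$, but in the regime $4\le k<2\gamma m/n$ you are guaranteed neither $|S_{k/4}|\le 2n/3$ nor $S_{k/4}=\{v\}$ (the latter needs $k<4$), so for a dense weakly-regular expander with $m=\omega(n)$ and $k=\Theta(1)$ your bounds give only $\vol(V\setminus S_{k/4})=\Omega(n)$, which can be far below $2\gamma m/3$.

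The paper closes this with one observation you did not use: $v$ is chosen to have the \emph{maximum} discrepancy $k$, so every vertex of $V\setminus S_{k/4}$ has signed discrepancy at least $-k$ --- bounded by $k$, not by its degree. Combined with $\disc(w)\ge k/2$ on $S_{k/4}$ and the global balance $\sum_{w\in V}\disc(w)=0$, this gives $0\ge |S_{k/4}|\cdot k/2-|V\setminus S_{k/4}|\cdot k$, hence $|V\setminus S_{k/4}|\ge |S_{k/4}|/2$ and therefore $|V\setminus S_{k/4}|\ge n/3$ for every $k>0$, with no case split on $k$ at all. Weak-regularity then finishes exactly as in your last display. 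In short: replace the degree bound on the complement by the max-discrepancy bound, and the cardinality comparison becomes unconditional.
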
 
	
	\begin{proof}
		Each node in $S_{k/4}$ has discrepancy at least $k/2$, and each
		node in $V\setminus S_{k/4}$ has discrepancy at least $-k$. Since
		the total discrepancy of all the vertices in $V$ is 0, it follows
		that
		\[ 
		0 ~=~ \sum_{v \in V} \disc(v) ~\geq~ |S_{k/4}| \cdot k/2 - |V \setminus S_{k/4}| \cdot k. 
		\]
		This implies that $|V \setminus S_{k/4}| \geq n/3.$ Now
		$\gamma$-weak-regularity implies each vertex in $G$ has degree at
		least $\gamma \cdot \frac{2m}{n}$, and hence the sum of the degrees
		of the vertices in $V \setminus S_{k/4}$ is at least $2m
		\gamma/3$. 
	\end{proof}

	We now show that the size of the edge set $E(S_i)$ increases geometrically.
	\begin{claim}
		\label{cl:expinc}
		For any $i \leq k/4$, $|E(S_{i+1})| \geq (1 + \phi \gamma/3) |E(S_i)|.$
	\end{claim}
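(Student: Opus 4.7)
My plan is to exploit two structural consequences of the BFS that Claim~\ref{claim:largeVol} did not use. Let $\widetilde{\disc}(u) := |\delta_{\rm in}(u)| - |\delta_{\rm out}(u)|$ denote the signed discrepancy, and WLOG take $\widetilde{\disc}(v) = +k$. First, since the BFS follows incoming edges, every edge between $S_i$ and $V \setminus S_{i+1}$ must be oriented \emph{out of} $S_i$: otherwise its head would lie in some $L_{i'+1}$ with $i' \leq i$, hence inside $S_{i+1}$, a contradiction. Second, the local-optimality condition (no flip decreases $\Phi$) implies that for every directed edge $w \to u$, $\widetilde{\disc}(w) \geq \widetilde{\disc}(u) - 2$; propagating this along the BFS from $v$ gives $\widetilde{\disc}(u) \geq k - 2i \geq k/2$ for every $u \in S_i$ with $i \leq k/4$, and in particular $\sum_{u \in S_i} \widetilde{\disc}(u) \geq 0$.

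Next I classify the cut edges of $S_i$. Write $X_1$ for the number of edges directed \emph{into} $S_i$ from $L_{i+1}$, and $Y_1, Y_0$ for the numbers of edges directed \emph{out of} $S_i$ into $L_{i+1}$ and into $V \setminus S_{i+1}$ respectively. By the first observation above, there are no edges directed into $S_i$ from $V \setminus S_{i+1}$, so these three types exhaust $E(S_i, V\setminus S_i)$. Interior edges of $S_i$ cancel in the signed-discrepancy sum, so
\[
0 \;\leq\; \sum_{u \in S_i} \widetilde{\disc}(u) \;=\; X_1 - Y_0 - Y_1,
\]
which yields the key asymmetry $|E(S_i, V\setminus S_i)| = X_1 + Y_0 + Y_1 \leq 2 X_1$.

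Now I apply $\phi$-expansion: $|E(S_i, V\setminus S_i)| \geq \phi \cdot \min\bigl(\vol(S_i), \vol(V\setminus S_i)\bigr)$, so combining with the bound above, $X_1 \geq (\phi/2) \cdot \min\bigl(\vol(S_i), \vol(V\setminus S_i)\bigr)$. Since every edge counted by $X_1$ has both endpoints inside $S_{i+1}$, we get $|E(S_{i+1})| - |E(S_i)| \geq X_1$. Finally I split into two cases. If $\vol(S_i) \leq \vol(V\setminus S_i)$, then $\vol(S_i) \geq 2|E(S_i)|$ gives $|E(S_{i+1})| - |E(S_i)| \geq \phi |E(S_i)|$, which is stronger than needed. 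Otherwise, monotonicity together with Claim~\ref{claim:largeVol} yields $\vol(V \setminus S_i) \geq \vol(V \setminus S_{k/4}) \geq 2\gamma m/3$, so $|E(S_{i+1})| - |E(S_i)| \geq \phi \gamma m/3 \geq (\phi\gamma/3)\,|E(S_i)|$ since $|E(S_i)| \leq m$.

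The main subtlety is converting expansion (which only lower bounds the \emph{volume} of the cut boundary) into growth of $|E(S_i)|$ (which demands edges whose \emph{both} endpoints lie in $S_{i+1}$). A direct appeal to expansion is insufficient, because a priori the cut could be dominated by the ``forward leakage'' $Y_0$ that escapes past $S_{i+1}$ and contributes nothing to $E(S_{i+1})$. The orientation-driven inequality $Y_0 \leq X_1$, derived by transporting $v$'s signed-discrepancy surplus through the BFS via local optimality, is precisely what funnels the expansion of $S_i$ into new edges of $E(S_{i+1})$.
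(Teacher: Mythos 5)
Your proof is correct and follows essentially the same route as the paper's: both use that the vertices of $S_i$ all have nonnegative discrepancy to conclude $|\delta^-(S_i)| \geq \tfrac12|\delta(S_i)|$, apply $\phi$-expansion to the cut, observe that every incoming edge of $S_i$ has its tail in $L_{i+1}$ and hence lies in $E(S_{i+1})$, and finish with the same two-case bound on $\min(\vol(S_i), \vol(V\setminus S_i))$ via \Cref{claim:largeVol}. Your $X_1, Y_0, Y_1$ bookkeeping is just a more explicit rendering of the paper's $\delta^-(S_i)$ versus $\delta^+(S_i)$ accounting.
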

	\begin{proof}
		Given a directed graph $\vec{G}$ and a subset $X$ of vertices, let
		$\delta^-(X)$ and $\delta^+(X)$ denote the set of incoming edges
		into $X$ (from $V \setminus X$), and the set of outgoing edges from
		$X$ (to $V \setminus X$) respectively. Since the discrepancy of each
		vertex in $S_i$ is positive,
		\[ 
		0 ~\leq~ \sum_{w \in S_i} \disc(w) ~=~ \sum_{w \in S_i}
		\big( |\delta^-(w)| - |\delta^+(w)| \big) 
		~=~|\delta^-(S_i)| - |\delta^+(S_i)|. 
		\]
		The expansion property now implies that 
		\begin{align}
			\label{eq:expansion1}
			|\delta^-(S_i)| ~\geq~ \tfrac12 |\delta(S_i)| ~\geq~ \tfrac{\phi}{2}\cdot    \min\big(\vol(S_i), \vol(V-S_i) \big).
		\end{align}
		Since all edges in $\delta^-(S_i)$ are directed from $L_{i+1}$ to $L_i$,
		\begin{align*}
			|E(S_{i+1})| ~ \geq~ |E(S_i)| + |\delta^-(S_i)|
			\; &~\stackrel{\eqref{eq:expansion1}}{\geq} ~
			|E(S_i)| + \frac{\phi}{2}   \min\big(\vol(S_i), \vol(V-S_i) \big) \\ 
			&~\geq~ |E(S_i)| + \frac{\phi}{2}   \min\big(|E(S_i)|, 2\gamma m/3\big), 
		\end{align*}
		where we used $\vol(S_i) \geq |E(S_i)|$ and \Cref{claim:largeVol}
		for the two terms of the last inequality. Since $m \geq |E(S_i)|$ and
		$\gamma \leq 1$, the RHS above is at least $ |E(S_i)| \cdot (1 + \phi \gamma/3). $
	\end{proof}
	
	Finally the fact that $S_0 = \{v\}$ implies that 
	$|E(S_1)| \geq  |\delta^-(S_0)| \geq k$, and using \Cref{cl:expinc} we get
	$$|E(S_{k/4})| \geq k \; \big( 1 + \nicefrac{\phi \gamma}{3} \big)^{k/4-1}. $$ Since $|E(S_{k/4})| \leq m$, we get 
	% Now using  $|E(S_1,S_1)| \geq k$, we get  $|E(S_{k/4},S_{k/4})|\geq (1+\phi\gamma/3)^{k/4 - 1}$. Since   $|E(S_{k/4},S_{k/4})|\leq m$, we have
	$k = O\big(\frac{\log m}{\phi\gamma}\big)$.
\end{proof}

\begin{figure}
	\centering
	\includegraphics{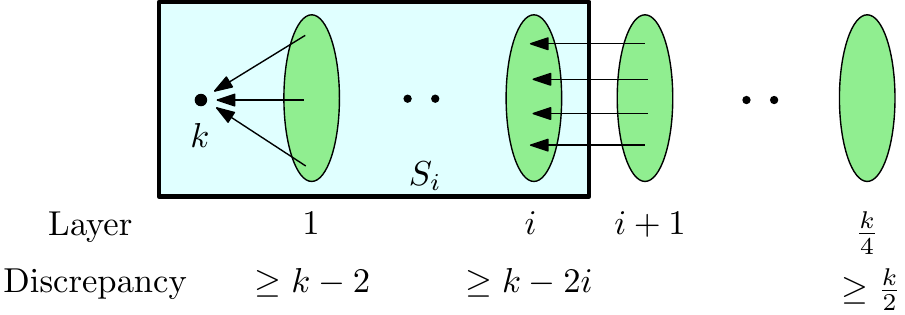}
	\caption{local-search for expanders.}
	\label{fig:expanders-local-search}
\end{figure}

The weak-regularity property was used only in \Cref{claim:largeVol}
above; it is easy to alter the proof to show that
$\vol(V \setminus S_{k/4}) \geq \nicefrac{\gamma m}{4}$ even if all but $\nicefrac{n}{7}$
vertices in $G$ satisfy the weak-regularity property. This proves~\Cref{theorem:local-search-expander}.

\begin{corollary}
	\label{cor:weakerlocalsearch}
	Let $G$ be a $\phi$-expander, such that degree of every vertex,
	except perhaps a subset of at most $\nicefrac{n}{7}$ vertices, is at least
	$ 2\gamma m/n$. Then the discrepancy of \local is
	$O \big( \frac{\log m}{\phi \gamma} \big)$.
\end{corollary}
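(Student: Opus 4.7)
The plan is to show that the entire proof of \Cref{theorem:local-search-expander} carries through unchanged once one reproves \Cref{claim:largeVol} with a slightly weaker constant under the relaxed hypothesis. First I would observe by inspection that $\gamma$-weak-regularity of every vertex was invoked only inside \Cref{claim:largeVol}: the BFS layering, the ``no improving flip'' lower bound of $k-2i$ on the discrepancy of $L_i$, the existence of $k/2$ layers, and the geometric-growth argument of \Cref{cl:expinc} are all insensitive to the degree distribution. Moreover, \Cref{cl:expinc} already takes a minimum of $|E(S_i)|$ against the volume bound, so it only needs a lower bound of the form $\vol(V \setminus S_{k/4}) \geq c\cdot \gamma m$ for some absolute constant $c > 0$; only the constant inside the $\Theta(\log m/(\phi\gamma))$ conclusion will change.

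Next I would re-derive the volume bound under the new hypothesis. The vertex-counting part of the original proof of \Cref{claim:largeVol} --- comparing $|S_{k/4}|\cdot k/2$ to $|V\setminus S_{k/4}|\cdot k$ via the identity $\sum_v \disc(v) = 0$ --- did not use weak-regularity, so it still yields $|V \setminus S_{k/4}| \geq n/3$. At most $n/7$ vertices fail the degree condition, so at least $n/3 - n/7 = 4n/21 \geq n/6$ vertices in $V\setminus S_{k/4}$ have degree $\geq 2\gamma m/n$. Summing just these degrees gives
$$\vol(V \setminus S_{k/4}) \;\geq\; \tfrac{n}{6}\cdot \tfrac{2\gamma m}{n} \;=\; \tfrac{\gamma m}{3} \;\geq\; \tfrac{\gamma m}{4},$$
which is exactly the replacement of \Cref{claim:largeVol} promised in the paragraph preceding the corollary.

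Finally I would re-run \Cref{cl:expinc} verbatim, substituting $\gamma m/4$ for $2\gamma m/3$. The recursion becomes $|E(S_{i+1})| \geq |E(S_i)| + \tfrac{\phi}{2}\min\!\big(|E(S_i)|,\gamma m/4\big)$, which still gives a multiplicative growth of at least $(1 + c'\phi\gamma)$ per BFS layer for some absolute constant $c' > 0$ (e.g.\ $c' = 1/8$, using $|E(S_i)| \leq m$). Combining $|E(S_1)| \geq k$ with geometric growth over the $k/4$ layers and the trivial upper bound $|E(S_{k/4})| \leq m$ yields $k\,(1+c'\phi\gamma)^{k/4-1} \leq m$, and hence $k = O(\log m/(\phi\gamma))$, matching \Cref{theorem:local-search-expander} up to absolute constants. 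The ``main obstacle'' is purely cosmetic bookkeeping: verifying that the $n/7$ threshold leaves enough good vertices after the $n/3$ subtraction to still carry $\Omega(\gamma m)$ volume. Since $1/7 < 1/3$ there is room to spare; any constant strictly less than $1/3$ in place of $1/7$ would work, at the cost of worse absolute constants in the final discrepancy bound.
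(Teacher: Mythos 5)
Your proposal is correct and follows exactly the route the paper intends: the paper's own justification is the one-sentence remark that weak-regularity enters only through \Cref{claim:largeVol} and that the volume bound degrades to $\gamma m/4$ when at most $n/7$ vertices violate the degree condition, and your computation ($|V\setminus S_{k/4}|\geq n/3$, minus at most $n/7$ bad vertices, leaves $4n/21\geq n/6$ good vertices of degree at least $2\gamma m/n$) fills in that arithmetic correctly. The rest of your argument re-running \Cref{cl:expinc} with the weaker constant matches the paper's proof of \Cref{theorem:local-search-expander} verbatim.
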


The expansion plays a crucial role here:
in \Cref{section-lower-bounds-local-search} we show that locally-optimum
solutions of \local can have large discrepancy for general
graphs.

\subsection{Dynamic Expander Pruning}
\label{subsection:dynamic-expander-pruning}

In this section we prove~\Cref{theorem:pruning-ravi}. We first recall the setup. Suppose we start with a $\phi$-expander $G_0=(V_0, E_0)$, and at each
step $t$ some edge $e_t$ is deleted by the adversary (call these \emph{primary
	deletions}). The goal is to remove a small portion of the graph so that the remaining portion continues to be, say, a $\phi/6$-expander.  
Since the graph may may violate the expansion requirement due to deletions, we perform additional \emph{secondary
deletions} at each step to maintain a slightly smaller subgraph
$G_t = (V_t, E_t)$ which is $\nf{\phi}{6}$-expanding, such that
$\{e_1, \ldots, e_t\} \cap E_t = \emptyset$.  Crucially, the number of
secondary deletions is only a factor $\nicefrac1\phi$ more than the
number of primary deletions until that point. The idea of this greedy {\em expander pruning algorithm} is simple:
whenever the edge deletion $e_t$ creates a sparse cut in $G_{t-1}$, we
iteratively remove the smaller side of such a sparse cut from the
current graph until we regain expansion. (See~\Cref{algo:prune} for
the formal definition. Again we assume we can find low-conductance
cuts; using an approximation algorithm would lose logarithmic factors.)
% \vgnote{For the footnote: A sparse cut from which we prune out the smaller side will have conductance $<\phi/6$, so the proof for the volume of $P_t$ will remain the same. However, when $\log n$-approximate-sparsest-cut informs us that there is no cut with conductance $<\phi/6$ (the stopping condition), in reality it means there is no cut with conductance $<\frac{\phi}{6\log n}$, so the strong expansion guarantee will now be just $\frac{\phi}{6\log n}$.}

\begin{algorithm}
	\caption{$\Prune(G_{t-1} , e_t)$}
	\label{algo:prune}
	\textbf{Input:} Graph $G_{t-1} = (V_{t-1},E_{t-1})$ and an edge $e_t \in E_{t-1}$  which gets deleted at this step.\\
	\textbf{Output:} A set of vertices $\Delta P$ that have to be pruned from $G_{t-1}$ to get $G_t$.
	\begin{algorithmic}[1]
		% \For{$t=1,2, \ldots$}
		\State Define $G_t$ with edges $E_t \gets E_{t-1} \setminus e_t$ and vertices $V_t \gets V_{t-1}$.
		% and define
		% $G_t = (V_t, E_t)$.
		\State $\Delta P \gets \emptyset$.
		\While{there is a subset $A \subseteq V_t$ with $|E_t(A, V_t \setminus A)| < (\phi/6) \cdot \vol_0(A)$}
		\label{l:sparse}
		\State Assume that $A$ is the smaller side of the cut
		\State Remove $A$ from $G_t$, i.e., $V_{t} \leftarrow V_t \setminus A$, $E_{t} \leftarrow G_t[V_t \setminus A]$, and $\Delta P \leftarrow \Delta P \cup A$.
		% \Else
		% \State $V_{t+1} \leftarrow V_t, E_{t+1} \leftarrow E_t$ 
		\EndWhile 
		% \State $V_{t+1} \leftarrow V_t$ and $E_{t+1} \leftarrow E_t.$
		% \EndFor
		\State Return  $\Delta P$
	\end{algorithmic}
\end{algorithm}
%\vgnote{can mention inside \Cref{algo:prune} that $A$ is the smaller side of the cut.}

For a subset $S \sse V_0$, let $\vol_0(S)$ denote its initial volume
in $G_0$. Observe that the expansion in line~\ref{l:sparse} above is
measured with respect to $\vol_0(A)$, the volume of the set $A$ in
$G_0$, which is a stronger condition than comparing to the volume in
the current graph.
% Before we argue about the properties of this algorithm, we give a
% few definitions.  It is worth noting that the pruning algorithm uses
% the initial volume $\vol_0$ to decide whether a sparse cut exists in
% a graph $G_t$.
Define $P_t$ to be the set of vertices pruned in the first $t$
iterations, i.e., $P_t = V_0 \setminus V_{t}$, so that
$P_0 = \emptyset$. We now show that this algorithm maintains a
``strong expansion'' property at all times (i.e., the expansion
property holds with respect to the initial volume $\vol_0$).

\begin{proof}[Proof of~\Cref{theorem:pruning-ravi}]
	The first property uses that $P_t$ is the set of vertices removed in
	the first $t$ steps. The second property uses the stopping condition
	of the algorithm, and the fact that $\vol_0(\{v\}) \geq \delta$ for
	all vertices.
	
	To prove the final property, let the subsets pruned by the algorithm
	be $A^1, A^2, \ldots, A^s$ in the order they are pruned. For an
	index $\ell$, let $B^\ell := \cup_{\ell'=1}^\ell A^{\ell'}$. Let
	$E_t' := E_0 \setminus \{e_1, \ldots, e_{t} \}$. The following claim
	bounds the number of edges leaving $B^\ell$.
	
	\begin{claim} \label{claim:CutPEdges} If set $A^\ell$ is pruned in
		iteration $t$, then
		$|E_t'(B^\ell,V_0-B^\ell)| \leq \frac{\phi}{6}\vol_0(B^\ell)$.
		Also, $\vol_0(B^\ell) \leq \vol_0(V_0 \setminus B^\ell)$.
	\end{claim}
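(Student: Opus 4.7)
The plan is to prove both inequalities by strong induction on the pruning-step index $\ell$, with the edge-count bound being the main technical content and the volume bound following from it together with property~(iii) of the theorem. Let $t(\ell) \in \{1,\ldots,t\}$ denote the iteration during which $A^\ell$ is pruned; several sets may be pruned within one iteration of the while-loop, in which case they share the same $t(\ell)$, so $t(\ell)$ is only weakly increasing in $\ell$. The base case $\ell = 0$ is vacuous since $B^0 = \emptyset$.

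For the edge-count inductive step, the starting observation is that when $A^\ell$ is pruned, the current vertex set is exactly $V_0 \setminus B^{\ell-1}$, and the current edge set consists of the edges of $E_{t(\ell)}'$ induced on this vertex set. The sparse-cut condition in line~\ref{l:sparse} of \Prune then directly gives
\[
\bigl| E_{t(\ell)}'\bigl(A^\ell,\, V_0 \setminus B^\ell\bigr) \bigr| \;<\; \tfrac{\phi}{6}\, \vol_0(A^\ell),
\]
since $V_{t(\ell)} \setminus A^\ell = V_0 \setminus B^\ell$. To invoke the inductive hypothesis on $B^{\ell-1}$, I would observe that $E_{t(\ell)}' \subseteq E_{t(\ell-1)}'$ because only more adversarial deletions can occur between iterations $t(\ell-1)$ and $t(\ell)$, so the earlier cut-size bound on $B^{\ell-1}$ only tightens at time $t(\ell)$:
\[
\bigl| E_{t(\ell)}'\bigl(B^{\ell-1},\, V_0 \setminus B^{\ell-1}\bigr) \bigr| \;\leq\; \tfrac{\phi}{6}\, \vol_0(B^{\ell-1}).
\]
Since $B^\ell = B^{\ell-1} \sqcup A^\ell$ is a disjoint union, we get $E_{t(\ell)}'(B^\ell, V_0 \setminus B^\ell) = E_{t(\ell)}'(B^{\ell-1}, V_0 \setminus B^\ell) \sqcup E_{t(\ell)}'(A^\ell, V_0 \setminus B^\ell)$; the first piece is contained in $E_{t(\ell)}'(B^{\ell-1}, V_0 \setminus B^{\ell-1})$ because $V_0 \setminus B^\ell \subseteq V_0 \setminus B^{\ell-1}$. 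Adding the two bounds and using $\vol_0(B^\ell) = \vol_0(B^{\ell-1}) + \vol_0(A^\ell)$ closes the induction.

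For the volume inequality $\vol_0(B^\ell) \leq \vol_0(V_0 \setminus B^\ell)$, the plan is to piggyback on property~(iii) of the theorem, $\vol_0(P_t) \leq 6t/(5\phi)$, which the paper will prove by charging each secondary-deletion edge produced by \Prune against primary deletions --- a charging that leans precisely on the edge bound just established. Since the theorem restricts to $t \leq \phi^2 m/20$, property~(iii) then gives $\vol_0(B^\ell) \leq \vol_0(P_{t(\ell)}) \leq 3\phi m/50 \ll m$, and since $\vol_0(V_0) = 2m$, it follows immediately that $\vol_0(B^\ell) < m \leq \vol_0(V_0 \setminus B^\ell)$. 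The main obstacle I anticipate is a clean lockstep presentation: the edge bound, property~(iii), and the volume bound are mutually dependent across the indices $\ell$ and $t$, so one has to state the induction so that whenever the hypothesis is invoked on an earlier $B^{\ell'}$, it is used at time $t(\ell') \leq t(\ell)$ where $E_{t(\ell')}' \supseteq E_{t(\ell)}'$ keeps the cut-size bound monotone in the favorable direction.
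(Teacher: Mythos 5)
Your argument for the first inequality is correct and is essentially the paper's: split $E'_{t}(B^\ell, V_0\setminus B^\ell)$ into the edges leaving $B^{\ell-1}$ (bounded by induction, using that $E'_t$ only shrinks over time) and the edges leaving $A^\ell$ (bounded by the sparse-cut condition in line~\ref{l:sparse}), then add. Your handling of the time indices $t(\ell)$ is, if anything, slightly more careful than the paper's.

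The second inequality, however, has a genuine circularity that your closing paragraph gestures at but does not resolve. You want to deduce $\vol_0(B^\ell) \leq \vol_0(P_{t(\ell)}) \leq 6t(\ell)/(5\phi)$ from property~(iii) of \Cref{theorem:pruning-ravi}. But property~(iii) for $P_{t(\ell)}$ is itself derived from \emph{both} parts of this claim applied to $B^{\ell''}$ with $P_{t(\ell)} = B^{\ell''}$ and $\ell'' \geq \ell$: the expander inequality $|E_0(B^{\ell''}, V_0\setminus B^{\ell''})| \geq \phi\,\vol_0(B^{\ell''})$ only holds because the \emph{second} part guarantees $\vol_0(B^{\ell''})$ is the smaller of the two volumes (otherwise the $\min$ in \Cref{def:expander} sits on the wrong side). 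So your plan proves the volume inequality at level $\ell$ from a statement that presupposes the volume inequality at level $\ell'' \geq \ell$ — not an "earlier $B^{\ell'}$" as your lockstep framing requires. The difficulty is not the ordering of time indices but that a single step can prune a set $A^\ell$ of very large $\vol_0$, so one cannot argue that $\vol_0(B^\ell)$ stays below $m$ by continuity from $B^{\ell-1}$.

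The paper breaks the cycle with an ingredient your proposal is missing: a direct bound on $\vol_0(A^\ell)$ that does not rely on knowing which side of the cut $(B^\ell, V_0\setminus B^\ell)$ is smaller. From the induction hypothesis at level $\ell-1$ one gets $t \geq \tfrac{5\phi}{6}\vol_0(B^{\ell-1})$ (this is the non-circular use of the charging argument, since both parts of the claim are available at level $\ell-1$). Then one splits into cases according to whether $\vol_0(B^{\ell-1}) \geq \tfrac{\phi}{6}\vol_0(A^\ell)$. In the easy case $\vol_0(A^\ell) = O(\vol_0(B^{\ell-1})/\phi) = O(t/\phi^2)$ and $t \leq \phi^2 m/20$ finishes. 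In the hard case one applies the expander property to the cut around $A^\ell$ \emph{alone} — legitimate because the algorithm explicitly takes $A^\ell$ to be the smaller side of its cut — subtracts the few edges into $B^{\ell-1}$, and compares with the sparse-cut condition to conclude $t \geq \tfrac{4\phi}{6}\vol_0(A^\ell)$, again giving $\vol_0(B^\ell) \leq m$. Without some such direct control of $\vol_0(A^\ell)$, the second part of the claim does not follow.
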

	
	Before we prove this claim, we use it to prove the third property: note that
%	\[ |E_0(B^\ell, V_0 \setminus B^\ell)| \geq \phi \cdot
%	\vol_0(B^\ell) \]
	$ |E_0(B^\ell, V_0 \setminus B^\ell)| \geq \phi \cdot
	\vol_0(B^\ell) $ because $G_0$ is a $\phi$-expander, and moreover
	$\vol_0(B^\ell) \leq \vol_0(V_0 \setminus B^\ell)$ by the second
	part of \Cref{claim:CutPEdges}. Moreover, 
	the first part of the claim implies that 
	$ |E_t'(B^\ell, V_0 \setminus B^\ell)| \leq
	\frac{\phi}{6}\vol_0(B^\ell)$. Since $E_t'$ changes by at most
%	\[ |E_t'(B^\ell, V_0 \setminus B^\ell)| \leq
%	\frac{\phi}{6}\vol_0(B^\ell). \] Since $E_t'$ changes by at most
	one edge per deletion, it follows that
	\begin{gather}
		t ~\geq~ |E_0(B^\ell, V_0 \setminus B^\ell)| - |E_t'(B^\ell, V_0
		\setminus B^\ell)| \geq \phi \cdot \vol_0(B^\ell) -
		\frac{\phi}{6}\cdot \vol_0(B^\ell) ~=~ \frac{5 \phi}{6}
		\vol_0(B^\ell). \label{eq:1}
	\end{gather}
	Since $P_t$ is the same as $B^\ell$ for some $\ell$, this completes
	the proof.
\end{proof}

\begin{proof}[Proof of \Cref{claim:CutPEdges}]
	
		\begin{figure}
		\centering
		\includegraphics[width=4.5in]{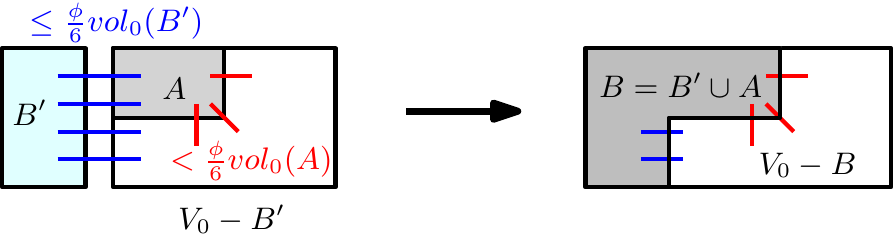}
		\caption{Dynamic expander pruning.}
		\label{fig:expander-pruning}
	\end{figure}
	
	We proceed by induction on $\ell$. When $\ell=0$ this is trivial
	since $B^\ell = \emptyset$. Suppose the claim holds for
	$B^{\ell-1}$, and we need to prove it for
	$B^\ell = B^{\ell-1} \cup A^\ell$. For sake of brevity, we denote
	$B' = B^{\ell - 1}, B = B^\ell$ and $A = A^\ell$ (see
	\Cref{fig:expander-pruning}). 	Any edge in $E_{t}'(B, V_{0} \setminus B)$ either lies in
	$E_t'(B', V_0 \setminus B) \sse E_t'(B', V_0 \setminus B')$ or in
	$E_{t}'(A, V_0 \setminus B)$. By the induction hypothesis, the
	former is at most $\frac{\phi}{6}\vol_0(B')$. The latter, by the
	condition in line~\ref{l:sparse}, is at most
	$(\phi/6) \cdot \vol_0(A)$. Summing the two, 
	we get $|E_{t}'(B, V_{0} \setminus B)| \leq (\phi/6)  \vol_0(B') +
	(\phi/6)  \vol_0(A)$, which proves the first part of the claim. 
	
	To prove second part of the claim, it suffices to show
	$\vol_0(B) \leq m$. By the induction hypothesis,
	$\vol_0(B') \leq m$. Further, by~(\ref{eq:1}) we have
	$\vol_0(B') \leq \frac{6t}{5 \phi}$. Now two cases arise:
	\begin{enumerate} [partopsep=0pt,topsep=0pt,parsep=0pt]
		\item $\vol_0(B')\geq \frac{\phi}{6}\vol_0(A)$: In this case, 
		\[ t~\stackrel{(\ref{eq:1})}{\geq}~ \frac{5\phi}{6}\vol_0(B')
		~\geq~ \frac{5\phi}{12}\vol_0(B') + \frac{5\phi^2}{72}\vol_0(A)
		~\geq~ \frac{5\phi^2}{72}\vol_0(B'+A) \; = \; \frac{5\phi^2}{72}
		\vol_0(B) \enspace.
		\]
		Since $t \leq \nicefrac{\phi^2 m}{20},$ it follows that
		$\vol_0(B) \leq m$.
		\item $\vol_0(B')< \frac{\phi}{6}\vol_0(A)$: Consider the cut
		$(A,V_0 \setminus B)$. Now,
		\begin{align*}
			|E_0(A,V_0 \setminus B)| &~=~ |E_0(A,V_0 \setminus
			A)|-|E_0(A,B')| ~\geq~ \phi
			\cdot \vol_0(A)- \vol_0(B') ~\geq~ \nicefrac56 \cdot \vol_0(A).
		\end{align*}
		On the other hand, $A$ is pruned by our algorithm because it is a
		sparse cut, i.e.,
		$|E'_t(A,V_0 \setminus B)| < \frac{\phi}{6} \cdot
		\vol_0(A)$. Therefore, the total number of deletions $t$ is at
		least
		$\frac{5\phi}{6}\vol_0(A)-\frac{\phi}{6}\vol_0(A)=\frac{4\phi}{6}\vol_0(A)$. Now
		we argue as in the first case. We get
		\[ t~>~ \frac{3\phi}{6}\vol_0(A) + \frac{\phi}{6}\vol_0(A) ~>~
		\frac{3\phi}{6}\vol_0(A) + \vol_0(B') ~ \geq ~ \frac{3\phi}{6}
		\vol_0(B'+A) ~=~ \frac{\phi}{2} \vol_0(B) \enspace.
		\]
		Since $t \leq \nicefrac{\phi^2 m}{20}$, we again get $\vol_0(B) \leq m.$    \qedhere
	\end{enumerate}
	
\end{proof}

\subsection{Dynamic Local-Search on Expanders with Deletions}
\label{subsection:local-search-pruning-expanders}

In this section we show how to  dynamically maintain a locally-optimal orientation of an expander, as parts of it are pruned out over time, thereby proving~\Cref{thm:iterated-local-search}. The algorithm appears
as~\Cref{algorithm:prune-fake-recolor}. 
We assume that the expander $G_{t-1}$ is maintained by dynamic pruning procedure $\Prune$ and satisfies the expansion properties of~\Cref{theorem:pruning-ravi}. We also assume that we have a locally optimal orientation $\dir{G}_{t-1}$ inductively maintained by~\Cref{algorithm:prune-fake-recolor}. Then, when the adversary deletes an edge $e_t$ and $\Prune$ computes a set $\Delta P$ of vertices to remove from $G_{t-1}$ to obtain a graph $G_t$, we show how to compute a locally optimal orientation $\dir{G}_t$ with a bounded number of flips.

Recall that we do this via a potential function argument. For any graph $H$ and current orientation $\dir{H}$, the potential of this orientation is $\Phi(\dir{H}) := \sum_{v \in V(H)} \disc(v)^2$. Indeed, the main issue is that there could be some vertices in $V_{t-1} \setminus \Delta P$ which are incident to many edges from $\Delta P$. Hence, if we remove $\Delta P$ in one shot, the potential of the residual graph could increase a lot. To resolve this, we replace $\Delta P$ by a set $F$ of an equal number $|\Delta P|$ of \emph{fake vertices}, and replace all the edges between $V_t := V_{t-1} \setminus \Delta P$ and $\Delta P$ with edges between $V_t$ and $F$ in a balanced round-robin manner to preserve the discrepancy of every vertex of $V_t$ w.r.t. its discrepancy in $\dir{G}_{t-1}$. Due to this balanced way of distributing the edges, we can show that the potential of the fake graph over $V_t \cup F$ is no more than that of $\dir{G}_{t-1}$, and moreover, even after deleting a subset $F' \sse F$ of fake vertices, $G_t \cup (F \setminus F')$ is an expander. These properties motivate running the following algorithm: transition from $G_{t} \cup F$ to $G_t$ by removing the fake vertices (and its incident edges) one-by-one, and re-running local search \emph{after each deletion}.

\begin{algorithm}[H]
	\caption{\textproc{Prune-and-Reorient}$(\dir{G}_{t-1}, e_t, \Delta P)$}
	\label{algorithm:prune-fake-recolor}
	\textbf{Input:}  Graph $G_{t-1}=(V_{t-1}, E_{t-1})$ with orientation $\dir{G}_{t-1}$, deleted edge
	$e_t$, and pruned set $\Delta P \sse V_{t-1}$.\\
	\textbf{Output:} A low-discrepancy orientation $\dir{G}_t$ for $G_t = (V_t, E_t)$ where $V_t := V_{t-1} \setminus \Delta P$ and $E_t$
	is the subset of $E_{t-1} \setminus \{e_t\} $ induced on $V_t$.  %via intermediate fake graphs.
	\begin{algorithmic}[1]
		%     \State Let $\Delta P \leftarrow \Prune(G_{t-1}, e_t)$, using \Cref{algo:prune}.
		% Recall that $V_t := V_{t-1} \setminus \Delta P$, and $G_t = G_{t-1}[V_t]$.
		\State Create fake vertices $F := \{f_1, \ldots , f_{N}\}$, where
		$N=|\Delta P|$, and define $H_t := (V_t \cup F, E_t)$. % where
		% $E_t$ is the subgraph of $G_{t-1}$
		% induced by $V_t$.
		\State Let $E^+$ be edges in $E_{t-1}(V_{t}, \Delta P) \setminus \{e_t\}$
		oriented from $V_{t}$ to $\Delta P$. Denote $E^+ = \{e_1, \ldots,
		e_r\}$, such that all edges incident to a vertex in $V_{t}$ appear consecutively.
		\For{$i=1, \ldots, r$}
		\State For edge $e_i = (v_i, p_i) \in E^+$, add edge $(v_i, f_{(i
			\bmod N)+1})$  into $H_t$ oriented from $v_i$ to $f_{(i \bmod N)+1}$. 
		\EndFor
		\State Repeat above loop for edges $E^-$ in $E_{t-1}(V_t, \Delta P) \setminus \{e_t\}$
		oriented into $V_t$; adds more edges to $H_t$.
		% \State \alert{If $\Delta P = \emptyset$ define remove edge $e_t$
		%   from $H_t$.}
		\State Run \local on $H_t$. \label{l:localsearchH} % the induced subgraph on the vertices $V_{t}\cup \{f_1,...,f_N\}$. 
		\For{each $1 \leq j\leq N$}
		\State Remove vertex $f_j$ and incident edges from $H_t$. \label{l2:remfake}
		\State Run \local on (the current graph) $H_t$. \label{l:localsearchj}
		\EndFor
		%\State \alert{Replace $G_{t-1}$ by $G_t$ and re-insert all the edges incident with $\Delta P$.} {\color{blue}may be move this line to Algorithm 6}
		%\EndProcedure
		\State Define the final orientation  $\dir{G}_t$ to be the final orientation $\dir{H}_{t}$ of $H_t$ (there are no fake vertices).
	\end{algorithmic}
\end{algorithm}

%\agnote{Double-check that proofs are OK when $\Delta P = \emptyset$, 	so that only $e_t$ is removed.} \vgnote{Commented out some existing stuff and made some changes in the algorithm and analysis, please cross check.}

\begin{proof}[Proof of~\Cref{thm:iterated-local-search}] 
Let $H_t^j$ denote the graph $H_t$ after the
removal of the fake vertices $f_1, \ldots, f_{j}$, so that
$H_t^0 = H_t$. Since $t$ is fixed, we suppress the subscript $t$ for
the rest of this discussion.
%\Cref{lemma:expansion-fake-vertices,lemma:discrepancy-fake-vertices} show that local-search works on the graphs $H^j$.

We begin with some useful claims towards bounding the total number of flips $L_t$.
Firstly, we show that each of the intermediate graphs is a reasonable expander. The idea is that the pruned set $\Delta P$ is small compared to $V_{t}$, because whenever it becomes sufficiently large, the dynamic expander decomposition algorithm rebuilds the expander and we can charge the recourse to the adversarial deletions. As a result, both the number of fake vertices as well as their volume is substantially smaller than that of the ``real'' vertices $V_t$, and so the expansion properties of $V_t$ are approximately retained in each of the intermediate graphs $H^j$.

\begin{lemma}
	\label{lemma:expansion-fake-vertices}
	For each $j \in \{0, \ldots, N\}$, the graph $H^j$ is a $\phi/36$-expander.
\end{lemma}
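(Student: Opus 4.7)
The plan is to prove the expansion of $H^j$ by a direct cut argument, combining the strong expansion of $G_t$ (inherited from \Cref{theorem:pruning-ravi}) with structural properties of the balanced round-robin construction of fake edges. For any cut $(A,\overline{A})$ in $H^j$, write $A_v = A \cap V_t$ and $A_f = A \cap F^j$ (and similarly for $\overline{A}$). The cut decomposes as
\[ |E_{H^j}(A,\overline{A})| \;=\; |E_{G_t}(A_v,\overline{A}_v)| + |E_f(A_v,\overline{A}_f)| + |E_f(\overline{A}_v,A_f)|, \]
and I will lower-bound this by using different pieces in different cases.

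The two ingredients I will use are the following. First, \Cref{theorem:pruning-ravi}(ii) applied to $G_t$ gives that for every $S \subseteq V_t$,
\[ |E_{G_t}(S,V_t\setminus S)| \;\geq\; (\phi/6)\,\min\bigl(\vol_0(S),\,\vol_0(V_t\setminus S)\bigr). \]
Second, the round-robin construction creates fake edges incident to $v \in V_t$ in bijection with edges from $v$ to $\Delta P$ in $G_{t-1}\setminus\{e_t\}$; hence for any $S\subseteq V_t$ the fake volume $\vol_f(S) \leq |E_{G_{t-1}}(S,\Delta P)| \leq \vol_{G_{t-1}}(S) \leq \vol_0(S)$. (Fake vertices themselves have no internal fake edges, so all their edges cross into $V_t$.)

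With these in hand, I would pick the side $B \in \{A,\overline{A}\}$ with smaller $\vol_0$-weight in $V_t$, and let $W := \vol_0(B_v) \leq \vol_0(\overline{B}_v)$. Strong expansion then gives a $G_t$-contribution of at least $(\phi/6)W$ to the cut. For the fake edges, since at most $\vol_f(B_v)\leq W$ of the fake edges out of $B_f$ can land in $B_v$, at least $\vol_{H^j}(B_f) - W$ of them cross the cut. Now I do a two-case analysis on $\vol_{H^j}(B_f)$ versus $W$: if $\vol_{H^j}(B_f)\leq 2W$, then $\vol_{H^j}(B) \leq 2W + \vol_{H^j}(B_f) \leq 4W$ and the $(\phi/6)W$ bound yields cut $\geq (\phi/24)\vol_{H^j}(B)$; otherwise $\vol_{H^j}(B_f) > 2W$, so $\vol_{H^j}(B) < 2\vol_{H^j}(B_f)$ and cut $\geq \vol_{H^j}(B_f) - W \geq \vol_{H^j}(B_f)/2 \geq \vol_{H^j}(B)/4$. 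Both cases give cut $\geq (\phi/36)\vol_{H^j}(B)$, and since trivially $\vol_{H^j}(B) \geq \min(\vol_{H^j}(A),\vol_{H^j}(\overline{A}))$, we obtain the desired expansion bound.

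The main obstacle is the case where the cut has most of its $H^j$-volume on the fake side rather than in $V_t$: here the $G_t$-expansion contributes very little, so the bound must come from the pigeonhole-type argument that $B_v$ can absorb only $W$ fake edges from $B_f$, while $B_f$ carries far more. The key trick that makes the case analysis uniform is to choose $B$ by $\vol_0$-measure in $V_t$ rather than by $\vol_{H^j}$; this allows us to invoke strong expansion cleanly and to relate the small $\vol_f(B_v)$ bound to $W$ on the same side.
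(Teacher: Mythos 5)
Your proof is correct, and while it shares the broad strategy of the paper's proof --- split the cut into the real contribution inside $V_t$ and the fake contribution from $F^j$, invoke the strong expansion of \Cref{theorem:pruning-ravi}(ii) for the former, and run a two-case analysis depending on whether the fake side dominates --- the route through the details is genuinely different. The paper fixes $S$ to be the smaller side in $\vol^j$-measure and then needs \Cref{cl:compl} (which in turn relies on the global bound $\vol_0(P_t)\leq \phi m/4$, i.e., on $t\leq \phi^2 m/20$) to argue that $S_r$ is also close to being the smaller side of $V_t$, so that the $\min$ in the strong-expansion guarantee can be replaced by $\vol^j(S_r)$ at the cost of a factor close to $2$. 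You instead anchor the analysis on the side $B$ with the smaller $\vol_0$-mass in $V_t$, which lets you apply strong expansion with no balancing claim at all, and you recover the required form of the bound only at the very end via the trivial inequality $\vol^j(B)\geq\min(\vol^j(A),\vol^j(\overline{A}))$. Correspondingly, your fake-edge accounting (the pointwise bound $\vol_f(S)\leq\vol_0(S)$, so $B_v$ can absorb at most $W$ fake edges emanating from $B_f$) replaces the paper's dichotomy on whether at least half of $S_f$'s fake edges cross into $V_t\setminus S_r$; these are dual pigeonhole arguments, since a fake edge out of $S_f$ that does not cross must land in $S_r$. The net effect is that your argument is somewhat more self-contained --- it never uses the bound on $\vol_0(P_t)$, so it would apply to any pruned set as long as $G_t$ retains strong expansion --- and it yields a marginally better constant ($\phi/24$ in the binding case versus the paper's $\phi/36$), while the paper's version keeps all volumes in the single measure $\vol^j$ throughout, at the price of the extra balancing claim.
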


\begin{proof}
	Let $F^j := \{f_{j+1}, \ldots, f_{N} \}$ be a suffix of the fake
	vertices. The vertex set of $H^j$ is $W^j := V_t \cup F^j$. Let
	$E^j$ denote the set of edges in $H^j$---these are the union of
	edges between $V_t$, i.e., $E_{t-1}(V_t)$, and those going between
	$F^j$ and $V_t$.  Let $S$ be a subset of vertices in $H^j$; we need
	to show that
	\begin{align}
		\label{eq:expproof}
		|E^j(S, W^j \setminus S)|~ \geq ~ \frac{\phi}{36} \min(\vol^j(S), \vol^j(W^j \setminus S)), 
	\end{align}
	where $\vol^j$ denotes the volume with respect to $E^j$. Let
	$\vol_t$ denote the volume with respect to the edges
	$E_t = E_{t-1}(V_t)$. Recall that $\vol_0$ denotes the volume with
	respect to $G_0$ (i.e., the expander graph before any deletions were
	performed).
	
	Without loss of generality, assume that
	$\vol^j(S) \leq \vol^j(W^j \setminus S)$. Let $S_r$ and $S_f$
	(``real'' and ``fake'') denote $S \cap A_t$ and $S \cap F^j$
	respectively (see \Cref{fig:fake-graphs-expansion}).
	
	\begin{figure}
		\centering
		\includegraphics{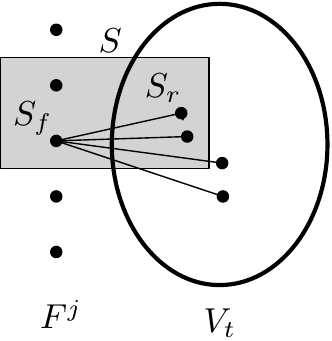}
		\caption{Expansion property of intermediate graphs $W^j$.}
		\label{fig:fake-graphs-expansion}
	\end{figure}
	
	\begin{claim}
		\label{cl:compl}
		$\vol_t(S_r) \leq \frac{2}{2-\phi} \cdot \vol_t(V_t \setminus S_r)$.
		% $\vol_t(S_r) \leq (1+3 \gamma \phi) \cdot \vol_t(V_t \setminus S_r)$.
	\end{claim}
	\begin{proof}
		We know that $V_0 = P_t \cup V_t$. Since
		$t \leq \phi^2 m/20$, \Cref{theorem:pruning-ravi}(iii)
		shows that $\vol_0(P_t) \leq \frac{6}{5\phi} \cdot \frac{\phi^2 m}{20} \leq \phi m/4$. Therefore, $\vol^j(V_t)$ is at least $2\left(m - \frac{\phi m}{4}\right)=2m-\frac{\phi m}{2}$. 
		Now,
		$$\vol^j(S_r) ~\leq~ \vol^j(S) ~\leq~ \vol^j(W^j \setminus S) ~\leq~ \vol^j(V_t \setminus S_r) + \vol_0(P_t). $$
		Using $\vol_0(P_t) \leq \phi m/2$, this gives $\vol^j(S_r)-\vol^j(V_t\setminus S_r)\leq \phi m/2$. We also know that
		$\vol^j(S_r) + \vol^j(V_t \setminus S_r) = \vol^j(V_t) \geq 2m -
		\frac{\phi m}{2}$. 
		Eliminating $m$ from these two inequalities gives the claim. 
	\end{proof}
	\Cref{theorem:pruning-ravi} shows that 
	$$|E^j(S_r, V_t \setminus S_r)| ~\geq~ \frac{\phi}{6}
	\min(\vol_0(S_r),\vol_0(V_t \setminus S_r)) ~\geq~ \frac{\phi}{6} \cdot \frac{2-\phi}{2} \cdot \vol^j(S_r)~\geq~ \frac{\phi}{12} \vol^j(S_r),$$
	where the second-last inequality follows from \Cref{cl:compl}, and the last inequality uses $\phi\leq 1$.
	
	We now consider two cases:
	\begin{enumerate} [partopsep=0pt,topsep=0pt,parsep=0pt]
		\item $|E^j(S_f, V_t \setminus S_r)| \geq \vol^j(S_f)/2$: In this
		case,
		$$ |E^j(S, W^j \setminus S)|  ~\geq~  |E^j(S_r, V_t \setminus S_r)| +
		|E^j(S_f, V_t \setminus S_r)| ~\geq~ \frac{\phi}{12} \vol^j(S_r) +\frac{1}{2}\vol^j(S_f) ~\geq~ \frac{\phi}{12} \vol^j(S). $$ 
		
		\item $|E^j(S_f, V_t \setminus S_r)| \leq \vol^j(S_f)/2$: This
		implies $\vol^j(S_r) \geq \vol^j(S_f)/2$. Therefore,
		$$ |E^j(S, W^j \setminus S)| ~\geq~ |E^j(S_r, V_t \setminus S_r)|
		~\geq~ \frac{\phi}{12} \vol^j(S_r) ~=~ \frac{\phi}{36} (\vol^j(S_r) +
		2\vol^j(S_r)) ~\geq~ \frac{\phi}{36} \vol^j(S).$$
	\end{enumerate}
	Hence the proof of \Cref{lemma:expansion-fake-vertices} follows.
\end{proof}

Next we show that \local gives low discrepancy on the graphs $H^j$,
\emph{even though they may not be weakly-regular}.

\begin{lemma}
	\label{lemma:discrepancy-fake-vertices}
	For every $j \in \{0, \ldots, N\}$, the discrepancy of $H^j$ at a
	local optimum is $O \big(\frac{\log m}{\phi^2 \gamma} \big)$.
\end{lemma}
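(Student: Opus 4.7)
The plan is to reduce to Corollary~\ref{cor:weakerlocalsearch}, the variant of the expander local-search theorem that only requires weak-regularity to hold on all but an $\nf{1}{7}$-fraction of the vertices. By Lemma~\ref{lemma:expansion-fake-vertices} we already know that $H^j$ is a $\phi/36$-expander, so what remains is to identify a set of ``good'' vertices (almost all of $V(H^j)$) whose degree is at least $2\gamma' m_j / n_j$ for some $\gamma' = \Omega(\phi\gamma)$, where $m_j=|E(H^j)|$ and $n_j=|V(H^j)|$. The natural candidates are the real vertices $V_t$; the fake vertices in $F^j$ form the exceptional set.

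First I would bound how many fake vertices there can be. By \Cref{theorem:pruning-ravi}\,(iii) and $t \leq \phi^2 m/20$ we have $\vol_0(P_t) \leq \phi m /4$, and the $\gamma$-weak-regularity of $G_0$ gives minimum degree $\delta_0 \geq 2\gamma m/n$, so $|F^j| \leq N = |\Delta P| \leq |P_t| \leq \vol_0(P_t)/\delta_0 \leq \phi n/(8\gamma)$. For $\phi \leq \gamma$ (which holds in our regime where $\phi = \Theta(1/\log n)$ and $\gamma = \Theta(\phi)$ from \Cref{theorem:gkks20}), this is at most $n_j/7$, so real vertices constitute at least a $6/7$-fraction of $V(H^j)$.

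Next I would lower-bound the degree of each real vertex $v \in V_t$ inside $H^j$. By \Cref{theorem:pruning-ravi}\,(ii) the min degree of $v$ inside $G_t$ is at least $\phi \delta_0/6 \geq \phi \gamma m/(3n)$, and since the edges of $G_t$ are present in $H^j$ the same bound holds for $\deg_{H^j}(v)$. On the other hand, $m_j \leq m$ and $n_j \geq |V_t| \geq 6n/7$, so $m_j/n_j \leq 7m/(6n)$. Therefore setting $\gamma' := \phi\gamma/7$ ensures $2\gamma' m_j/n_j \leq 2 \cdot \phi\gamma/7 \cdot 7m/(6n) = \phi\gamma m/(3n) \leq \deg_{H^j}(v)$, so every real vertex satisfies the weak-regularity condition with parameter $\gamma' = \Omega(\phi\gamma)$.

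Finally, I would invoke Corollary~\ref{cor:weakerlocalsearch} with conductance $\phi/36$ and weak-regularity parameter $\gamma' = \Omega(\phi\gamma)$, since the exceptional set (the fake vertices) has size at most $n_j/7$. This yields a local-optimum discrepancy of $O\!\left(\frac{\log m_j}{(\phi/36)\cdot \gamma'}\right) = O\!\left(\frac{\log m}{\phi^2 \gamma}\right)$, as claimed. The main technical obstacle I anticipate is ensuring the bookkeeping between $G_0$, $G_t$ and $H^j$ remains consistent: the weak-regularity condition is expressed in terms of $m_j/n_j$, whereas the degree lower bound on real vertices comes from the parameters of the original expander $G_0$, and gluing these together cleanly requires the pruned volume to be much smaller than $m$, which is exactly what \Cref{theorem:pruning-ravi}\,(iii) buys us in the regime $t \leq \phi^2 m/20$.
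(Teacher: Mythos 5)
Your overall strategy is exactly the paper's: show $H^j$ is a $\phi/36$-expander via \Cref{lemma:expansion-fake-vertices}, treat the fake vertices as the exceptional set in \Cref{cor:weakerlocalsearch}, lower-bound the degree of every real vertex via \Cref{theorem:pruning-ravi}\,(ii), and conclude. However, there is a concrete error in the step that bounds the number of fake vertices. You use the slack estimate $\vol_0(P_t) \leq \phi m/4$ and deduce $|F^j| \leq \phi n/(8\gamma)$, and then assert this is at most $n_j/7$ ``for $\phi \leq \gamma$, which holds in our regime.'' But in the regime actually in force, \Cref{theorem:gkks20} gives $\gamma = \phi/4$, so $\phi = 4\gamma > \gamma$: the inequality $\phi \leq \gamma$ is false. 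Plugging $\gamma = \phi/4$ into your own bound gives only $|F^j| \leq n/2$, which is nowhere near the $n_j/7$ threshold needed to invoke \Cref{cor:weakerlocalsearch}, so the proof as written does not close.

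The gap is repairable within your framework, and the repair is what the paper does: use the sharp form of \Cref{theorem:pruning-ravi}\,(iii), namely $\vol_0(P_t) \leq 6t/(5\phi) \leq \tfrac{6}{5\phi}\cdot\tfrac{\phi^2 m}{20} = \tfrac{3\phi m}{50}$, rather than the weaker $\phi m/4$ (which suffices for \Cref{cl:compl} but not here). Combined with $\delta_0 \geq 2\gamma m/n$ and $\gamma=\phi/4$ this gives $|P_t| \leq \tfrac{3\phi n}{100\gamma} = \tfrac{3n}{25}$, hence $n_j \geq \tfrac{22n}{25}$ and $|F^j| \leq |P_t| \leq \tfrac{3n_j}{22} \leq n_j/7$. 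With that correction, your degree computation for real vertices (min degree $\geq \phi\gamma m/(3n)$ in $G_t$, compared against $2m_j/n_j \leq O(m/n)$) and the final invocation of \Cref{cor:weakerlocalsearch} with parameters $\phi/36$ and $\gamma'=\Omega(\phi\gamma)$ go through and yield the claimed $O\big(\tfrac{\log m}{\phi^2\gamma}\big)$ bound.
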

\begin{proof}
	We apply~\Cref{cor:weakerlocalsearch} to
	$H^j$. \Cref{lemma:expansion-fake-vertices} implies that $H^j$ is
	$\nf{\phi}{36}$-expander, so it suffices to show that a large fraction of
	the vertices of $H^j$ have large degree.
	
	First of all, since we allow at most 
	$D = \nf{\phi^2 m}{20}$ deletions, $\vol_0(P_t)\leq \frac{6}{5\phi}
	\cdot \frac{\phi^2 m}{20}=\frac{3\phi m}{50}$ by \Cref{theorem:pruning-ravi}(iii). Using $\gamma$-weak-regularity of $G_0$ and $\gamma=\phi/4$ (\Cref{theorem:gkks20}), this implies $|P_t|\leq \frac{n}{2m\gamma}\cdot \frac{3\phi m}{50} = \frac{3n}{25}$. So $H^j$ has at least $22n/25$ vertices. It follows that $|P_t|\leq 3n'/22 \leq n'/7$, where $n'$ denotes the number of vertices in $H^j$. Now, \Cref{theorem:pruning-ravi}(ii) implies that the degree of any vertex belonging to set $V_t$ in the graph $G_t$
	is at least $\phi \delta/6$, where $\delta \geq 2\gamma m/n$. (This
	uses that $G_0$ is $\gamma$-weakly-regular). Thus, the degree of any
	vertex in $V_t$ in the intermediate graph $H^j$ is also at least
	$$ \frac{\phi\delta}{6} ~=~ \frac{2 \gamma \phi  m}{6n} ~\geq~ \frac{22 \gamma \phi  m'}{75n'} ~=~ \left(\frac{11 \gamma \phi}{75}\right)\cdot \frac{2m'}{n'}, $$
	where $m',n'$ denote the number of edges and vertices in $H^j$
	respectively (since $m' \leq m$ and $n' \geq 22 n/25$ as shown
	above). The desired result now follows
	from~\Cref{cor:weakerlocalsearch}.
\end{proof}

We are now ready to conduct the potential-based analysis for bounding the number of flips. We bound the recourse by studying the $\ell_2$-potential $ \Phi(\dir{H}^j) := \sum_{w \in V(H^j)} \disc(w)^2 $ as we transition from $G_{t-1}$ to $G_t$. Indeed, note that a flip made by local search decreases the potential by at least 1, so the recourse is at
most the \emph{total increase} in the potential. This increase happens
during~\Cref{algorithm:prune-fake-recolor} when we replace $\Delta P$
by $F$ to get the graph $H_t$ (with its resulting orientation), and
when we remove the fake vertex $f_j$ from $H_t^{j-1}$ to get $H_t^j$
(in line~\ref{l2:remfake}). We bound the potential increase during each
of these steps. 

We give some notation first. Let
$\dir{G}_{t-1}, \dir{G}_t, \dir{H}_t^j$ be the subgraphs
$G_{t-1}, G_t, H_t^j$ oriented after running \local respectively. Let $\dir{G}'_{t-1}$ denote $\dir{G}_{t-1} \setminus \{e_t\}$. Recall that $E^+$, $E^-$ (and therefore $H_t$) are defined using edges of $\dir{G}'_{t-1}$. Let
$\dirp{H_t}$ be the orientation of $H_t$ just after we replace
$\Delta P$ by $F$ in $\dir{G}'_{t-1}$ (i.e., before
line~\ref{l:localsearchH}). Similarly let $\dirp{H^j}$ (we again
suppress the subscript $t$ for ease of notation) be the orientation of
$H^j$ just after we remove $f_j$ but before we run \local on it (in
line~\ref{l:localsearchj}). 
%
%\alert{Let $\dirp{H}^0$ be the orientation after the initial run of \local in line~\ref{l:localsearchH}, but before $f_1$ is removed.} \vgnote{$\dirp{H}^0$ is not used anywhere and it doesn't follow the notation defined earlier. So can we remove this line?} 
Since edges are added in a round-robin manner between $V_t$ and $\Delta P$ in
$H_t$, there are no parallel edges. 

\begin{claim}
	\label{cl:potential0}
	$\Phi(\dir{G}'_{t-1}) - \Phi(\dir{G}_{t-1})$ is at most $O\big( \frac{\log m}{\phi^2 \gamma}\big)$. 
\end{claim}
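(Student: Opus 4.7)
The plan is to observe that $\dir{G}'_{t-1}$ and $\dir{G}_{t-1}$ differ in only a single edge $e_t=(u,v)$, so the potential can only change at the two endpoints $u,v$. Denote the \emph{signed} discrepancies (before taking absolute values) by $d_w := |\delta_{\mathrm{in}}(w)| - |\delta_{\mathrm{out}}(w)|$ in $\dir{G}_{t-1}$, so that $\disc(w) = |d_w|$. Without loss of generality assume $e_t$ is oriented $u \to v$ in $\dir{G}_{t-1}$. Removing this edge increases $d_u$ by $1$ and decreases $d_v$ by $1$, while leaving every other signed discrepancy unchanged.

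Hence, expanding squares,
\[
\Phi(\dir{G}'_{t-1}) - \Phi(\dir{G}_{t-1}) \;=\; (d_u+1)^2 - d_u^2 + (d_v-1)^2 - d_v^2 \;=\; 2(d_u - d_v) + 2 \;\leq\; 4\,\disc(\dir{G}_{t-1}) + 2,
\]
where the last inequality uses $|d_u|,|d_v| \leq \disc(\dir{G}_{t-1})$.

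It therefore remains only to bound $\disc(\dir{G}_{t-1})$. By the induction maintained by \Cref{algorithm:prune-fake-recolor}, the orientation $\dir{G}_{t-1}$ is the output of a local-search run on an intermediate graph $H^{N}$ from the previous step (or the initial locally optimal orientation in the base case), and $G_{t-1}$ is a $\phi/6$-expander by \Cref{theorem:pruning-ravi}(ii). Applying \Cref{lemma:discrepancy-fake-vertices} (which only needs the expansion and the mild regularity guaranteed by \Cref{theorem:pruning-ravi} together with the $\gamma$-weak-regularity of $G_0$), we get $\disc(\dir{G}_{t-1}) = O\!\big(\tfrac{\log m}{\phi^2 \gamma}\big)$. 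Substituting into the identity above yields the claim.

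The step I expect to be routine but worth being careful about is justifying that the bound of \Cref{lemma:discrepancy-fake-vertices} indeed applies to $\dir{G}_{t-1}$ itself (not just to the intermediate fake graphs $H^j$); this is immediate since $H^N_{t-1}$ has no fake vertices and coincides with $\dir{G}_{t-1}$. No other subtlety is required, and the proof is a short two-line computation once this inductive hypothesis is invoked.
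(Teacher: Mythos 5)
Your proposal is correct and follows essentially the same route as the paper: only the two endpoints of $e_t$ change discrepancy, giving a potential increase of at most $2(2d+1)$ where $d$ is the maximum discrepancy of $\dir{G}_{t-1}$, which is $O\big(\tfrac{\log m}{\phi^2\gamma}\big)$ because $G_{t-1}$ retains expansion and weak regularity by \Cref{theorem:pruning-ravi} and the orientation is locally optimal. The paper invokes \Cref{theorem:local-search-expander} directly rather than \Cref{lemma:discrepancy-fake-vertices}, but these give the same bound here, so the difference is immaterial.
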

\begin{proof}
	Recall that $\dir{G}'_{t-1}$ is obtained by removing $e_t$ from $\dir{G}_{t-1}$. Let $d$ be the maximum discrepancy of a vertex in $\dir{G}_{t-1}$. \Cref{theorem:pruning-ravi} implies that $G_{t-1}$ is an $\Omega(\phi\gamma)$-weakly-regular $\Omega(\phi)$-expander, so \Cref{theorem:local-search-expander} implies that the discrepancy $d$ is $O \big( \frac{\log m}{\phi^2\gamma} \big).$ Hence the removal of $e_t$ from $\dir{G}_{t-1}$ can increase the potential by at most $2((d+1)^2 - d^2) = 2(2d+1) = O\big( \frac{\log m}{\phi^2 \gamma}\big)$, thus proving the claim. 
\end{proof}

Next, we show that the potential cannot increase while going from $\dir{G}'_{t-1}$ to $\dirp{H_t}$.  This uses the fact that we essentially re-distributed all the edges in $E^+$ and $E^-$ in a balanced round-robin manner. 
\begin{claim}
	\label{cl:potential1}
	$\Phi(\dirp{H_t}) - \Phi(\dir{G}'_{t-1}) \leq 0$.
\end{claim}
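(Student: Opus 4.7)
The plan is to reduce the claim to a convexity statement about integer sequences of fixed sum, and then verify that the round-robin construction realizes the extremal configuration. First, I will observe that for every $v \in V_t$, its in-degree and out-degree are the same in $\dirp{H_t}$ as in $\dir{G}'_{t-1}$: each edge of $\dir{G}'_{t-1}$ incident to $v$ is either untouched (when both endpoints lie in $V_t$) or has its $\Delta P$-endpoint rerouted to a fake vertex, which does not change the orientation as seen from $v$. Hence $\disc_{\dirp{H_t}}(v) = \disc_{\dir{G}'_{t-1}}(v)$ for every $v \in V_t$, and, writing $\chi(x) := (\text{in-deg} - \text{out-deg})(x)$ so that $\disc(x)^2 = \chi(x)^2$,
\[
\Phi(\dirp{H_t}) - \Phi(\dir{G}'_{t-1}) \;=\; \sum_{f \in F} \chi(f)^2 \;-\; \sum_{p \in \Delta P} \chi(p)^2.
\]

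Second, I will check that both sums are over integer sequences of the same length $N := |\Delta P| = |F|$ and with the same total value $r^+ - r^-$, where $r^+ := |E^+|$ and $r^- := |E^-|$. On the $\Delta P$ side, this uses that edges internal to $\Delta P$ contribute equally to the aggregate in-degree and out-degree of $\Delta P$ and therefore cancel in $\sum_p \chi(p)$, leaving only the $E^+$ and $E^-$ contributions; on the $F$ side it is immediate from the construction.

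The main step will be to show that the round-robin distribution makes $\sum_{f \in F} \chi(f)^2$ the minimum possible value of $\sum_j y_j^2$ over integer sequences $(y_j)_{j=1}^N$ with $\sum_j y_j = r^+ - r^-$. Writing $r^+ = q^+ N + s^+$ and $r^- = q^- N + s^-$ with $0 \le s^+, s^- < N$, and $\delta := q^+ - q^-$, I will use the fact that both loops assign edges starting at $f_2$: this makes the set of fake vertices receiving an ``extra'' in-edge and the set receiving an extra out-edge both contiguous prefixes of $f_2, f_3, \ldots$, overlapping in exactly $\min(s^+, s^-)$ vertices. A short case analysis on the sign of $s^+ - s^-$ then shows that each $\chi(f_j)$ lies in $\{\delta, \delta+1\}$ (if $s^+ \ge s^-$) or $\{\delta-1, \delta\}$ (otherwise), with multiplicities that match the balanced distribution known to minimize $\sum_j y_j^2$ under a fixed integer sum (by convexity of $y \mapsto y^2$ plus an exchange argument). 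Since the $\chi(p)$'s form an arbitrary integer sequence of length $N$ with the same total, $\sum_p \chi(p)^2 \geq \sum_f \chi(f)^2$, establishing the claim.

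The main subtlety is exactly this round-robin optimality: it is essential that both loops start from the same index, so that the two ``extras'' overlap maximally rather than compound, keeping the $\chi(f_j)$ values in a width-one window around their common mean. If the two loops were offset, the extras could disjointly accumulate and push $\chi(f_j)$ up to $\delta \pm 2$, which would violate the balanced-minimum property and break the inequality; I expect verifying this overlap cleanly (with the right indexing through the modular arithmetic) to be the only nontrivial bookkeeping in the proof.
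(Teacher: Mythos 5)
Your proposal is correct and follows essentially the same route as the paper: reduce to comparing $\sum_{f\in F}\disc(f)^2$ against $\sum_{p\in\Delta P}\disc(p)^2$ (both integer sequences of length $N$ with equal sum, since edges internal to $\Delta P$ cancel and discrepancies on $V_t$ are preserved), and then show the round-robin assignment puts all fake-vertex discrepancies in a width-one window, which is the $\ell_2$-minimal configuration for a fixed integer sum. The paper establishes the width-one window via a running invariant over the edge insertions rather than your direct modular-arithmetic computation of the final configuration, but this is only a cosmetic difference; you also correctly flag the one real subtlety, namely that the $E^-$ loop must restart at the same fake vertex as the $E^+$ loop.
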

\begin{proof}
	For a given sum $s$ and variables satisfying $\sum_{i=1}^N x_i = s$, the optimal (w.r.t. $\ell_2$ norm) integer assignment of variables has $x_i \in \{\lfloor \frac{a_+ - a_-}{N} \rfloor, \lfloor \frac{a_+ - a_-}{N} \rfloor + 1\}$ for all $i$ and is unique up to permutations. For our problem, $s=a_+ - a_-$ and the $x_i$'s denote the discrepancies of the fake vertices. So it suffices to prove the following: 
	\begin{claim}
		For each addition of an edge $(v,f)\in (E^+\cup E^-)$ in \Cref{algorithm:prune-fake-recolor}, $\exists\, d'$ such that just after the addition, $\{\disc(f')\,|\,f'\in F\}\subseteq \{d',d'+1\}$. In particular, after the addition of all edges, $\{\disc(f')\,|\,f'\in F\}\subseteq \{\lfloor \frac{a_+ - a_-}{N} \rfloor,\lfloor \frac{a_+ - a_-}{N} \rfloor+1\}$.    
	\end{claim}
	\begin{proof}
		Recall that we first add the edges in $E^+$. Since they are added in round robin fashion, the claim is trivially true up to this point. At this point, there will be some prefix of vertices $F'\subseteq F$ with discrepancy $d'+1$ and the rest have discrepancy $d'$. Now consider the addition of edges in $E^-$. If $|E^-|\leq |F'|$, then nodes in $F\setminus F'$ remain unchanged and the discrepancy of some nodes in $F'$ will become $d'$, thus still satisfying the desired property. If $|E^-| > |F'|$, then after $|F'|$ insertions, all nodes will have discrepancy $d'$, and after this point, discrepancies decrease by $1$ in a round-robin fashion, thus maintaining the desired property. In particular, after the insertion of all edges, we have $d'=\lfloor \frac{a_+ - a_-}{N} \rfloor$. This is because if $\frac{a_+ - a_-}{N}$ is integral, then all vertices in $F$ will have discrepancy $\frac{a_+-a_-}{N}=\lfloor \frac{a_+-a_-}{N} \rfloor$ and if $\frac{a_+ - a_-}{N}$ is non-integral, since it is the average discrepancy, it is a convex combination of $d'$ and $d'+1$, implying $d' = \lfloor \frac{a^- - a^+}{N} \rfloor$. 
	\end{proof}
	As explained in the beginning of the proof, the claim immediately implies that $\Phi(\dirp{H_t}) \leq \Phi(\dir{G}'_{t-1})$.
\end{proof}

\begin{claim}
	\label{cl:potential2}
	For any $j \in \{1, \ldots, N\}$, if $\delta(f_j)$ is the degree of
	$f_j$ in $H_t$, the potential change is
	\[ \Phi(\dirp{H}^{j}) - \Phi(\dir{H}^{j-1}) ~\leq~ O\bigg(
	\frac{\delta(f_j) \; \log m}{\phi^2 \gamma}\bigg). \]
\end{claim}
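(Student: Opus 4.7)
The plan is to bound the potential jump by a direct Lipschitz-style estimate. Observe that $\dirp{H}^{j}$ is obtained from $\dir{H}^{j-1}$ by deleting the fake vertex $f_j$ together with its $\delta(f_j)$ incident edges, while preserving the orientation of every other edge. Hence the only vertices whose discrepancies differ between the two orientations are $f_j$ itself and the neighbors of $f_j$ in $H^{j-1}$.

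The first step is to exploit the round-robin construction of \Cref{algorithm:prune-fake-recolor} (already used in \Cref{cl:potential1}) to conclude that no pair of vertices shares two edges in $H_t$, so each neighbor $v$ of $f_j$ in $H^{j-1}$ is incident to exactly one edge of $f_j$. Removing that single edge therefore changes $\disc(v)$ by exactly $\pm 1$, so $\disc(v)^2$ changes by at most $2|\disc(v)|+1$. Meanwhile, removing $f_j$ itself deletes the nonnegative term $\disc(f_j)^2$ from $\Phi$, which only decreases the potential.

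The second step is to bound $|\disc(v)|$ uniformly. Here I invoke \Cref{lemma:discrepancy-fake-vertices}: since $\dir{H}^{j-1}$ is produced by running \local on $H^{j-1}$ (line \ref{l:localsearchH} when $j=1$ and line \ref{l:localsearchj} when $j \geq 2$), it is a local optimum, so every vertex has discrepancy at most $D := O\!\left(\frac{\log m}{\phi^2 \gamma}\right)$. Summing the per-neighbor bound $2|\disc(v)|+1 \leq 2D+1$ over the $\delta(f_j)$ neighbors of $f_j$ and discarding the $-\disc(f_j)^2$ term yields
\[
\Phi(\dirp{H}^{j}) - \Phi(\dir{H}^{j-1}) \;\leq\; \delta(f_j)\,(2D+1) \;=\; O\!\left(\frac{\delta(f_j)\,\log m}{\phi^2 \gamma}\right),
\]
which is the desired bound. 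There is no serious obstacle; the proof is essentially a bookkeeping argument, and the only substantive input is the discrepancy bound from \Cref{lemma:discrepancy-fake-vertices}, whose applicability to $\dir{H}^{j-1}$ follows inductively from the structure of \Cref{algorithm:prune-fake-recolor}.
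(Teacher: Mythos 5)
Your proposal is correct and follows essentially the same route as the paper: remove $f_j$, note that each of its $\delta(f_j)$ neighbors has its discrepancy change by exactly one (no parallel edges by the round-robin construction), bound the resulting per-vertex change in $\disc(v)^2$ by $O(d)$ where $d$ is the maximum discrepancy in the local optimum $\dir{H}^{j-1}$, and invoke \Cref{lemma:discrepancy-fake-vertices} to get $d = O\bigl(\tfrac{\log m}{\phi^2\gamma}\bigr)$. Your version is slightly more careful than the paper's in tracking the $+1$ in $2|\disc(v)|+1$ and in noting that dropping $\disc(f_j)^2$ only helps, but these are cosmetic differences.
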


\begin{proof}
	Let $d$ be the maximum discrepancy of a vertex in
	$\dir{H}^{j-1}$. When we remove the fake vertex $f_j$ from
	$\dir{H}^{j-1}$, the discrepancy of the neighbors of $f_j$ changes
	by 1, and so the potential increases by at most
	$2d \delta(f_j)$. \Cref{lemma:discrepancy-fake-vertices} shows that
	$d$ is $O \big( \frac{ \log m}{\phi^2 \gamma} \big)$.
\end{proof}

\Cref{cl:potential0,cl:potential1,cl:potential2} show that the total increase in the
potential due to deletion of $e_t$, creation of $H_t$ and deletion of a fake vertices is at
most
% $$  \frac{ \log m}{\phi^2 \gamma} + \frac{N \log^2 m}{\phi^4 \gamma^2} +  \frac{\vol_0(\Delta P) \log
%   m}{\phi^2 \gamma} \leq \frac{ \log m}{\phi^2 \gamma} + \frac{\vol_0(\Delta P) \log^2 m}{
%   \phi^4 \gamma^3}, $$ 
%   where the last inequality follows from the fact that $G_0$ is
% $\gamma$-weakly-regular, and so
% $$ N = |\Delta  P | \leq \frac{n \cdot \vol_0(\Delta P)}{2m \gamma} \leq \frac{\vol_0(\Delta P)}{2\gamma}.$$
%
$   \frac{ \log m}{\phi^2 \gamma} + \frac{\vol_0(\Delta P) \log
	m}{\phi^2 \gamma}.$
If $L_t$ denotes the number of flips performed by \local during
\textproc{Prune-and-Recolor}$(G_{t-1}, e_t)$, then 
$$ \Phi(\dir{G}_t) - \Phi(\dir{G}_{t-1}) ~\leq~ \left(\frac{ \log m}{\phi^2 \gamma}  + \frac{\vol_0(\Delta P) \log m}{\phi^2 \gamma}\right) - L_t. $$
This completes the proof of~\Cref{thm:iterated-local-search}.
\end{proof}

%\rknote{This needs to go elsewhere?}

We end this section by using~\Cref{thm:iterated-local-search} in an aggregate sense, over a sequence of $t$ adversarial deletions.

\begin{theorem}
	\label{theorem:local-search-recourse}
	Let $G_0 = (V_0,E_0)$ be a $\gamma$-weakly-regular $\phi$-expander
	with $m$ edges and $n$ vertices. Suppose at most
	$D = \nf{\phi^2 m}{20}$ edges are deleted adversarially. Then
	for any $t \leq D$, the total number of edge flips performed by
	\Cref{algorithm:prune-fake-recolor} during the first $t$ deletions
	is at most $O\big(\frac{\log m}{\phi^3 \gamma}\cdot t + m\big)$.
\end{theorem}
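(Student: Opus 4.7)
\textbf{Proof Proposal for Theorem~\ref{theorem:local-search-recourse}.}

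The plan is to sum the per-step recourse bound from Theorem~\ref{thm:iterated-local-search} over the sequence of $t$ adversarial deletions, making the potential terms telescope. Let $L_{t'}$ denote the number of flips performed by \textproc{Prune-and-Reorient} during the $t'$-th deletion, and let $\Delta P_{t'}$ be the set of vertices pruned at step $t'$, so that $P_t = \bigsqcup_{t'=1}^{t} \Delta P_{t'}$. Applying Theorem~\ref{thm:iterated-local-search} at each step and summing,
\begin{align*}
\sum_{t'=1}^{t} L_{t'} \;\leq\; \frac{\log m}{\phi^2 \gamma}\Bigl(t + \sum_{t'=1}^{t}\vol_0(\Delta P_{t'})\Bigr) + \Phi(\dir{G}_0) - \Phi(\dir{G}_t).
\end{align*}

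Next, I would use Theorem~\ref{theorem:pruning-ravi}\,(iii) to control the pruned volumes: since the $\Delta P_{t'}$ partition $P_t$, we have $\sum_{t'=1}^{t}\vol_0(\Delta P_{t'}) = \vol_0(P_t) \leq 6t/(5\phi)$. Substituting this bound contributes $O\bigl(\frac{t \log m}{\phi^3 \gamma}\bigr)$, which accounts for the first term in the claimed bound. The potential $\Phi(\dir{G}_t)$ is nonnegative and can simply be dropped, so only $\Phi(\dir{G}_0)$ remains to be bounded.

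For the initial potential, I would invoke the fact that the overall algorithm creates each expander with an optimal offline orientation (as described in the insertion procedure in \S\ref{sec:graph-overview}), so every vertex of $G_0$ has discrepancy in $\{-1,0,1\}$ at the moment the expander is built. This gives $\Phi(\dir{G}_0) \leq n$, and since a $\phi$-expander with $\phi > 0$ is connected, $n = O(m)$, yielding the additive $O(m)$ term. Combining these pieces,
\begin{align*}
\sum_{t'=1}^{t} L_{t'} \;\leq\; \frac{t \log m}{\phi^2 \gamma} + \frac{6 t \log m}{5\phi^3 \gamma} + O(m) \;=\; O\!\left(\frac{t \log m}{\phi^3 \gamma} + m\right).
\end{align*}

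The key conceptual ingredient is the telescoping of the potential, which is only meaningful because Theorem~\ref{thm:iterated-local-search} charges recourse to the potential \emph{drop} at each step; the main technical step that could require care is ensuring that the sum of the per-step volume overheads genuinely telescopes to $\vol_0(P_t)$, which is precisely what property~\ref{cond:PIncreases} of Theorem~\ref{theorem:pruning-ravi} (monotonicity of the pruned set) guarantees. Everything else is bookkeeping, and the proof is short.
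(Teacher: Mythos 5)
Your proposal is correct and follows essentially the same route as the paper: summing the per-deletion bound from \Cref{thm:iterated-local-search} so the potentials telescope, bounding the aggregate pruned volume via \Cref{theorem:pruning-ravi}\,(iii), and using $\Phi(\dir{G}_0)\leq n\leq m$ from the optimal initial orientation together with $\Phi(\dir{G}_t)\geq 0$. Your write-up is simply a more explicit version of the paper's one-line argument.
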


\begin{proof}
	The proof is to simply combine \Cref{theorem:pruning-ravi,thm:iterated-local-search} over the sequence of adversarial deletions. Indeed, we can use the facts that the total volume of the pruned set is at most $\vol_0(P_t) \leq 6t/(5\phi)$ along with $\Phi(\dir{G}_0)\leq m$ (optimal offline orientation of $G_0$ has discrepancy at most $n \leq m$), and $\Phi(\dir{G}_t)\geq 0$ to complete the proof.  
\end{proof}

\subsection{Putting Everything Together}
\label{sec:overall}

We now formally describe our overall algorithms and analyses. To keep track of the internal states of the algorithms, we maintain an internal clock which is initialized at \emph{\instant} $\tau=0$ (but eventually $\tau$ will exceed time $t$).
At any
\instant $\tau$, we maintain a decomposition of the current graph
$\currG{\tau}$ into several subgraphs $\{\levelG{i}{\tau}\}_{i \geq 0}$, where  $\levelG{i}{\tau}$ is the {\em level-$i$} subgraph of $\currG{\tau}$. These subgraphs maintain the following invariants: 
\begin{enumerate}  [partopsep=0pt,topsep=0pt,parsep=0pt]
	\item[(I1)] For each \instant $\tau$ and level $i$, the graph $\levelG{i}{\tau}$ has at most $2^i$ edges. 
	% Level $i$ has at most $2^i$ edges 
	\item[(I2)] For every $\tau$ and $i$, subgraph $\levelG{i}{\tau}$ has
	a {\em creation \instant} $\tau_0$ which is at most $\tau$. Graph  $\levelG{i}{\tau}$ is a
	subgraph of $\levelG{i}{\tau_0}$, i.e., we only delete edges from this level between $\tau_0$ and $\tau$. 
	\item[(I3)] For each $\tau$ and $i$, we maintain a decomposition of
	$\levelG{i}{\tau}$ into subgraphs $\expG{i}{j}{\tau}$ for $j \geq 1$, such that any vertex appears in at most $\log^2 n$ of these subgraphs. Moreover, if $\tau_0$ is the creation \instant of $\levelG{i}{\tau},$ then $\expG{i}{j}{\tau_0}$ is $\gamma$-weakly-regular $\phi$-expander for all $j$, and $\expG{i}{j}{\tau}$ is a subgraph of $\expG{i}{j}{\tau_0}$ for all $j$. 
	% Edges in level $i$ are always maintained as a decomposition into edge disjoint weakly-regular expanders $G_{i,j}$ such that any vertex appears in at most $\log^2 n$ of these expanders. 
\end{enumerate}

Although not mentioned explicitly in the invariants, the subgraph
$\expG{i}{j}{\tau}$ also has expansion and the weak-regularity
properties given by~\Cref{theorem:pruning-ravi}: in the notation of
this theorem, $\expG{i}{j}{\tau} = G_t$, where $G_0$ is the
corresponding subgraph at the creation \instant $\tau_0$ of
$\levelG{i}{\tau}$.

\medskip\textbf{Edge Insertions.} We first consider the (easier) case of adversarial edge insertions. 
The algorithm appears in \Cref{algorithm:insert}. We first insert the
edge $e$ into level-$1$. Whenever a level-$j$ subgraph overflows
(i.e., has more than $2^j$ edges), we empty this level and move all
the edges to the subsequent level. If this process stops at level $j$,
we build a new expander decomposition of the graph at this level using~\Cref{theorem:gkks20}, and also recompute an optimal offline low-discrepancy discrepancy orientation for each expander. As mentioned before, it is easy to optimally orient any graph in the offline setting: we consistently orient the edges of all cycles, to be left with a forest. We can then again orient all the maximal paths between pairs of leaves in a consistent manner, to end up with an orientation where every vertex has discrepancy in $\{-1,0,1\}$. Note that since it is the optimal discrepancy solution, it is also a locally optimal orientation.

\begin{algorithm}[H]
	%\caption{Input: Edge $e$ to be inserted into level $1$.}
	\caption{\textproc{Insert}$(e, \tau)$}
	\label{algorithm:insert}
	
	\textbf{Input:} Edge $e$ to be inserted in $\currG{\tau}$.\\
	\textbf{Output:} Graph $\currG{\tau+1}$ with decomposition into levels. 
	
	\begin{algorithmic}[1]
		\State Find the smallest $i$ such that $\{e\}\cup \levelG{1}{\tau}\cup \cdots \cup \levelG{i}{\tau}$ has at most $2^i$ edges. \label{l:i}
		\State Set $\levelG{i'}{\tau+1} = \emptyset$ for $i' =1, \ldots, i-1$. \label{l:empty}
		\State Set $\levelG{i'}{\tau+1} \leftarrow \levelG{i'}{\tau},$ for
		all $i' > i$ and 
		$\levelG{i}{\tau+1} \leftarrow \{e\}\cup \levelG{1}{\tau}\cup \cdots \cup \levelG{i}{\tau}.$
		\State Let the expander decomposition of $\levelG{i}{\tau+1}$ (using \Cref{theorem:gkks20}) 
		return subgraphs $\expG{i}{j}{\tau+1}, j \geq 1$; define their
		\emph{creation \instant} to be $\tau+1$. \label{l:creation}
		% \State Run \local on each subgraph $\expG{i}{j}{\tau+1}, j \geq
		% 1.$ \label{l:local}
		\State Find a discrepancy-at-most-$1$ orientation for each $\expG{i}{j}{\tau+1},
		j \geq 1$. \label{l:disc-one}
		\State $\tau\gets \tau+1$ 
		%\EndProcedure
	\end{algorithmic}
\end{algorithm}

\medskip\textbf{Edge Deletions.} For the case of adversarial edge
deletions, when an edge $e$ is deleted from subgraph
$\expG{i}{j}{\tau}$, we first check if $\tfrac{\phi^2 m_{i,j}}{20}$
edges have been deleted from $\expG{i}{j}{\tau_0}$, where $\tau_0$ is
the creation \instant of $\levelG{i}{\tau}$ and $m_{i,j}$ was the number
of edges in $\expG{i}{j}{\tau_0}$. If so, we remove the subgraph
$\expG{i}{j}{\tau}$ and re-insert these edges (these are called
\emph{internal inserts}). Otherwise, we run \Cref{algo:prune} on
$\expG{i}{j}{\tau}$ and edge $e$ to get subset $\Delta P$, and then
call~\Cref{algorithm:prune-fake-recolor} which removes $\Delta P$ (via
\emph{secondary deletes}) and reorients edges of
$\expG{i}{j}{\tau}$. Finally, the edges of $\Delta P$ are re-inserted
(causing more \emph{internal inserts}). The algorithm is shown
formally in \Cref{algorithm:delete}.

\begin{algorithm}
	\caption{\textproc{Delete}$(e, \tau)$}
	\label{algorithm:delete}
	\textbf{Input:} Edge $e$ to be deleted from $\currG{\tau}$.\\
	\textbf{Output:} Graph after deletion of edge $e$. 
	\begin{algorithmic}[1]
		\State Find the level $i$ and index $j$ such that $e$ belongs to  $\expG{i}{j}{\tau}.$
		\State Let $\tau_0\gets$ creation \instant of $\levelG{i}{\tau}$, and
		$m_{i,j} \gets$ number of edges in $\expG{i}{j}{\tau_0}$. 
		\State Let $T_{i,j}$ be the set of  $\tau' \in [\tau_0, \tau]$ at which an adversarial edge deletion happened in $\expG{i}{j}{\tau'}$. 
		\If{$|T_{i,j}| \geq \nf{\phi^2 m_{i,j}}{20}$}
		\State Remove all edges in $\expG{i}{j}{\tau}$, re-insert all
		except $e$
		one-by-one using \Cref{algorithm:insert} while incrementing $\tau$. \label{l:ins1}
		\Else
% 		\State Set $G_{i,j}(\tau + 1) \gets G_{i,j}(\tau)\setminus \{e\}$ and set $\tau \gets \tau + 1$.
		\State $\Delta P \leftarrow \Prune(\expG{i}{j}{\tau}, e)$ (see
		\Cref{algo:prune}, where $t = |T_{i,j}|$, and $G_{t-1} = \expG{i}{j}{\tau}$).
% 		\State \textproc{Prune-and-Reorient}($\expG{i}{j}{\tau},e,
% 		\Delta P$) (see \Cref{algorithm:prune-fake-recolor}). \label{l:prunerecolor}
		\State $G_{i,j}(\tau + 1) \gets \textproc{Prune-and-Reorient}(\expG{i}{j}{\tau},e,
		\Delta P$) (see \Cref{algorithm:prune-fake-recolor}) and $\tau\gets \tau + 1$. \label{l:prunerecolor}
		\State Reinsert edges of $\Delta P$ except $e$ one-by-one while incrementing $\tau$
		using \Cref{algorithm:insert}. \label{l:ins2}
		\EndIf
	\end{algorithmic}
\end{algorithm}

% Insertions are handled as explained in \Cref{algorithm:insert}. To handle deletions, we find the relevant expander $G_{i,j}$ and delete the edge. We denote the initial number of edges in $G_{i,j}$ by $m_{i,j}$. If the adversarial deletions in $G_{i,j}$ crosses the threshold $\phi^2 \gamma m_{i,j}$, then the residual edges are just returned as re-insertions to level 1. Otherwise, our greedy pruning algorithm will tell us to remove a bad set $\Delta P$ and edges incident to it. As seen in \Cref{subsection:local-search-pruning-expanders}, removing this requires some additional care and a few intermediate stages; we describe this in \Cref{algorithm:prune-fake-recolor}. Once we use \Cref{algorithm:prune-fake-recolor}, we reinsert the edges incident to the pruned set $\Delta P$ into level $1$.

We are now ready to analyze the discrepancy of $\currG{\tau}$ for all
$\tau$, as well as the amortized recourse. We will prove the following quantitative version of \Cref{thm:main2}.

\begin{theorem}[Main Theorem: Graph Orientation]
	\label{theorem:overall}
	Suppose we start with the empty graph on $n$ vertices and it
	undergoes adversarial edge insertions and deletions. There is an algorithm that maintains discrepancy of $O(\log^7 n)$ with an amortized
	recourse of $O(\log^5 n)$ per update.
\end{theorem}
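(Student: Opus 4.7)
The plan is to bound the discrepancy and the amortized recourse separately, assembling the tools already proved: \Cref{theorem:gkks20} for per-vertex expander multiplicity, \Cref{theorem:pruning-ravi} for pruning behavior, \Cref{cor:weakerlocalsearch} for local-search discrepancy on (slightly non-regular) expanders, and \Cref{theorem:local-search-recourse} for the per-expander recourse.

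\emph{Discrepancy.} Fix any \instant $\tau$ and vertex $v$. By invariants (I1)--(I3), $v$ lies in $O(\log n)$ nonempty levels and in at most $\log^2 n$ expanders per level, hence in $O(\log^3 n)$ expanders $\expG{i}{j}{\tau}$ in total. Each such expander was born as a $\gamma$-weakly-regular $\phi$-expander with $\phi=\Theta(1/\log n)$ and $\gamma=\phi/4$, equipped with a discrepancy-$1$ orientation, and has since only undergone deletions plus calls to \textproc{Prune-and-Reorient}, which always terminates at a locally optimal orientation. \Cref{theorem:pruning-ravi}\,(ii) guarantees that $\expG{i}{j}{\tau}$ remains a strong $\phi/6$-expander, and a degree calculation as in \Cref{lemma:discrepancy-fake-vertices} lets \Cref{cor:weakerlocalsearch} apply, giving per-expander discrepancy $O(\log m/(\phi^2\gamma))=O(\log^4 n)$. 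Summing over the $O(\log^3 n)$ expanders containing $v$ yields $O(\log^7 n)$.

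\emph{Amortized recourse.} Let $I,D$ be the numbers of adversarial insertions and deletions. I amortize three sources of flips. (a) Level-overflow consolidations (line~\ref{l:creation}): each edge is pushed upward at most $O(\log n)$ times per lifetime and is re-oriented at most once per rebuild, so this source costs $O(\log n)$ flips per insertion event (adversarial or internal). (b) Internal insertion volume: by \Cref{theorem:pruning-ravi}\,(iii), pruning reinserts $O(1/\phi)=O(\log n)$ edges per adversarial deletion amortized, and full-expander rebuilds (line~\ref{l:ins1}) happen only after $\Omega(\phi^2 m_{i,j})$ deletions while reinserting $O(m_{i,j})$ edges, adding $O(1/\phi^2)=O(\log^2 n)$ reinsertions per deletion amortized; the total insertion count is therefore $O(I+D\log^2 n)$, and combining with (a) gives $O((I+D)\log^3 n)$ flips. (c) Local-search flips inside expanders: by \Cref{theorem:local-search-recourse}, over $t$ deletions in an expander of initial size $m_{i,j}$ the algorithm performs at most $O(\log m/(\phi^3\gamma)\cdot t + m_{i,j})=O(\log^5 n\cdot t + m_{i,j})$ flips; summing the first term across expanders gives $O(D\log^5 n)$, while the additive $m_{i,j}$ telescopes into the rebuild accounting since each edge contributes to the initial size of $O(\log n)$ expander creations over its lifetime, adding another $O((I+D)\log^3 n)$. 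Combining all three sources yields $O(\log^5 n)$ amortized recourse per update.

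\emph{Main obstacle.} The delicate point is the apparent circular dependency: internal reinsertions from pruning and full rebuilds themselves cascade through levels and could in principle trigger further prunes. The resolution lies in the quantitative bounds above---pruning produces only $O(1/\phi)$ reinsertions per adversarial delete and the threshold $\phi^2 m_{i,j}/20$ limits rebuild reinsertions to $O(1/\phi^2)$ per delete---so the count of internal events stays within a $\poly\log n$ factor of $D$ and no genuine recursion arises. The remaining bookkeeping is to verify that invariants (I1)--(I3) hold at every intermediate \instant $\tau$ during the cascades inside \Cref{algorithm:insert,algorithm:delete}, which follows by induction on the order in which $\tau$ is incremented.
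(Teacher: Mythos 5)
Your proposal is correct and follows essentially the same route as the paper: per-vertex discrepancy is the product of the $O(\log^3 n)$ expanders containing a vertex and the $O(\log^4 n)$ local-search discrepancy per expander, and the recourse is split into rebuild re-orientations (charged via the count of Insert calls, adversarial plus internal) and local-search flips bounded by \Cref{theorem:local-search-recourse}, with the additive $m_{i,j}$ terms absorbed into the rebuild accounting. The one point you omit is that the adversary may create parallel edges, which the paper handles first by a black-box parity reduction to the simple-graph case (needed, e.g., so that the round-robin fake-vertex construction in \Cref{algorithm:prune-fake-recolor} produces no parallel edges); this is a minor bookkeeping step rather than a gap in the main argument.
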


\begin{proof}
    Firstly, we can assume w.l.o.g. that we never have parallel edges, as we can handle repetitions in the following black-box manner. Let $E$ denote the set of active edges and let $E'$ denote the set of active edges in the no-repetitions black-box. For the copies of an edge $e$ in $E$ (call it $T_e$), we will maintain the following invariants: (a) if $|T_e|$ is even, then $e \notin E'$ and half of them are signed $+1$ and, (b) if $|T_e|$ is odd, then $e\in E'$ and the $1$'s and $-1$'s in $T_e$ differ by at most one such that overall the signs add up to $\sigma'(e)$. These invariants ensure that the discrepancy for $E$ is equal to that for $E'$. To maintain these invariants:
    \begin{enumerate}[partopsep=0pt,topsep=0pt,parsep=0pt]
        \item If $|T_e|$ is even and $e$ is added/deleted in $E$, then call insert procedure with $e$ into $E'$ and for each edge $e'$ whose orientation changes in $E'$, you have to flip exactly one copy in $T_{e'}$ to satisfy the invariant.    
        \item If $|T_e|$ is odd and $e$ is added/deleted in $E$, then you have to re-orient at most one of edge in $T_e$ to ensure that $+1$'s and $-1$'s are equal in $T_e$. Then call delete procedure on $e$ from $E'$. Again, for each edge $e'$ flipped in $E'$, you have to flip exactly one copy in $T_{e'}$ to satisfy the invariant.
    \end{enumerate}
	
	For the rest of the proof we assume that there are no parallel edges.
	We first bound the recourse of the algorithm. 
	There are two sources of recourse: 
	% \begin{enumerate}  [label=(\alph*),partopsep=0pt,topsep=0pt,parsep=0pt]
	% \item When we call \textproc{Insert} in~\Cref{algorithm:insert} -- this could happen due to adversarial insert or {\em internal} inserts, i.e., due to lines~\ref{l:ins1} or~\ref{l:ins2} in~\Cref{algorithm:delete}. 
	% \item During \local in line \ref{l:local} in~\Cref{algorithm:insert} or \local performed inside procedure \textproc{Prune-and-Recolor}, called in line~\ref{l:prunerecolor} of~\Cref{algorithm:delete}. 
	% \end{enumerate}
	\begin{enumerate}  [label=(\alph*),partopsep=0pt,topsep=0pt,parsep=0pt]
		\item While performing a discrepancy-at-most-$1$ orientation
		in~\Cref{l:disc-one} of \textproc{Insert} (\Cref{algorithm:insert}):
		this could happen due to adversarial insert or {\em internal} inserts, i.e., due to lines~\ref{l:ins1} or~\ref{l:ins2} in~\Cref{algorithm:delete}. 
		\item During \local performed inside procedure \textproc{Prune-and-Reorient}, called in line~\ref{l:prunerecolor} of~\Cref{algorithm:delete}. 
	\end{enumerate}

	We first bound the number of calls to the \textproc{Insert} procedure. Let $T$ denote the total number of adversarial inserts and deletes. 
	\begin{claim}
		\label{cl:inserttotal}
		The total number of  calls to \textproc{Insert} procedure is at most $\frac{2T}{\phi^2 \gamma}.$
	\end{claim}
	\begin{proof}
		Clearly, the number of adversarial inserts is at most $T$. First consider the internal inserts caused by line~\ref{l:ins1} of \Cref{algorithm:delete}. These can be charged to the $\phi^2 \gamma m_{i,j}$ adversarial deletes in the set $T_{i,j}$. Therefore, the number of such calls to  \textproc{Insert} procedure is at most $\frac{T}{\phi^2 \gamma}.$ Similarly, the number of inserts in line~\ref{l:ins2} of \Cref{algorithm:delete} is at most $\vol_0(\Delta P)$, the total number of such internal inserts corresponding to a fixed expander graph which undergoes $D'\leq D$ adversarial edge deletions is at most $\frac{6D'}{5 \phi}$ (by~\Cref{theorem:pruning-ravi}). Summing over all expanders, this quantity is at most $\frac{6T}{5 \phi}$. Thus, the overall number of calls to \textproc{Insert}  is at most $\frac{2T}{\phi^2 \gamma}.$
		% For internal insertions, consider the life span of some particular expander $G_{i,j}$. Its life ended either because it reached the deletion limit $\phi^2 \gamma m_{i,j}$ (\Cref{subsection:local-search-pruning-expanders}) or because of rebuilding of the whole level. If $D_{i,j}$ was the number of adversarial deletions that happened in it, then the number of internal insertions it caused is (a) the volume of the pruned set (\Cref{theorem:pruning-ravi}) plus (b) the number of edges that remain if at all it reached its delete limit $\phi^2 \gamma m_{i,j}$. The volume of pruned set is at most $\frac{6D_{i,j}}{5\phi}$ by \Cref{theorem:pruning-ravi}. Part (b) can be upper bounded by $\phi^{-2}\gamma^{-1}D_{i,j}$ because if it didn't reach it's delete limit, (b) is just zero and if it did, then $D_{i,j}\geq \phi^2 \gamma m_{i,j}\implies \phi^{-2}\gamma^{-1}D_{i,j} \geq m_{i,j}$, which is trivially greater than the edges in (b). Thus, over all adversarial deletions and ignoring constants, total number of internal inserts is at most $\phi^{-2}\gamma^{-1} T$.
	\end{proof}
	
	We now bound the recourse caused by rebuilding of levels due to insertions. 
	
	\begin{claim}
		\label{cl:recourse1}
		The total re-orientations due to \Cref{l:disc-one} of \textproc{Insert} is at most $\frac{4T \log m}{\phi^2 \gamma}$. 
	\end{claim} 
	\begin{proof}
		For a fixed level $\ist$, let $T'$ be the set of  $\tau$ when we call \textproc{Insert} and the index $i$ selected in line~\ref{l:i} in \Cref{algorithm:insert} happens to be $\ist$. For any such \instant $\tau$, the total number of edges added to  $\levelG{i}{\tau+1}$ is at most 
		$2^\ist$ and at least $2^{\ist -1}$. It follows that the number of re-orientations due to \Cref{l:disc-one} of \textproc{Insert} at this \instant is also at most $2^\ist$.  Note that we empty the levels $1, \ldots, \ist-1$ at this \instant. Therefore, between any two consecutive \instants in $T'$, we must have inserted at least $2^{\ist-1}$ edges (these could be either adversarial or internal inserts).  Thus, the total number of re-orientations due to \Cref{l:disc-one} of \textproc{Insert} for all \instants in $T'$ is at most twice the total number of calls to  \textproc{Insert}, which is at most $\frac{2T}{\phi^2 \gamma}$ (by~\Cref{cl:inserttotal}).

		Since there are at most $\log m$ levels, the desired result follows. 
		% Firstly, recall that level $i$ has at most $2^i$ edges so one rebuild in level $i$ incurs a recourse of at most $2^i$. When an insertion (internal or adversarial) causes a rebuilding at level $i$, recall that all levels up to level $i-1$ are emptied. So the next \instant a rebuilding can occur at level $i$ is only after at least $2^{i-1}$ insertions (internal or adversarial). Due to this, total rebuilding recourse in level $i$ is at most $(I/2^{i-1})\cdot 2^i = T \phi^{-2}\gamma^{-1}$. Since there are $\log m$ levels, the overall rebuilding recourse across all levels is at most $T\phi^{-2}\gamma^{-1}\log m$.  
	\end{proof}
	
	We will next bound the recourse due to local-search steps in the \textproc{Prune-and-Reorient} procedure. 
	\begin{claim}
		\label{cl:recourse2}
		The total number of re-orientations due to \local called in line~\ref{l:localsearchH} of \Cref{algorithm:prune-fake-recolor}  is $O \left( \frac{T \log n}{\gamma \phi^3} \right)$.
	\end{claim}
	\begin{proof}
% 		Consider an expander graph $\expG{i}{j}{\tau}$ created at \instant $\tau$. Consider the calls to the \textproc{Prune-and-Reorient} procedure with the deleted edge belongs to $\expG{i}{j}{\tau}$ -- suppose there are $D'$ such calls.
        Consider any particular expander graph $\expG{i}{j}{\tau_0}$ created at \instant $\tau_0$. Consider the calls to the \textproc{Prune-and-Reorient} procedure where the deleted edge belongs to $\expG{i}{j}{\tau'}$ for some $\tau' \geq \tau_0$.
		We know that $D' \leq D$, this inequality could be strict because we may remove all the level $i$ edges at some \instant because of line~\ref{l:empty} in~\Cref{algorithm:insert}.~\Cref{theorem:local-search-recourse} shows that the total number of re-orientations due to \local called in line~\ref{l:localsearchH} of \textproc{Prune-and-Reorient} procedure during these $D'$ \instants is at most (a constant factor of) 
		$ \frac{D' \log m}{\gamma \phi^3} + m_{i,j}, $
		where $m_{i,j}$ is the number of edges in $\expG{i}{j}{\tau}$. When we add the above for all expanders, the first term is at most 
		$\frac{T \log m}{\gamma \phi^3}$. The second term is the sum over all expanders that get created during the algorithm of the number of edges in the expander. Expanders are created in line~\ref{l:creation} of \Cref{algorithm:insert}, and so their total size can be bounded in the same manner as the argument used in the proof of~\Cref{cl:recourse1}. Hence this quantity is at most $\frac{4T \log m}{\phi^2 \gamma}$.
		% Consider the life span of some particular expander $G_{i,j}$. Its life ended either because it reached the deletion limit $\phi^2 \gamma m_{i,j}$ or because of rebuilding of the whole level. If $D_{i,j}$ was the number of adversarial deletions that happened in it, then using \Cref{theorem:local-search-recourse}, the total recourse is at most $(\gamma^{-3}\phi^{-5} \log^2 m_{i,j}) \cdot D_{i,j} + m_{i,j}$. Terms identical to the second type have been summed up while calculating rebuildings recourse. So let us focus on the first term. We will thus get that the total recourse due to local-searches is at most $(\gamma^{-3}\phi^{-5} \log^2 n) T$.
	\end{proof}
	Since we use $\phi,\gamma = \Theta(1/\log n)$, the above results show that the amortized recourse is 
	at most $O \left( \gamma^{-1}\phi^{-3} \log n \right)= O\left(\log n \cdot \log^3 n \cdot \log n\right) = O(\log^5 n)$. We now bound the discrepancy of any vertex. 
	
	\begin{claim}
		The discrepancy of any vertex is bounded by $O(\log^7 n)$ at all times.
	\end{claim}
	\begin{proof}
		From \Cref{lemma:discrepancy-fake-vertices}, discrepancy of a vertex in any  expander is $O(\phi^{-2}\gamma^{-1}\log m)$, which is $O(\log^4 n)$ since we use $\gamma, \phi = \Theta(1/\log n)$. Since each vertex appears in at most $O(\log^3 n)$ expanders, we get that the discrepancy is bounded by $O(\log^7 n)$ at all times.
	\end{proof}
	
	This completes the proof of \Cref{theorem:overall}.
\end{proof}

%%% Local Variables:
%%% mode: latex
%%% TeX-master: "main"
%%% End:

\section{Lower Bounds for Local Search}
\label{section-lower-bounds-local-search}

In this section, we will show that for general  vectors and
general graphs, typical $\ell_2$-potential local search procedures do not guarantee low
discrepancy. 

\subsection{The $\ell_2$-Potential for General Vectors}

The $\ell_2$-potential for a signing $\{\e_i\}$ for vectors
$a_i$ is $\sum_j (\sum_i \e_i a_{ij})^2 = \| S \|^2$, where
$S := \sum_i \e_i a_i$. Hence we are at a local optimum if for each
$i$, the potential change $\|S-2\varepsilon_i a_i\|_2 -\|S\|_2$
due to flipping $a_i$ is non-negative. We will show that there exist locally optimal solutions on $T$ vectors with discrepancy $\Omega(\sqrt{T})$.

Consider the following set of vectors in 2 dimensions: $T/2$ vectors of the form $(1,1/\sqrt{T})$ and $T/2$ vectors of the form $(-1,1/\sqrt{T})$ with signing $\varepsilon = \mathbf{1}$. Adding these vectors, we get $S = (0,\sqrt{T})$. Now, for any vector $a$ (w.l.o.g. $a=(1,1/\sqrt{T})$) in the collection, 
\[ 
||S-2a||_2^2 - ||S||_2^2 ~=~ ||(-2,\sqrt{T}-\frac{2}{\sqrt{T}})||_2^2-||(0,\sqrt{T})||_2^2 ~=~ 4 + \left(\sqrt{T}-\frac{2}{\sqrt{T}}\right)^2-T ~=~ \frac{4}{T} ~>~ 0.
\]
Hence, this is indeed a local optimum and has discrepancy $\sqrt{T}$. 
%Finally, since we want the $\ell_2$-norm of the vectors to be bounded by $1$, we can just scale them down by $\sqrt{1+\frac{1}{T}}$, but this still keeps the discrepancy $\Omega(\sqrt{T})$. 
% \agnote{We show this for the $\ell_p$ potentials, and also
  % for the widely used hyperbolic cosine potential.}

\subsection{The $\ell_2$-Potential for $\{\pm 1\}$-Vectors}

The $\ell_2$-potential for a signing $\{\e_i\}$ for $\pm 1$-vectors
$a_i$ is $\sum_j (\sum_i \e_i a_{ij})^2 = \| S \|^2$, where
$S := \sum_i \e_i a_i$. Hence we are at a local optimum if for each
$i$, the potential change $\|S-2\varepsilon_i a_i\|_2 -\|S\|_2$
due to flipping $a_i$ is non-negative.  
% For general vectors, there is an explicit example where a local minimum has $\Omega(\sqrt{T})$ discrepancy. 
% should we add the above line as a footnote?
Since $\|a_i\|_2^2 = n$, the above condition is equivalent to showing
\begin{gather}
  S^\intercal (\varepsilon_i a_i) \leq n \label{eq:l2-lopt}
\end{gather}
for all $i$.  We can show
an $\Omega(2^{n/2})$ locality gap in this case.

%% Stronger lower bound
%\subsection{Lower Search Bound of $\Omega(2^{n/2})$ for $\{\pm 1\}$ Vectors}

\begin{lemma}
  \label{lemma:lower-bound-vectors}
  There is a family of instances of $\{\pm 1\}$ vectors, one for each
  $n$ that is a multiple of $8$, having local optima with discrepancy $\Omega(2^{n/2})$ but
  global optima having zero discrepancy.
\end{lemma}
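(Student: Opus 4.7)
The plan is to exhibit, for each $n$ divisible by $8$, an explicit family of $\{\pm 1\}^n$-vector instances witnessing the lower bound. The overall strategy rests on two observations. First, the $\ell_2$-potential local-optimality condition~\eqref{eq:l2-lopt} is \emph{one-sided}: $\epsilon_i \langle S, a_i\rangle \leq \|a_i\|^2 = n$ with no matching lower bound. Summing over $i$ therefore only yields the single ``energy'' constraint $\|S\|^2 \leq Tn$, so $\|S\|_\infty$ could in principle reach $\sqrt{Tn}$; hitting the target $\Omega(2^{n/2})$ requires $T = \Omega(2^n/n)$ vectors. Second, taking a collection closed under negation makes the all-$(+1)$ signing give $\sum_v v = 0$, automatically achieving the zero-discrepancy global optimum.

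Concretely, I would take $V = \{\pm 1\}^n$ in full (so $T = 2^n$). For the bad local optimum, group pairs $\{v, -v\}$ via representatives $v$ with $v_1 = +1$, and identify them with $x \in \{\pm 1\}^{n-1}$ via $v = (+1, x)$. For a signing $\epsilon$, the function $\beta(x) := (\epsilon_v - \epsilon_{-v})/2 \in \{-1, 0, +1\}$ encodes the pair's net contribution $2\beta(x) v$ to $S$. Restricting to $\beta \in \{\pm 1\}$ and expanding in the Fourier basis on $\{\pm 1\}^{n-1}$ gives
\[
S_1 = 2^n \hat\beta(\emptyset), \qquad S_{j+1} = 2^n \hat\beta(\{j\}) \text{ for } j \in [n-1],
\]
so $\|S\|_\infty = 2^n \cdot \max_{|A| \leq 1} |\hat\beta(A)|$. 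A short computation shows that the local-optimality condition becomes the pointwise inequality
\[
\beta(x)\, \beta_{\leq 1}(x) \;\leq\; \frac{n}{2^n} \qquad \text{for every } x \in \{\pm 1\}^{n-1},
\]
where $\beta_{\leq 1}$ is the degree-$\leq 1$ Fourier truncation of $\beta$.

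The crux of the proof is therefore to exhibit an explicit $\{\pm 1\}$-valued $\beta$ on $\{\pm 1\}^{n-1}$ with (i) some low-degree Fourier coefficient of magnitude $\Omega(2^{-n/2})$ (which delivers $\|S\|_\infty = \Omega(2^{n/2})$) and (ii) the pointwise anti-alignment bound above. A natural candidate is a linear threshold function adapted to a Hadamard / Reed-Muller basis, using $8 \mid n$ to obtain convenient Hadamard structure. The main obstacle is verifying condition (ii): since the average $\mathbb{E}_x[\beta \cdot \beta_{\leq 1}] = \|\beta_{\leq 1}\|_2^2$ is already of order $n/2^n$, the per-point bound is saturated \emph{on average}, and the construction must arrange for the higher-degree Fourier mass of $\beta$ to cancel $\beta_{\leq 1}(x)$ at precisely those $x$ where $|\beta_{\leq 1}(x)|$ would otherwise exceed $n/2^n$. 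Handling this fixed-point-style pointwise verification via a careful structural case analysis is the hardest part of the proof.
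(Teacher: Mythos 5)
Your proposal sets up a clean Fourier-analytic framework (pairing antipodal vectors, reducing local optimality to the pointwise condition $\beta(x)\,\beta_{\leq 1}(x) \leq n/2^n$ and the discrepancy to $2^n \max_{|A|\leq 1}|\hat\beta(A)|$), and the reduction itself is correct. But the proof has a genuine gap exactly where you flag it: no function $\beta$ is actually constructed, and the construction is the entire content of the lemma. The difficulty is not merely technical. Since $\beta$ is $\{\pm 1\}$-valued, the condition $\beta(x)\,\beta_{\leq 1}(x) \leq n/2^n$ \emph{forces} $\beta(x) = -\mathrm{sign}(\beta_{\leq 1}(x))$ at every point where $|\beta_{\leq 1}(x)| > n/2^n$; but $\beta_{\leq 1}$ is itself determined by $\beta$, so you are asking for a sign pattern that is anti-aligned with its own degree-$\leq 1$ truncation on most of the cube while still retaining a level-$\leq 1$ Fourier coefficient of magnitude $\Omega(2^{-n/2})$. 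Averaging the condition only gives $\sum_{|A|\leq 1}\hat\beta(A)^2 \leq n/2^n$, which leaves a window between $2^{-n/2}$ and $\sqrt{n}\,2^{-n/2}$, so the target is not ruled out --- but nothing in the proposal shows it is attainable, and the ``linear threshold function adapted to a Hadamard basis'' candidate is neither specified nor verified. As written, the lemma is not proved.

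For comparison, the paper sidesteps this self-referential difficulty entirely with an explicit, elementary construction: it builds the instance from $n/2$ types of $2\times n$ ``repeating units,'' chooses the multiplicities $\vec r = \frac{n}{4}(B^{-1})^{\intercal}B^{-1}\mathbf{1}$ for a lower-triangular $\pm 1$ matrix $B$ so that every inner product $\langle S, a\rangle$ equals exactly $n$ (saturating the local-optimality condition by design rather than by a case analysis), and then reads off the discrepancy as $\vec s = \frac{n}{2}B^{-1}\mathbf{1}$, which is exponentially large because $B^{-1}$ has entries $1,1,2,4,\dots,2^{k-2}$. Zero global discrepancy comes from making each $r_i$ even. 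If you want to salvage your route, you would need to either exhibit a concrete $\beta$ and carry out the pointwise verification, or convert the existence question into a linear-algebraic one with equality constraints in the spirit of the paper.
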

\begin{proof}
  We construct a $\{\pm 1\}$ matrix $M$ with $n$ columns and
  $2\cdot \sum_{i=1}^{n/2} r_i$ rows, where $r_i$ is set later. We prove that
  giving signs $\varepsilon=1$ to the rows of this matrix is a local
  optimum with large discrepancy. Let $S = \pmb{\e}M$ denote the sum
  of the rows of $M$. Our construction consists of \emph{repeating
    units}, where the $i^{\text{th}}$ repeating unit (for $i=1,\ldots ,n/2$) is the following
  $2\times n$ sub-matrix:
  % $$ [-I_2,...,-I_2 \,(i-1 \text{ times}), I_2, 2I_2-J_2,...,2I_2-J_2 \,(n/2-i \text{ times})]$$ 
  \[
    \begin{bmatrix}
      \begin{bmatrix}
        -1 & -1  \\
        -1 & -1         
      \end{bmatrix} \text{ repeated $i-1$ times},
      \begin{bmatrix}
        1 & 1  \\
        1 & 1         
      \end{bmatrix}, 
      \begin{bmatrix}
        1 & -1  \\
        -1 & 1         
      \end{bmatrix} \text{ repeated $\frac{n}{2} - i$  times}
    \end{bmatrix}_{2\times n} .
  \]
  This unit is repeated $r_i$ times. We will later set $r_i$ to an even number, implying that any vector appears an even number of times. Therefore, by signing these even number of copies in an alternating fashion, we get that the global optimum has discrepancy $0$.  By
  construction, $S = (s_1,s_1,s_2,s_2,...,s_{n/2},s_{n/2})$ for some
  integers $s_j$. Define $\vec{s} := (s_1,s_2, \ldots,s_{n/2})$.
  
  % Let us first A box of bias $b$ is a $n/k \times n/k$ submatrix
  % whose first row has bias (i.e., sum of entries) $b$, and
  % subsequent rows are just cyclic rotations. The sum of these rows
  % will therefore be $(b,...,b)$.
  % The way the hard instance will look is
  % $[box(b_{1,1}), ..., box(b_{1,k})]$ repeated $r_1$ times,...,
  % $[box(b_{k,1}), ..., box(b_{k,k})]$ repeated $r_k$ times (see
  % \Cref{fig:best-lower-bound-local-search}).
  
  % In the following claim we define how to chose the number of repetitions $r_i$  to construct $M$.
  \begin{claim}
    \label{claim:pm1-lower-bound}
    Let $B$ be the $\frac{n}{2}\times \frac{n}{2}$ lower-triangular
    matrix with $1$s on the diagonal and $-1$s in the lower
    triangle. Then using
    $\vec{r} := (r_1,\ldots ,r_{n/2})^\intercal = \frac{n}{4}
    (B^{-1})^\intercal B^{-1} \mathbf{1}$ results in $M$
    whose row-sum $S = (s_1,s_1,s_2,s_2,\ldots ,s_{n/2},s_{n/2})$ satisfies
    $\vec{s}:=(s_1,s_2,\ldots ,s_{n/2})^\intercal = 2B^\intercal \vec{r}$. Moreover,
    $\eps_i = 1$ for all $i$ is a local optimum.
  \end{claim}
  \begin{proof}
    The row sum $(s_1,s_1,s_2,s_2,\ldots ,s_{n/2},s_{n/2})$ satisfies:
    \begin{align*}
      2r_1 - \ldots  -2 r_{n/2-1} - 2 r_{n/2}&=s_1\\
      \vdots \\
      2 r_{n/2-1} - 2 r_{n/2} &=s_{n/2-1}\\
      2 r_{n/2} &=s_{n/2} 
    \end{align*}
    which implies $\vec{s} = 2B^\intercal \vec{r}$. Next, we check the
    condition~(\ref{eq:l2-lopt}) for local optimality: for any vector
    $a$ in the $i^{\text{th}}$ repeating unit,  
    \[\langle S, a \rangle = -2s_1 -2s_2 -\ldots  -2s_{i-1} + 2s_i = 2(B\vec{s})_i = 2(B\cdot 2B^\intercal \vec{r})_i = n. \qedhere
    \]
  \end{proof}

  Putting the facts from \Cref{claim:pm1-lower-bound} together, the
  discrepancy vector
  $\vec{s} = 2B^\intercal \vec{r} = \frac{n}{2} B^{-1} \mathbf{1}$. We explicitly write down the 
  inverse of the lower-triangular matrix as follows:
  \[
    \begin{bmatrix}
      1       &           &        &          &       \\
      -1      &  \ddots   &        &          &       \\
      \vdots  &  \ddots   & \ddots &          &       \\
      \vdots  &     & \ddots &  \ddots  &       \\
      -1  &  \cdots   &   \cdots     & -1       &  1    
    \end{bmatrix}_{k\times k}^{-1}
    = 
    \begin{bmatrix}
      1       &           &        &        &       \\
      1       &  \ddots   &        &        &       \\
      2       &  \ddots   & \ddots &        &       \\
      \vdots  &  \ddots   & \ddots & \ddots &        \\
      2^{k-2} &  \cdots   & 2      & 1      &  1       
    \end{bmatrix},
  \]
  Using this, we get that $\vec{s}$ has entries of value
  $\Omega(2^{n/2})$, which proves \Cref{lemma:lower-bound-vectors}.
\end{proof}

% \begin{figure}
%     \centering
%     \includegraphics[width=15cm,height=7cm]{}
%     \caption{$\Omega((n/k)^{k})$ hard example for $\ell_2$ local search. \textcolor{red}{Need to update picture, or do I just remove it?}}
%     \label{fig:best-lower-bound-local-search}
% \end{figure}

\begin{remark}
  Since our example has repetitions, we could try to use this
  structure to assign opposite signs to these multiple copies, and
  thereby get low discrepancy. However, it is easy to extend this to
  avoid repetitions. Since
  $(r_1,\ldots ,r_{n/2}) = \frac{n}{4} (B^{-1})^\intercal
  B^{-1}\mathbf{1}$, the total number of rows
  $R=\sum_i r_i$ in our original example is even and at most
  $2^{4n}$. Take the original matrix $M$, append $2^{4n}-R$ rows of
  $(1,-1,1,-1,\ldots )$ and $(-1,1,-1,1,\ldots )$ in alternation to
  obtain a $2^{4n} \times n$ matrix $M'$. Moreover, append a
  $2^{4n}\times 4n$ matrix of all possible $\{\pm 1\}^{4n}$ vectors to
  the right of $M'$ to get the final $2^{4n} \times 5n$ matrix
  $M''$. The row sum is now $S''=(S,\mathbf{0}^{4n})$, and any vector
  in the first $R$ rows continues to satisfy
  $\langle S'', a \rangle = n \leq 5n$ by construction. For any vector after
  that, $\langle S'', a \rangle = 0 \leq 5n$, because
  $ \langle (s_1,s_1,\ldots ,s_{n/2},s_{n/2}), (1,-1,1,-1,\ldots ) \rangle = 0$. So
  $M''$ has rows in $\{\pm 1\}^{5n}$, and it is a local optimum with
  discrepancy $\Omega(2^{n/2})$.
\end{remark}

%%%%%%%%%%%%%%%%%%%%%%%%%%%%%%%%%%%%%%%%%%%

\subsection{The $\ell_2$-Potential for General Graphs} \label{sec:local-search-graph-lb}

%\subsection{Tightness of discrepancy bound at a local optimum}

We saw that the local search with the $\ell_2$ potential was effective
on expanders: however, it fails for general graphs.
% Similarly for graphs, the above condition is equivalent to
% $S(b)-S(a)\leq 2$ for any edge oriented from $a$ to $b$.
% In this part, we will show that for the edge orienteering problem on graphs, there exist
We now show instances with $n$ vertices and local optima having
discrepancy $\Omega(n^{1/3})$.
\begin{lemma}
  \label{lemma:lower-bound-graphs}
  There is an infinite family of graph instances with local optima for
  the $\ell_2$-potential having discrepancy $\Omega(n^{1/3})$.
\end{lemma}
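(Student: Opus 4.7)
The plan is to construct a layered graph and orient every inter-layer edge in a single direction so that the signed discrepancy decreases by only $1$ per layer, yielding a long cascade that never triggers \local (which flips only when $\disc(v) > \disc(u)+2$). Fix an integer $k$ and take $2k+1$ layers $L_0, L_1, \ldots, L_{2k}$, each of size $s = \lceil k^2/2 \rceil$, so the total number of vertices is $n = (2k+1)s = \Theta(k^3)$. Between each pair of consecutive layers $L_i, L_{i+1}$ I would place an $a_i$-regular bipartite graph (which exists as a simple graph since $|L_i| = |L_{i+1}| = s$ and $a_i \leq s$) and orient all its edges from $L_{i+1}$ to $L_i$. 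Every vertex in $L_i$ then has $a_i$ incoming and $b_i = a_{i-1}$ outgoing edges, so its signed discrepancy equals $d_i := a_i - a_{i-1}$. I would enforce $d_i = k-i$ (so $d_0 = k$ at the top and $d_{2k} = -k$ at the bottom), which forces $a_0 = k$ and $a_i = (k-i) + a_{i-1}$, solving to $a_i = (i+1)(k - i/2) \leq k^2/2$, and also closes consistently at the bottom since $a_{2k}=0$ and $b_{2k} = a_{2k-1} = k$.

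Next, I would verify local optimality under $\Phi = \sum_v \disc(v)^2$. Every directed edge $u\to v$ in the construction satisfies $u \in L_{i+1}$ and $v \in L_i$ for some $i$, giving $\disc(v) - \disc(u) = d_i - d_{i+1} = 1$. Flipping such an edge changes the potential by $(\disc(v)-2)^2 - \disc(v)^2 + (\disc(u)+2)^2 - \disc(u)^2 = 4(\disc(u)-\disc(v)) + 8 = 4 > 0$, so $\Phi$ strictly increases under any flip, and \local therefore terminates at this orientation. Since the top-layer discrepancy is $d_0 = k = \Theta(n^{1/3})$, the lemma follows, with isolated vertices added as needed to hit any prescribed value of $n$.

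The step I expect to require the most care is purely the bookkeeping: checking that the prescribed $a_i$-regular bipartite graphs are realizable as simple graphs (automatic from $a_i \leq s$ and $|L_i|=|L_{i+1}|$), that the closed form for $a_i$ is correct, and that the two boundary conditions $b_0 = 0$ and $a_{2k} = 0$ are compatible, which they are because $d_i = k-i$ is an arithmetic progression symmetric about $i = k$, so $\sum_i d_i = 0$ matches the handshake identity. Conceptually, this construction is the natural counterpoint to \Cref{theorem:local-search-expander}: in an expander the BFS layers from a high-discrepancy vertex must grow in volume by a factor of $1+\Omega(\phi\gamma)$ per step, capping the number of usable layers at $O(\log n/(\phi\gamma))$; here the non-expanding layered structure permits only polynomial per-layer growth $\Theta(k^2)$ and therefore tolerates $\Theta(n^{1/3})$ layers of cascade.
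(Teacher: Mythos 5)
Your construction is correct and is essentially the paper's own: a layered DAG with $\Theta(k)$ layers and $\Theta(k^3)$ vertices in total, in which the signed discrepancy decays arithmetically down the layers so that no single flip decreases the $\ell_2$-potential, and the top layer has discrepancy $k = \Theta(n^{1/3})$ (the paper uses complete bipartite connections between layers of growing size starting from a single root, with a per-layer drop of $2$ so every flip changes $\Phi$ by exactly $0$, whereas you use equal-size layers joined by regular bipartite graphs of varying degree with a per-layer drop of $1$ so every flip increases $\Phi$ by $4$). The one slip is the bound $a_i \le k^2/2$: the maximum degree is $a_k = (k+1)k/2 = (k^2+k)/2 > \lceil k^2/2\rceil$ for $k\ge 2$, so you should take $s = \lceil (k^2+k)/2 \rceil$ (or simply $s=k^2$), which changes nothing asymptotically.
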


\begin{figure}[H]
    \centering
    \includegraphics{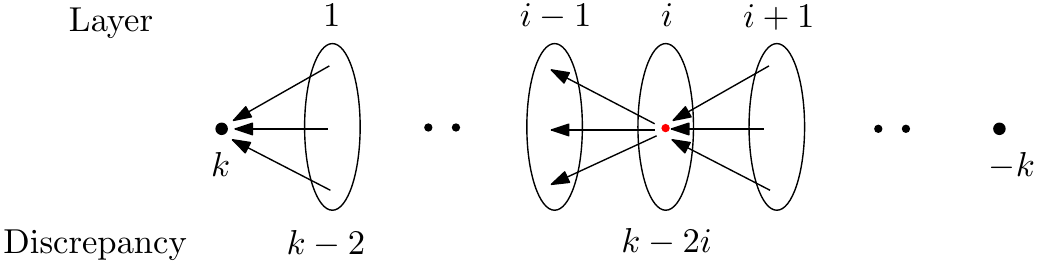}
    \caption{Lower bound of $\Omega(n^{1/3})$ discrepancy for local search on graphs.}
    \label{fig:graphs-lower-bound-simple}
\end{figure}

\begin{proof}
  For even integer $k = \Omega(n^{1/3})$, construct a layered digraph
  with $k$ layers (see \Cref{fig:graphs-lower-bound-simple}). Denote
  the vertices in layer $i$ by $L_i$. For every $u\in L_i$ and
  $v\in L_{i+1}$, add a directed edge from $v$ to $u$.  The number of
  vertices $n_i$ in layer $L_i$ is chosen so that the root has
  discrepancy $k$ and each node in $L_i$ has discrepancy $k-2i$. Since
  a node $u\in L_i$ has incoming edges from every $v\in L_{i+1}$ and
  outgoing edges to every $v'\in L_{i-1}$, it suffices to have
  $n_{i+1}-n_{i-1} = \disc(u) = k-2i$. The base cases are $n_0 = 1$,
  $n_1=k$. Since $k$ is even, this recurrence results in a symmetric
  instance with a zero-discrepancy layer at the center. There are $k$
  layers, and increase in size from $n_{i-1}$ to $n_{i+1}$ is at most
  $k$. So the total number of nodes (up to constant factors) is at
  most $k + 2k + \ldots + (k)k = (1+2+\ldots +k)k = O(k^3)$.

  Finally, each node in $L_i$ has discrepancy equal to $k-2i$ by
  construction. Since all edges (in the directed graph) are of the
  form $u\rightarrow v$ with $u\in L_j, v\in L_{j-1}$ for some $j$, we
  have that $\disc(v)-\disc(u) = (k-2(j-1))-(k-2j) = 2$, and hence the
  orientation is indeed a local optimum.
\end{proof}

% \agnote{What about doing shortest augmenting path: if there is a node
%   with discrepancy more than $k$, then find the shortest path to a
%   node that has discrepancy $k-2$ and flip the path?}

%%% Local Variables:
%%% mode: latex
%%% TeX-master: "main"
%%% End:

\section{Upper-Bounds for Local Search on Unstructured Graphs and Vectors}
\label{sec:local-search-upper}

In \Cref{sec:expand-decomp} we saw how local search with the
$\ell_2$-potential on expander graphs results in small discrepancy;
and \Cref{section-lower-bounds-local-search} gave strong lower bounds
for general graphs, and general collections of vectors. In this
section we give an upper bound for general graphs that matches the
lower bounds; for vectors, we get an exponential-in-$n$ (but
quantitatively weaker) upper bound.

\subsection{Local Search Upper Bound of $n^{O(n)}$ for $\{\pm 1\}$ vectors}
\label{sec:LB-general}

Let us consider a locally optimal configuration. Without loss of generality, $\varepsilon = 1$. For each participating vector $a$, using the fact that $\|a\|_2^2 =n$, we can rewrite the condition $\|S-2a\|_2\geq \|S\|_2$ as $S^T a\leq n$. So for the worst case we are interested in the following program. 
\begin{align*}
    \max_{a_1,\ldots ,a_T\in \{\pm 1\}^n, \,S=\sum_i a_i} \|S\|_\infty\\
    \text{s.t. } \langle S, a_i \rangle \leq n \,\,\forall i\in[T].
\end{align*}

We can show that for $\{\pm 1\}$ vectors, discrepancy at a local optimum is bounded by a function of $n$, independent of $T$. Without loss of generality, $S$ has all positive coordinates. For any vector $u\geq 1$,
    $$\|S\|_\infty \leq \langle S, u \rangle$$
    Suppose there is such a $u$ of the form $\sum x_i a_i$ with $x_i \geq 0$, then we will get $\|S\|_\infty\leq n\cdot(\sum x_i)$ since $\langle S, a_i \rangle \leq n$. Clearly $x_i = 1$ is a feasible solution, but instead let us optimize it as follows.
    \begin{alignat*}{2}
        \min &\textstyle \sum_{i=1}^T x_i &&\\
        x_i &\geq 0 &\qquad\qquad&\text{for $i=1,\ldots ,T$ ($T$ constraints)}\\
        \textstyle \sum_{i=1}^T x_i a_i &\geq 1 &&\text{($n$ constraints)}
    \end{alignat*}
    Let $x^*$ be a corner in the feasible region. By definition of a
    corner, we will have $T$ linearly independent tight constraints at
    $x^*$. Let $n'\leq n$ of them be of the second kind and $T-n'$ of
    the first kind. That is, there are $n'$ non-zero $x^*_i$'s. Then
    we will have $\sum_{i:x^*_i\neq 0} x^*_i a'_i = 1$, where $a'_i$
    is obtained from $a_i$ by retaining only the $n'$ coordinates
    corresponding to tight constraints. Arranging these
    $n'$-dimensional vectors as columns of a matrix, we get a
    $n'\times n'$ matrix $V$ with $V\beta = 1$, where $\beta$ is a
    $n'$ dimensional vector containing the nonzero $x^*_i$'s. $V$ is
    full rank since if the rows had a non-trivial linear combination
    giving zero, then using those coefficients for the $n'$ type-2
    tight constraints will give a vector that is non-zero only at
    positions $i$ where $x^*_i = 0$. So it is a linear combination of
    the $T-n'$ type 1 tight constraints of the form $x_i = 0$. This
    contradicts the linear independence of the tight constraints at a
    vertex. Hence $V$ is full rank, which implies $\beta = V^{-1}\one$. 
    \begin{claim}
    \label{cl:beta}
    The entries of $\beta$ are bounded by $n^{O(n)}$.
    \end{claim}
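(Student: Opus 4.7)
The plan is to invoke Cramer's rule together with Hadamard's inequality. Since $V$ is an $n' \times n'$ invertible matrix with entries in $\{\pm 1\}$, and $\beta = V^{-1}\mathbf{1}$, Cramer's rule gives $\beta_i = \det(V_i)/\det(V)$, where $V_i$ is the matrix obtained from $V$ by replacing its $i$-th column with the all-ones vector $\mathbf{1}$. So the job reduces to lower bounding $|\det(V)|$ and upper bounding $|\det(V_i)|$.

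The lower bound on the denominator is essentially free: since every entry of $V$ is $\pm 1$, the determinant $\det(V)$ is an integer, and since $V$ is invertible (established above using the linear independence of tight constraints at the vertex $x^\star$), this integer is nonzero. Hence $|\det(V)| \geq 1$.

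For the numerator, I would apply Hadamard's inequality, which says that the absolute value of a determinant is at most the product of the $\ell_2$-norms of its columns. Each column of $V_i$ is either a column of $V$ (a vector in $\{\pm 1\}^{n'}$ with $\ell_2$-norm $\sqrt{n'}$) or the all-ones vector $\mathbf{1} \in \mathbb{R}^{n'}$ (also of $\ell_2$-norm $\sqrt{n'}$). Therefore $|\det(V_i)| \leq (\sqrt{n'})^{n'} = n'^{n'/2} \leq n^{n/2}$. Combining the two bounds yields $|\beta_i| \leq n^{n/2} = n^{O(n)}$, which is exactly the desired conclusion.

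I do not foresee a real obstacle here; the only subtle point is making sure that $V$ is genuinely full rank (so that $\det(V)\neq 0$ and Cramer's rule applies), but this is already argued in the paragraph preceding the claim via the tight-constraint/vertex characterization. Everything else is the standard integer-determinant lower bound combined with Hadamard's inequality for a $\pm 1$ matrix.
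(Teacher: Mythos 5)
Your proposal is correct and follows essentially the same route as the paper: both rest on $|\det(V)|\geq 1$ (nonzero integer determinant of an invertible $\pm 1$ matrix) combined with an $n^{O(n)}$ upper bound on the relevant numerator determinants, the paper via the adjugate and a crude permutation-count bound, you via Cramer's rule and Hadamard's inequality (which gives the slightly sharper $n^{n/2}$). No gaps.
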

    \begin{proof}
    For a $k\times k$ matrix $M$, $\det(A)=\sum_{\pi\in S_k}sign(\pi)m_{1,\pi(1)}m_{2,\pi(2)}...m_{k,\pi(k)}$. 
    Since there are at most $k^k$ permutations, any $k\times k$ matrix with $\pm 1$ entries has determinant at most $k^k$. 
    We know that $V^{-1} = \frac{adj(V)}{\det(V)}$ and each entry of $adj(V)$ is the determinant of a $(n'-1)\times (n'-1)$ submatrix of $V$ (i.e., by removing a row and column). 
    Also since $V$ is an invertible $\pm 1$ matrix, $|\det(V)|\geq 1$. Hence we have that entries of $V^{-1}$ are bounded by $(n'-1)^{n'-1} = O(n^n)$, and since $\beta = V^{-1}\one$, the entries of $\beta$ are also bounded by $n^{O(n)}$.
    \end{proof} 
    Now recall that $\|S\|_\infty \leq n\cdot(\sum_{i=1}^T \alpha_i) = n \cdot (\sum_{j=1}^{n'} \beta_j)$ since $\beta$ is a $n'$ dimensional vector containing the non-zero $x_i$'s. \Cref{cl:beta} implies that this quantity is at most $n\cdot(\sum_{j=1}^{n'} n^{O(n)}) \leq n\cdot (n'\cdot n^{O(n)}) = n^{O(n)}$.

\subsection{Local Search for General Graphs: Upper Bounds}
\label{subsection:graphs-L-local-search}

In this section, we will show upper bounds for a simple variant of
\local involving flips along directed paths instead of single
edges. We will refer to it by \pathlocal with parameter $L$
(\Cref{algorithm:path-local}). This is simpler than the method
involving expander decompositions (\Cref{theorem:overall}), but does
not guarantee logarithmic bounds. (In the next section, we show our analysis is tight.) % \vgnote{Is it better to mention here or after the theorem that using $L=1$ gives $O(n^{1/3})$ discrepancy which matches the lower bound for \local.}

\begin{algorithm}
\caption{\pathlocal}
\label{algorithm:path-local}
\textbf{Input:} Graph $G=(V,E)$ and an initial partial coloring, Parameter $L$.\\
\textbf{Output:} Revised orientation which is a local optimum.
\begin{algorithmic}[1]
    \State Arbitrarily direct any undirected edges in $G$.
    \State While there exists a directed path $(u_0,\ldots ,u_l)$ with $l\leq L$ such that $\disc(u_l) > \disc(u_0) + 2$, flip all the edges in the directed path.
\end{algorithmic}
\end{algorithm}  

% Recall that in the edge orienteering problem, we have $V=\{1,\ldots ,n\}$ and a dynamically changing edge set $E$. Each edge must be given an orientation and the discrepancy of every vertex must be kept small in magnitude. The discrepancy of a vertex $S(u):=$ number of incoming edges $-$ number of outgoing edges.
\begin{figure}
    \centering
    \includegraphics{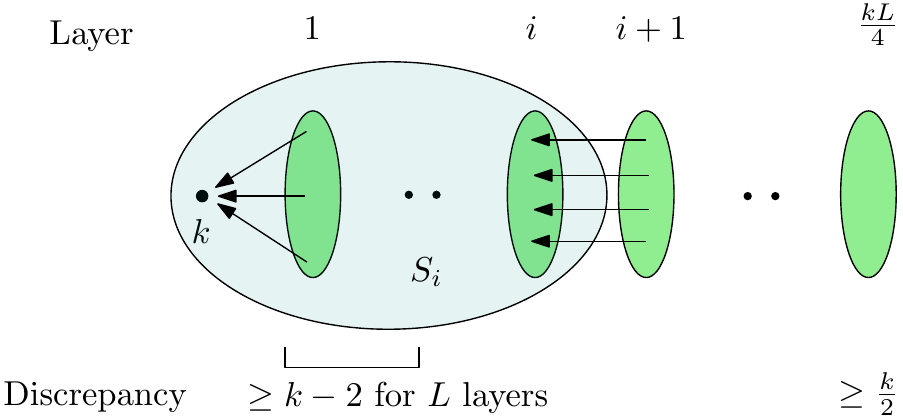}
    \caption{Discrepancy of a local optimum of \pathlocal}
    \label{fig:graphs-upper-bound}
\end{figure}

\begin{theorem}
\label{theorem:local-search-L-tradeoff}
Suppose we start with the empty graph on $n$ vertices and it undergoes adversarial edge insertions and deletions such that at any point, the graph does not have multiple edges. Then there is a deterministic algorithm that achieves $O(D)$ discrepancy with $O(\sqrt{n/D})$ amortized recourse for any $\Omega(1)\leq D \leq O(n)$. 
\end{theorem}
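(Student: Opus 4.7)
The algorithm I propose is \pathlocal (\Cref{algorithm:path-local}) with parameter $L := \lceil n/D \rceil$. After each adversarial insertion or deletion, the algorithm runs the local-search loop to termination; each newly inserted edge is oriented from its higher-discrepancy endpoint toward its lower-discrepancy endpoint. The argument splits cleanly into a discrepancy analysis and an amortized recourse analysis.

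\emph{Discrepancy.} I would first show that any local optimum of \pathlocal with parameter $L$ has maximum discrepancy $O(n/L) = O(D)$, by adapting the BFS argument of \Cref{theorem:local-search-expander} to general simple graphs (dropping the expansion hypothesis). Let $v$ attain the maximum signed discrepancy $k>0$ and perform reverse BFS from $v$ along in-edges, yielding layers $L_i$ and prefixes $S_i := \bigcup_{j \le i} L_j$. Local optimality of \pathlocal with parameter $L$ forces $d(w) \ge k-2$ for every $w \in S_L$. The identity $\sum_{w \in S_i} d(w) = |E(L_{i+1}, S_i)| - |E(S_i, V\setminus S_i)|$ (all cross-edges into $S_i$ from outside must originate in $L_{i+1}$ by BFS), combined with the simple-graph bound $|E(L_{i+1}, S_i)| \le |L_{i+1}|\cdot|S_i|$, gives $|L_{i+1}| \ge k-2$ for each $i < L$. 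Hence $n \ge |S_L| \ge L(k-2)$, so $k = O(n/L) = O(D)$. A direct case check (mimicking the single-edge argument: flipping a path with $d(u_l) > d(u_0) + 2$ sends $d(u_0)$ to a value strictly below $d(u_l)$) shows that neither flips nor insertions under the proposed orientation rule raise the maximum discrepancy, so the $O(D)$ bound persists at every time instant.

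\emph{Recourse.} For the amortized recourse I would use the $\ell_2$ potential $\Phi(\dir{G}) := \sum_{v} d(v)^2$, together with four structural facts: (i) an insertion under the proposed orientation rule increases $\Phi$ by at most $2$; (ii) a deletion increases $\Phi$ by $O(D)$, using the local-optimality invariant $d(v) \le d(u) + 2$ on every directed edge $u \to v$; (iii) each path flip $u_0 \to \cdots \to u_l$ decreases $\Phi$ by $4(\gamma - 2) \ge 4$ where $\gamma := d(u_l) - d(u_0)$; and (iv) when the current maximum discrepancy is $M$, the same BFS counting as above guarantees the existence of an improving directed path of length $O(n/M)$ whose endpoint gap satisfies $\gamma = \Omega(M)$. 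Having the algorithm always choose such a BFS-shortest improving flip rooted at the current max-discrepancy vertex, each flip drops $\Phi$ by $\Omega(M)$ while costing only $O(n/M)$ recourse, i.e., $\Omega(M^2/n)$ drop per unit recourse. Splitting the run at the threshold $M^\star := \sqrt{nD}$ and using that in the high regime each unit of recourse is worth $\Omega(D)$ potential drop while in the low regime each update can inject only $O(M^\star) = O(\sqrt{nD})$ fresh potential, balancing against the total potential budget yields amortized recourse $O(\sqrt{n/D})$.

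\emph{Main obstacle.} The technical crux is the two-regime amortization. A single-regime bookkeeping only gives $O(n/D)$ amortized recourse, so the improvement to $O(\sqrt{n/D})$ genuinely requires the regime split plus a careful argument that potential generated while $M$ is large is absorbed by short, potential-efficient flips, and that the low-$M$ regime cannot accumulate enough budget to produce too many long flips. Ensuring that a BFS-shortest improving path of the claimed length and gap always exists during execution—not just at termination—even on an arbitrarily irregular simple graph, and that the charging scheme does not double-count potential across regime transitions, is what drives the proof.
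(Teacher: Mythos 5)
There is a genuine gap, and it sits exactly where you flagged it. Your discrepancy analysis of \pathlocal only extracts \emph{linear} layer growth: from $(k-2)|S_i| \le |\delta^-(S_i)| \le |L_{i+1}|\,|S_i|$ you conclude $|L_{i+1}| \ge k-2$, hence $k = O(n/L)$. The paper's proof of \Cref{theorem:local-search-L-tradeoff} combines the \emph{same two} inequalities differently: from $n_i n_{i+1} \ge \frac{k}{2}|S_i|$ and AM--GM it derives the quadratic growth $|S_i| \ge i^2k/8$ (\Cref{claim:L-local-search-nodes}), which yields the much stronger bound $k = O(n^{1/3}/L^{2/3})$ at a local optimum. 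This difference is not cosmetic. With your bound you must take $L = \Theta(n/D)$ to guarantee discrepancy $O(D)$, so every flip costs up to $n/D$ recourse; with the paper's bound one takes $L = \Theta(\sqrt{n}/D^{3/2})$, and then the completely standard single-regime accounting (each update injects $O(D)$ potential, each flip removes at least $1$ and costs at most $L$) already gives amortized recourse $D \cdot L = O(\sqrt{n/D})$. The same lossiness infects your fact (iv): linear growth only guarantees an improving path of length $O(n/M)$ with gap $\Omega(M)$, i.e., $O(n/M^2)$ recourse per unit of potential, whereas quadratic growth would give length $O(\sqrt{n/M})$.

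The two-regime amortization you propose to close this factor cannot work as stated. Since the theorem requires the discrepancy to stay at $O(D)$ at every time step, the maximum discrepancy $M$ never exceeds $O(D)$, and for $D \le n$ your threshold satisfies $M^\star = \sqrt{nD} \ge D$; the ``high regime'' $M \ge M^\star$ is therefore empty, and the analysis collapses back to the single-regime bound of $O(n/D)$ that you correctly identify as insufficient. Quantitatively, to dissipate the $\Omega(TD)$ total potential at cost $O(T\sqrt{n/D})$ you would need flips whose recourse-per-unit-potential is $O(\sqrt{n/D^3})$, and the paths your fact (iv) supplies at $M = \Theta(D)$ only achieve $O(n/D^2)$, which is worse by a factor of $\sqrt{n/D}$. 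One further point: even with the quadratic bound, \pathlocal alone only covers $\Omega(1) \le D \le O(n^{1/3})$ (since $L \ge 1$); the paper handles $\Omega(n^{1/3}) \le D \le O(n)$ with a separate variant (\Cref{lemma:delta-local-search}) that flips single edges only when the endpoint gap exceeds a threshold $\delta$, which your proposal does not address.
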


\begin{proof}
Recall that \pathlocal uses the following rule: if there is a directed path $a\rightarrow b$ of length $\leq L$ such that $\disc(b)>\disc(a)+2$, then flip all the edges in the path. Let us bound the discrepancy at a local optimum. 
Let $\vec{G}=(V, \vec{E})$ be the directed graph
corresponding to a local optimum. Consider the node $v$ with
largest discrepancy $k$; without loss of generality, assume
$k \geq 0$. We perform BFS in $\vec{G}$
starting from $v$, following \emph{incoming} edges at each step. Let
$L_i$ be the vertices at level $i$ during this BFS, i.e., $L_i$ is
the set of vertices $w$ for which the shortest path in $\vec{G}$ to
$v$ contains $i$ edges. Let $S_i$ denote the set of vertices in the
first $i$ layers, i.e., $S_i := \bigcup_{i'=0}^i L_{i'}$. 
The fact
that $\vec{G}$ is a local optimum means there are no improving flips,
and hence the discrepancy of any vertex in $L_i$ is at least
$k-2\left(\lfloor \frac{i-1}{L}\rfloor + 1\right)$ (see \Cref{fig:graphs-upper-bound}). 
In turn, this implies that there are at least $kL/2$ layers,
and the discrepancy of any vertex in $S_{kL/4}$ is at least $k/2$.

% Consider the vertex $u$ with the largest $|S(u)|$ and $S(u)=k$. Wlog, $k > 0$ since otherwise you can flip all the edges and it will remain a local minimum. % Suppose for contradiction, $k>6n^{1/3}/L^{2/3}$. 
% Consider the nodes from which $u$ is reachable by a directed path and segregate them based on minimum distance to $u$ (see \Cref{fig:graphs-upper-bound}). Firstly, there are at least $kL/4$ layers. This is because using local optimum criterion, nodes in  the first $L$ layers have discrepancy at least $k-2$, nodes in the next $L$ layers have discrepancy at least $k-4$ and so on. So if you had less than $kL/4$ layers, you have a set of nodes each with positive discrepancy ($\geq k/2$)  but no incoming edges into the set, which is a contradiction. We will now concentrate on these $kL/4$ layers, all of whose nodes have discrepancy at least $k/2$. 

We now prove the following claim about the rate at which the number of nodes grows.
\begin{claim}
\label{claim:L-local-search-nodes}
Let $n_i$ be the number of nodes in layer $i$ for $0\leq i\leq \frac{kL}{4}$, then $|S_i| = \sum_{j=0}^i n_j \geq \frac{i^2 k}{8}$.
\end{claim}
\begin{proof}
By induction. Base case: Trivial since $|S_0| = 1$.
Induction step: Given a directed graph $\vec{G}$ and a subset $X$ of vertices, let $\delta^-(X)$ denote the set of incoming edges
into $X$ (from $V \setminus X$). Since there are no repeated edges and all edges in $\delta^-(S_i)$ are directed from $L_{i+1}$ to $L_i$, we have $|\delta^-(S_i)| \leq n_i n_{i+1}$.
Since all vertices in $S_i$ have discrepancy at least $k/2$, the number of incoming edges $|\delta^-(S_i)|$ must be at least $\frac{k}{2} \cdot |S_{i}|$. Combining the two inequalities, we have
$n_i n_{i+1}\geq \frac{k}{2} \cdot |S_{i}|.$
Now using this this with the definition of $S_{i+1}$,
\begin{align*}
    |S_{i+1}| ~=~ |S_{i-1}|+n_i+n_{i+1} 
              ~&\geq \frac{(i-1)^2 k}{8} + 2\sqrt{n_i n_{i+1}} \quad \text{(induction hypothesis and AM-GM inequality)}\\
              &\geq \frac{(i-1)^2 k}{8} + 2\sqrt{\frac{k}{2}\cdot|S_i|} \\
              &\geq \frac{(i-1)^2 k}{8} + 2\sqrt{\frac{k}{2}\cdot\frac{i^2 k}{8}} \quad \text{(induction hypothesis)}\\
              &\geq \frac{(i-1)^2 k}{8} + \frac{ik}{2} 
              ~=~ \frac{(i+1)^2 k}{8} \enspace,
\end{align*}
which completes the proof of \Cref{claim:L-local-search-nodes}.
\end{proof}
Now, \Cref{claim:L-local-search-nodes} implies $|S_{kL/4}|\geq (kL/4)^2 k/8 = k^3 L^2/128$. Since we also have $|S_{kL/4}|\leq n$, we get $k\leq O(n^{1/3}/L^{2/3})$. 
To get the amortized recourse, let us track the $l_2$ potential $\Phi = \sum_u \disc(u)^2$, which is initially zero. Before each insertion/deletion, we are at a local optimum, which has a discrepancy of at most $n^{1/3}/L^{2/3}$. So $\Phi$ can change by at most $O(n^{1/3}/L^{2/3})$ when you insert/delete. In each step of local search, $\Phi$ decreases by at least $1$. Since $\Phi \geq 0$, the total number of local search steps is at most $T \cdot \frac{n^{1/3}}{L^{2/3}}$. Since a local search step flips at most $L$ edges, the total recourse is at most $L$ times the number of local search steps. This implies that the amortized recourse is at most $ \frac{n^{1/3}}{L^{2/3}} \cdot L = n^{1/3}L^{1/3}$. This proves \Cref{theorem:local-search-L-tradeoff} in the range $\Omega(1)\leq D\leq O(n^{1/3})$.

For the range $\Omega(n^{1/3}) \leq D \leq n$, we use the following lemma.
\begin{lemma}
\label{lemma:delta-local-search}
Perform local search with $L=1$ but perform a step when $\disc(v)-\disc(u)>\delta$. Then a local optimum has discrepancy at most $n^{1/3}\delta^{2/3}$.
\end{lemma}
\begin{proof}
Very similar to the previous analysis. There will now be $k/2\delta$
layers, which will imply that total number of vertices is at least
$(k/2\delta)^2 k/8 = \Omega(k^3/\delta^2)$. Since this must be at most
$n$, this implies that the discrepancy $k=O(n^{1/3}\delta^{2/3})$. The
potential drops by at least $\delta$ in each local search step, so amortized recourse will be at most $O(n^{1/3}\delta^{2/3}/\delta) = O(n^{1/3}/\delta^{1/3})$. 
\end{proof}
This completes the proof of \Cref{theorem:local-search-L-tradeoff}.
\end{proof}

As an aside, setting $L=1$ and $\delta = 2$ in
\Cref{lemma:delta-local-search} shows the local search in
\Cref{sec:expand-decomp} achieves $O(n^{1/3})$ discrepancy when
performed on general graphs instead of on expanders; this matches the
lower bound in~\Cref{lemma:lower-bound-graphs}.

\subsection{Tightness of discrepancy bound of \pathlocal}

\begin{figure}
    \centering
    \includegraphics{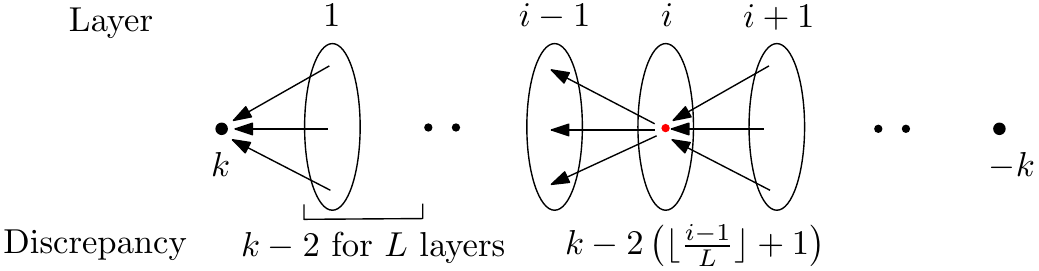}
    \caption{Example showing discrepancy bound at a local optimum of \pathlocal in \Cref{subsection:graphs-L-local-search} is tight.}
    \label{fig:graphs-lower-bound}
\end{figure}

We now provide a family of instances to show that the discrepancy
bound for \pathlocal given in \Cref{subsection:graphs-L-local-search}
is tight. Note that \pathlocal with $L=1$ is the same as \local, and
so using $L=1$ in the lemma below reproduces the $\Omega(n^{1/3})$
lower bound of \Cref{lemma:lower-bound-graphs}.

\begin{lemma}
For \pathlocal with parameter $L$, there is a family of instances of increasing size that are each a local optima and have discrepancy $k$ and number of nodes $O(k^3 L^2)$.
\end{lemma}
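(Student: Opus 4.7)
The plan is to extend the $L=1$ construction in \Cref{lemma:lower-bound-graphs} by grouping layers into ``super-blocks'' of $L$ consecutive layers, so that within a super-block every vertex has the same discrepancy and the discrepancy drops by exactly $2$ between consecutive super-blocks. Concretely, I build a layered digraph on layers $L_0, L_1, \ldots, L_M$ with $M = \Theta(kL)$, put all directed edges from $L_{i+1}$ to $L_i$ as a complete bipartite digraph, and pick the target discrepancy at layer $i$ to be
\[
d_i \;:=\; k - 2\lceil i/L \rceil,
\]
so that layer $0$ has discrepancy $k$, layers $1{,}\ldots{,}L$ have discrepancy $k-2$, layers $L+1,\ldots,2L$ have discrepancy $k-4$, and so on until $d_i$ reaches $0$, after which the instance is extended symmetrically (just as in the $L=1$ case). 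Since a vertex in $L_i$ has $n_{i+1}$ incoming and $n_{i-1}$ outgoing edges, choosing sizes by the recurrence $n_{i+1} = n_{i-1} + d_i$ with base cases $n_0 = 1$, $n_1 = k$ realizes this discrepancy pattern. No parallel edges ever arise because the edges form a disjoint union of complete bipartite graphs across adjacent layers.

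To verify local optimality under \pathlocal with parameter $L$, consider an arbitrary directed path $u_0 \to u_1 \to \cdots \to u_l$ with $l \leq L$. Since every edge decreases the layer index by exactly one, there exists $i_0$ with $u_j \in L_{i_0 - j}$, and hence
\[
\disc(u_l) - \disc(u_0) \;=\; d_{i_0 - l} - d_{i_0} \;=\; 2\bigl(\lceil i_0/L \rceil - \lceil (i_0 - l)/L \rceil\bigr) \;\leq\; 2,
\]
because $l \leq L$ forces the two ceilings to differ by at most one. Thus the flip condition $\disc(u_l) > \disc(u_0) + 2$ is never triggered, and the all-edges-pointing-towards-root orientation is a local optimum with maximum discrepancy exactly $k$ (attained at the root).

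For the node count, the recurrence and the fact that $d_i$ is piecewise constant on super-blocks of length $L$ imply that the size at the start of super-block $j$ satisfies
\[
n_{jL} \;\approx\; \frac{L}{2}\sum_{b=0}^{j-1}(k - 2b),
\]
so sizes grow roughly linearly, peak at $\Theta(Lk^2)$ around the middle super-block ($j \approx k/2$), and then decrease symmetrically. Since there are $\Theta(kL)$ layers and since within each super-block the layer sizes are essentially constant, the total number of vertices is
\[
\sum_i n_i \;\approx\; L\sum_{j=0}^{k}n_{jL} \;\approx\; L \cdot \frac{L}{2}\sum_{j=0}^{k} j(k-j) \;=\; O(k^3 L^2).
\]

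The main obstacle is bookkeeping the off-by-one: one has to be careful about the precise alignment of super-blocks with the discrepancy pattern (so that the ceiling argument above is airtight at the boundary between super-blocks) and about handling the endpoints of the recurrence (so that the symmetric extension produces a clean zero-discrepancy middle layer and closes back to a single vertex on the other end). Once this is in place, the local-optimality check and the size bound reduce to the $L=1$ analysis applied at the coarser scale of super-blocks, with an extra factor of $L$ in each dimension arising from having $L$ layers per super-block.
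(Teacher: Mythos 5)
Your construction is the same as the paper's: a layered digraph with all edges oriented toward the root, layer sizes chosen via the recurrence $n_{i+1}=n_{i-1}+d_i$ with $n_0=1$, $n_1=k$ so that discrepancy drops by $2$ once every $L$ layers, and the same $O(kL)\cdot O(k^2L)$ node count. Your ceiling-based verification of local optimality and the parity bookkeeping for the symmetric closure match the paper's argument (which uses $k$ even and $L$ odd for the same purpose), so the proposal is correct and essentially identical in approach.
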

\begin{proof}
We will construct a layered directed graph with $O(kL)$ layers (see \Cref{fig:graphs-lower-bound}). The instance will be symmetric and we will use even $k$ and odd $L$. Let us denote layer $i$ by $L_i$. For every $u\in L_i$, $v\in L_{i+1}$, there is a directed edge from $v$ to $u$. We will set up the number of vertices $n_i$ in each layer so that the root has discrepancy $k$ and each node in $L_i$ has discrepancy $k-2\left(\lfloor \frac{i-1}{L}\rfloor + 1\right)$. Let us look at a node $u\in L_i$. It has incoming edges from every $v\in L_{i+1}$ and has outgoing edges to every $v'\in L_{i-1}$. So it suffices to have $n_{i+1}-n_{i-1} = \disc(u) = k-2\left(\lfloor \frac{i-1}{L}\rfloor + 1\right)$. The base cases are $n_0 = 1$, $n_1=k$. It is not hard to see that since $k$ is even and $L$ is odd, this recurrence will result in a symmetric instance with an odd number of zero discrepancy layers in the center. % We use this recurrence up to $L_{k/2}$ where the discrepancy is zero, and the remaining part is symmetric to this half (see \Cref{fig:graphs-lower-bound}). 
There are $O(kL)$ layers and from $n_{i-1}$ to $n_{i+1}$, the increase is at most $k$. So the total number of nodes (up to constant factors) is at most $k + 2k + \ldots  + (kL)k = (1+2+\ldots +kL)k = O(k^3 L^2)$. 
\end{proof}

One can get tightness for the other side of the tradeoff (\Cref{lemma:delta-local-search}) by the same idea.

%%%%%%%%%%%%%%%%%%%%%%%%%%%%%%%%%%%%%%%%%%

\subsection{Local Search for Forests}

If we are promised that at each time $t$ the underlying graph is a forest (in the undirected sense), then we can obtain constant discrepancy with logarithmic recourse using the variant of local search that flips along directed paths.
\begin{figure}
    \centering
    \includegraphics{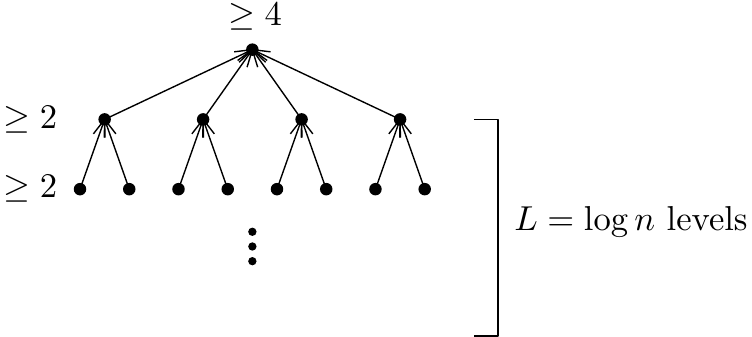}
    \caption{Discrepancy of a local optimum in the forest case}
    \label{fig:trees}
\end{figure}

\begin{lemma}
For forests, performing \pathlocal with $L=\log n$ gives $O(1)$ discrepancy and $O(\log n)$ amortized recourse.
\end{lemma}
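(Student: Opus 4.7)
The plan is to adapt the BFS-based local-search analysis used earlier in the paper (the expander case and the $n^{1/3}/L^{2/3}$ bound for general graphs), but exploit the acyclicity of a forest to obtain a much stronger growth of the BFS layers. Fix a moment and focus on a single tree $T$ in the current forest; let $v$ be the vertex in $T$ with the largest signed discrepancy, which we may assume to be $k\ge 0$. We aim to show $k=O(1)$ when $L=\log n$.

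Perform the same BFS as in the proofs of \Cref{theorem:local-search-expander} and \Cref{theorem:local-search-L-tradeoff}, starting from $v$ in $\vec{G}$ and following only incoming edges, producing layers $L_0=\{v\},L_1,L_2,\ldots$ and cumulative sets $S_i=\bigcup_{j\le i}L_j$. Because the underlying graph is a tree, the unique path in $T$ from any vertex $u$ to $v$ is the only candidate for a directed $u\to v$ path, so $u\in L_i$ precisely when this path has undirected length $i$ and every edge on it is oriented toward $v$. Consequently each $u\in L_{i+1}$ contributes exactly one edge to $\delta^-(S_i)$, namely the edge to its parent in the rooted tree, and no vertex outside $L_{i+1}$ can contribute an edge into $S_i$ (any such vertex would have a directed path to $v$ of length $\le i+1$ and so would itself lie in $S_{i+1}$). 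This yields the key identity $|\delta^-(S_i)|=|L_{i+1}|$.

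Local-optimality of \pathlocal applied to length-$L$ suffixes of the BFS paths gives by induction that every $u\in L_i$ has signed discrepancy at least $k-2\lceil i/L\rceil$ (exactly as in the proof of \Cref{theorem:local-search-L-tradeoff}). Summing over $S_i$ and using
\[
\sum_{u\in S_i}\disc(u)=|\delta^-(S_i)|-|\delta^+(S_i)|\le|\delta^-(S_i)|=|L_{i+1}|,
\]
we deduce $|L_{i+1}|\ge |S_i|\cdot(k-2\lceil i/L\rceil)$. For $i\le Lk/4$ this becomes $|S_{i+1}|\ge(1+k/2)|S_i|$, so iterating gives $(1+k/2)^{Lk/4}\le|S_{Lk/4}|\le n$. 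Taking logarithms, $(Lk/4)\log(1+k/2)\le\log n$; with $L=\log n$ and $k\ge 2$ this forces $k\le 8/\log(1+k/2)=O(1)$, as desired.

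The recourse bound uses the standard $\ell_2$-potential $\Phi=\sum_u\disc(u)^2$. A single path-flip of $(u_0,\ldots,u_\ell)$ in \pathlocal alters only the discrepancies of the endpoints (interior vertices each see an incoming and an outgoing edge flip, which cancel out), and the condition $\disc(u_\ell)>\disc(u_0)+2$ gives potential change $8-4(\disc(u_\ell)-\disc(u_0))\le -4$. Each adversarial update shifts the discrepancy of two endpoints by $\pm 1$; since before the update $\Phi$ is at the local-optimal configuration with $\max_u\disc(u)=O(1)$, the update can increase $\Phi$ by only $O(1)$. Hence over $T$ updates the total number of flips is $O(T)$, and since each flip reorients at most $L=O(\log n)$ edges, the amortized recourse is $O(\log n)$. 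The main delicate point is the identity $|\delta^-(S_i)|=|L_{i+1}|$: it is precisely where the tree hypothesis replaces the weaker $|\delta^-(S_i)|\le n_i n_{i+1}$ used for general graphs, converting polynomial growth of $|S_i|$ into exponential growth and thereby lowering the discrepancy from $\Theta(n^{1/3}/L^{2/3})$ all the way to $O(\log n/L)$.
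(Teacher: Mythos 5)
Your proof is correct and follows essentially the same route as the paper: both arguments run the BFS along incoming edges, use the path-local-optimality condition to keep all vertices within $\log n$ levels at discrepancy $\ge k-2$, and exploit acyclicity to force exponential growth of the BFS layers (your cut identity $|\delta^-(S_i)|=|L_{i+1}|$ and the resulting $(1+k/2)$-fold growth per layer is just a quantitative repackaging of the paper's observation that each node of discrepancy $\ge 2$ spawns at least two fresh children, yielding a binary tree of depth $\log n$ and hence $\ge 2n-1$ vertices unless $k\le 3$), followed by the same $\ell_2$-potential argument for the $O(\log n)$ amortized recourse.
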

\begin{proof}
First, we bound the discrepancy at a local optimum by $3$. For this, suppose for contradiction, the discrepancy (wlog it is positive) is at least 4. Consider the node with discrepancy at least 4 (see \Cref{fig:trees}). It will have a at least 4 incoming edges. Due to local optimum condition with $L=\log n$, each of these children will have discrepancy at least $2$, and therefore will have at least $2$ children. Since $L=\log n$ and forests do not have cycles, we can continue this binary-tree like growth argument till $\log n$ layers. Now  the number of nodes  will be at least $2^{\log n + 1} - 1 = 2n-1$, which is a contradiction. So we get $O(1)$ discrepancy and by the potential based argument as in previous proofs (e.g., \Cref{theorem:local-search-L-tradeoff}), we get $O(\log n)$ amortized recourse. 
\end{proof}

% \agnote{What about the case where the graph has small arboricity: it
%   can be covered using a small number of forests? Can we say something
%   there?}

%\input{insert-only}

\section{Conclusions and Insert-Only Algorithms} \label{sec:insert-only}

In this paper we initiate the study of  fully dynamic discrepancy problems, where vectors/edges can both arrive
or depart at each time step, and the algorithm must always maintain a low-discrepancy signing. Prior algorithms for online discrepancy could only handle arrivals, and that too only for an oblivious adversary. We obtain near-optimal discrepancy bounds for both the  edge-orientation and the vector balancing cases.  We achieve the former  in near-optimal $\tilde{O}(1)$ amortized recourse, and the latter in $\title{O}(n)$ amortized recourse (which is exponentially better in $T$ than the naive algorithm which recolors after each update).

The main open question left by our work is whether we can achieve near-optimal discrepancy for vector balancing in $\tilde{O}(1)$ amortized recourse per update. If true, this would imply our edge-orientation results as a special case. 

\begin{open} \label{openPb:FullDyn}
For fully-dynamic vector balancing with  vectors of $\ell_2$ length at most $1$ arriving or departing, can we achieve $\poly\!\log(nT)$ discrepancy in $\poly\!\log(nT)$ amortized recourse per update?
\end{open}

Currently,  we don't know how to achieve near-optimal discrepancy  even using $o(n)$ amortized recourse. However, in the case of insertions  only (i.e., no departures),  we can answer the above question affirmatively using the following \DR algorithm. Note that prior works, e.g.~\cite{ALS-STOC21,LSS-arXiv21}, could only achieve this against an oblivious adversary.

The \DR algorithm is simple: when the $t^{th}$ vector arrives,
let $\ell$ be the largest power of $2$ that divides $t$. Use any
offline algorithm $\cA$ to construct a fresh signing of the most
recently arrived $2^\ell$ vectors. 

\begin{theorem}
  For any sequence of $T$ adaptive inserts, suppose that an offline algorithm
  $\cA$ can produce a signing of discrepancy at most $D$ when given
  any subset of these vectors. Then the \DR algorithm achieves 
  at each timestep $t \leq T$ discrepancy  at most
  $D \lceil\log_2 t \rceil$.  Moreover, each vector is assigned a new
  sign at most $\lceil \log_2 T\rceil$ times.
\end{theorem}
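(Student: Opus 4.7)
The plan is to exploit the dyadic structure implicit in the algorithm: after processing timestep $t$, the sequence of vectors $a_1, \ldots, a_t$ is partitioned into contiguous ``active blocks'' in one-to-one correspondence with the $1$-bits of the binary representation of $t$. That is, if $t = \sum_{i \in I} 2^i$, there is a block of size $2^i$ for each $i \in I$, arranged from largest (oldest) to smallest (newest). I would prove this invariant by induction on $t$: the update at time $t+1$ re-signs the last $2^{\ell(t+1)}$ vectors, which mirrors the ``binary addition with carry'' of incrementing $t$, since $\ell(t+1)$ equals the number of trailing $1$-bits of $t$. Concretely, the $\ell(t+1)$ smallest current blocks (of sizes $2^0, 2^1, \ldots, 2^{\ell(t+1)-1}$) together with $a_{t+1}$ merge into a single block of size $2^{\ell(t+1)}$, matching the new bit pattern.

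With the invariant in hand, the discrepancy bound is immediate. Each block was signed by a single call of $\cA$ applied to exactly the vectors of that block, so by hypothesis its signed sum has $\ell_\infty$-norm at most $D$. The total signed sum at time $t$ decomposes as the sum of the block signed sums, and the triangle inequality gives a discrepancy of at most (number of blocks) $\cdot D$, which is the Hamming weight of $t$ in binary, bounded by $\lceil \log_2 t \rceil$. Importantly, the bound $D$ holds for \emph{any} subset of vectors, so the adversary's adaptive choices cannot subvert the per-block guarantee---each call to $\cA$ operates on a fixed (albeit data-dependent) set, and the triangle inequality over blocks is oblivious to any correlations between the adversary and prior \cA{} invocations.

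For the recourse bound, I would fix any vector $a_{t_0}$ and track the size of the block containing it over time. The only way $a_{t_0}$ is re-signed is via a merge operation, and by the invariant such a merge strictly increases the block size containing $a_{t_0}$ from some $2^i$ to some $2^j$ with $j > i$. Hence the block size at least doubles each time $a_{t_0}$ is re-signed; since the containing block can grow to at most $T$, the vector is re-signed at most $\lceil \log_2 T \rceil$ times. The only step requiring a bit of care is the base case---namely correctly identifying $a_{t_0}$'s initial block size just after it arrives (it is $2^{\ell(t_0)}$, which counts as its ``first'' sign assignment rather than a re-sign)---but this is a routine bookkeeping step, and is not the main obstacle. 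In fact, the entire argument is essentially an accounting exercise once the dyadic invariant is established; the conceptual content of the proof lies in identifying that invariant and observing that adaptivity is harmless because $\cA$'s guarantee is worst-case over subsets.
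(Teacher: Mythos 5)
Your proposal is correct and follows essentially the same route as the paper: the dyadic block invariant you establish by induction is exactly the paper's decomposition into subintervals $\{v_{\tau_{i-1}+1},\ldots,v_{\tau_i}\}$ indexed by the $1$-bits of $t$, with the triangle inequality over blocks giving the discrepancy bound and the block-doubling argument giving the recourse bound. Your write-up actually spells out the ``binary addition with carry'' induction that the paper leaves as a remark.
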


\begin{proof}
  Let $t = 2^{a_1} + 2^{a_2} + \ldots + 2^{a_s}$, where
  $a_1 > a_2 > \ldots > a_s \geq 0$. Let
  $\tau_i = \sum_{j \leq i} 2^{a_j}$.  To prove the discrepancy bound,
  the main observation is that at each timestep $t$, the current
  signing consists of the output of $\cA$ on $\lceil \log_2 t \rceil$
  different subintervals of the input sequence:
  $\{v_1, \ldots, v_{\tau_1}\}$, $\{v_{\tau_1+1}, \ldots, v_{\tau_2}\}$,
  all the way down to $\{v_{\tau_{s-1}+1}, \ldots, v_{\tau_s} =
  v_t\}$. (This can be proved using an inductive argument.) The
  discrepancy for each of these logarithmically-many is at most $D$,
  by our assumption on the algorithm $\cA$, which proves the first
  claim. The second claim uses that each time a vector is given a new
  sign, it belongs to an subinterval of twice the length; this can
  happen only $O(\log T)$ times.
\end{proof}

Using the algorithm of~\cite{Banaszczyk-Journal98,BansalDG-FOCS16} gives us the following result.

\begin{corollary} \label{Cor:insertOnly}
  There is an algorithm for the insert-only setting that ensures a
  discrepancy of $O(\sqrt{\log n} \log T)$ for any sequence of vectors
  of $\ell_2$ length at most $1$ (and hence discrepancy of
  $O(\sqrt{s \log n} \log T)$ for any sequence of $s$-sparse vectors   with entries in $[-1,1]$) in  $O(\log T)$ amortized recourse    per update.
\end{corollary}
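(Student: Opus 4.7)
The plan is to apply the theorem just proved about the \DR algorithm as a black box, plugging in a suitable offline discrepancy algorithm $\cA$. The theorem tells us that if $\cA$ guarantees discrepancy at most $D$ on any subset of the vectors, then \DR achieves discrepancy at most $D \lceil \log_2 t \rceil$ at each time $t$ and performs at most $\lceil \log_2 T \rceil$ sign changes per vector. So the entire corollary reduces to choosing an appropriate $\cA$ for the two input classes.

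For the first part, I would take $\cA$ to be the algorithm of Banaszczyk~\cite{Banaszczyk-Journal98} (or its algorithmic realization by Bansal-Dadush-Garg~\cite{BansalDG-FOCS16}), which is guaranteed to produce a signing of discrepancy $O(\sqrt{\log n})$ for any set of vectors of $\ell_2$ length at most $1$. Plugging $D = O(\sqrt{\log n})$ into the theorem immediately yields discrepancy $O(\sqrt{\log n}\,\log T)$ with $O(\log T)$ amortized recourse, as claimed. The only thing worth double-checking is that the guarantee of $\cA$ is subset-closed: if the theorem is invoked on a subsequence of the adaptively inserted vectors, those vectors still have $\ell_2$ length at most $1$, so Banaszczyk's bound applies verbatim.

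For the second part, I would reduce $s$-sparse vectors with entries in $[-1,1]$ to the unit-$\ell_2$-ball case by a simple rescaling: any such vector $a$ satisfies $\|a\|_2 \le \sqrt{s}$, so $a/\sqrt{s}$ lies in the unit ball. Run \DR on the rescaled vectors with $\cA$ as above; any signing of the rescaled sequence with discrepancy $\Delta$ corresponds to a signing of the original sequence with discrepancy $\sqrt{s} \cdot \Delta$ (signings are scale-invariant). This gives discrepancy $O(\sqrt{s\log n}\,\log T)$ with the same recourse bound of $O(\log T)$ per update.

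There is no real obstacle here beyond these two observations; the heavy lifting was done by the theorem about \DR and by the offline bounds of~\cite{Banaszczyk-Journal98,BansalDG-FOCS16}. The one subtlety to flag in the write-up is that the theorem holds against an \emph{adaptive} adversary precisely because \DR's signing at time $t$ depends only on the vectors themselves (through $\cA$) and on $t$, not on any internal randomness that the adversary could probe; this is what distinguishes \Cref{Cor:insertOnly} from prior insert-only results such as~\cite{ALS-STOC21,LSS-arXiv21} that required obliviousness.
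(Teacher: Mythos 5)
Your proposal is correct and matches the paper's (one-line) proof exactly: the corollary is obtained by instantiating the \DR theorem with the offline algorithm of~\cite{Banaszczyk-Journal98,BansalDG-FOCS16} giving $D = O(\sqrt{\log n})$ for unit-$\ell_2$ vectors, and handling $s$-sparse vectors via the observation that their $\ell_2$ norm is at most $\sqrt{s}$. Nothing further is needed.
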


Another interesting future direction  is to get  near-optimal discrepancy for small \emph{worst-case recourse} per update (instead of amortized recourse). E.g., in the setting of \Cref{openPb:FullDyn},  can we achieve $\tilde{O}(1)$  discrepancy in  $\tilde{O}(1)$ worst-case recourse per update? It will also be interesting to improve \Cref{Cor:insertOnly} to get $O(1)$ amortized recourse per update, or to even get $O(1)$  worst-case recourse per update.

\appendix

%%%%%%%%%%%%%%%%%%%%%%%%%%%%%%%%%%%%%

\IGNORE{
\section{Further Directions and Open Problems }
\snote{This is outdated and we should add recent observations}

We know that if there are only T arrivals, then there is a simple Iterative-Rounding algorithm that gets O(n) discrepancy for every prefix with O(1) recourse in each time step. Can we get similar bounds for both arrivals and departures? We are fine with poly(n  * log T) amortized discourse.

\vgnote{\begin{enumerate}
    \item As an extension of the graph orientation: For $s$-sparse $\{+1,0,-1\}$ vectors, get $\poly(s, \log n, \log T)$ discrepancy and recourse in the fully dynamic model.
    \item For $s$-sparse $\{+1,0,-1\}$ vectors, get $\sqrt{s} \cdot \poly (\log n, \log T)$ discrepancy and $O(1)$ recourse in the insertions-only model.
    \item For general vectors with $\ell_2$ norm at most $1$, get fully dynamic $\poly(\log d, \log T)$ discrepancy and $o(d)$ recourse.
\end{enumerate}}

\begin{enumerate}
    \item 
We already don't know the answer if we first get T arrivals and then T *random* departures (i.e., the departure order is a uniformly random permutation)? Note that recourse is necessary even under these assumptions because after Omega(T) deletions there will be $\sqrt{T}$ discrepancy.

Can we say anything if we only care about the entire signed sum, and not every prefix? What if we further also assume all coordinates are $\{-1,+1\}$, then can we do better than the trivial $2^n$?

\item  One approach to this problem is not to make any updates till the discrepancy becomes at least $n^{10}$. Now we want to re-color poly(n) remaining vectors to bring the total signed-sum discrepancy down to $n^5$. Is this even always possible?

It's possible to reduce this question to the following (harder?) Sparsification question.

\item Suppose we are given T signed vectors that sum up to E, where E has norm $n^{10}$. Can we find a subset of poly(n) vectors such that their signed sum (using current signs) is very close to E (say up to $n^2$)?

- The reduction from Q2 to Q3 uses that after deletion let us call the signed sum of the remaining vectors to be 2E. Now run offline discrepancy algorithm on all these vectors to get a subset that sums to nearly E. If we can sparsify this subset (i.e. using solution to Q3), then we can flip these vectors to get the total error close to 0.

- Any positive solution (if it exists) to Q3 will need to use integrality as otherwise every vector might be positively colored with its first coordinate E/T.

\item Can we resolve the question for edge-arrivals and departures? The naive strategy treat each edge separately, so gets $O(n)$ discrepancy. Can we get $poly\log(n)$?

\end{enumerate}
}

%%%%%%%%%%%%%%%%%%%%%%%%%%%%%%
\IGNORE{
\medskip
\noindent
{\bf Acknowledgments}.
We thank a number of colleagues for useful discussions.
We are grateful to Sylvia Boyd and Paul Elliott-Magwod
for information on ATSP integrality gaps.
}

%%%%%%%%%%%%%%%%%%%%%%%%%%%%%%

%\appendix
%\input{Appendix}

\appendix
\section{Missing Details of~\Cref{sec:distrib-BG}}
\label{sec:appsec2}
We first prove~\Cref{lem:dbginvariant}, which is restated below. 

\dbginv*
\begin{proof}
The proof is by induction on the height of $v$: we also add to the induction hypothesis the statement that all the indices $i \in P_v$ such that $-1 < y^v_i < 1,$ belong to $F_v$. For a leaf node, the set $F = P_j$, and so using line~\ref{l:bg}, we see that $\sum_{i \in P_j} y'_i a_i = 0$. Since $y^v_i = y'_i$ for all $i \in P_j$, the invariant~(I1) follows. Invariant~(I2) holds because of~\Cref{lem:BG}. 

Now suppose $v$ is an internal node and assume that the induction hypothesis holds for its children $v_L$ and $v_R$. Since the assignment $x$ just combines $y^{v_L}$ and $y^{v_R}$ (line~\ref{l:combine}), it follows from induction hypothesis that 
$\sum_{i \in P_v} x_i a_i = 0. $
We ensure in line~\ref{l:combine} that $\sum_{i \in F} y'_i a_i = \sum_{i \in F} a_i x_i. $ Therefore, 
$$ \sum_{i \in P_v} y^v_i a_i = \sum_{i \in F} y'_i a_i + \sum_{ i \in P_v \setminus F} x_i a_i = \sum_{i \in P_v} x_i a_i = 0. $$
This proves that~(I1) is satisfied for $y^v$. For property~(I2), first observe that if $i \notin F$, then $x_i \in \{-1, +1\}$ by induction hypothesis, and so $y^v_i = x_i \in \{-1,+1\}$ as well. For the indices $i \in F$, at most $n$ of these satisfy $y'_i \in (-1,+1)$ (by~\Cref{lem:BG}) and so all the variables $y^v_i, i \notin F_v$ are either $+1$ or $-1$. 
\end{proof}

We now give details of the procedure \DBGUpdate in~\Cref{algorithm:recursivebgupdate}. When a vector $a_h$ changes to $a_h^{new}$, we only run the algorithm in~\Cref{lem:BG} for the ancestors of the leaf $j$ in $\calT$ for which $h \in P_j$. We now show that this procedure has the desired properties: 

\begin{claim}
\label{cl:updateBG}
Suppose the assignment $\yold$ satisfies the following properties for every node $v$: (i) $\sum_{i \in P_v} \yold_i a_i = 0$, and (ii) there are at most $n$ indices $i \in P_v$ for which $-1 < \yold_i < +1$. Then the assignment $y^r$, where $r$ is the root node,  returned by $\DBGUpdate(r,\yold,h,a_h^{new})$ also satisfies these properties for every node $v$ (with $a_h$ replaced by $a_h^{new}$). Further $y^r$ and $\yold$ differ in at most $O(n \log T)$ coordinates. 
\end{claim}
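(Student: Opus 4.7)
The plan is to mirror the induction used in the proof of Lemma~\ref{lem:dbginvariant}, but restricted to the unique root-to-leaf path in $\calT$ containing the index $h$. Let $j$ be the leaf with $h \in P_j$, and let $v_0 = j, v_1, \ldots, v_k = r$ be its sequence of ancestors, so $k = O(\log T)$ since $\calT$ has height $\log_2 (T/2n)$. The procedure \DBGUpdate only re-runs the computation at these $k+1$ nodes; for every other node $v$, the returned $y^v$ and $F_v$ coincide with the previously stored values. Since $h \notin P_v$ for such $v$, both the vectors $\{a_i\}_{i \in P_v}$ and the entries $\{\yold_i\}_{i \in P_v}$ are unchanged, and invariants~(I1) and (I2) persist at $v$ for free.

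For the path nodes, I would induct on $\ell$ from $0$ up to $k$, exactly tracing the argument of Lemma~\ref{lem:dbginvariant}. At $v_0 = j$, we invoke Lemma~\ref{lem:BG} with the updated leaf matrix (treating $a_h$ as $a_h^{new}$) and the zero fractional initialization; the resulting $y^j$ satisfies $\sum_{i \in P_j} y^j_i a_i = 0$ and has at most $n$ fractional coordinates, collected in $F_j$. At each ancestor $v_\ell$ with $\ell \geq 1$, we combine the recursively updated child containing $h$ with the \emph{unchanged} sibling---the latter still satisfies (I1) and (I2) because its subtree is untouched and $h$ does not appear in it---to form an $x$ on $P_{v_\ell}$ with $\sum_i x_i a_i = 0$ and at most $2n$ fractional entries in $F = F_{v_\ell^L} \cup F_{v_\ell^R}$. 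Applying Lemma~\ref{lem:BG} to $(A_F, x|_F)$ then produces $y'$ satisfying $A_F y' = A_F x|_F$ with only $n$ fractional indices, and defining $y^{v_\ell}_i = x_i$ for $i \notin F$ preserves the global linear relation and yields (I2) at $v_\ell$.

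For the recourse bound, observe that at each of the $O(\log T)$ path-nodes, the only coordinates that change relative to $\yold$ lie in the fractional set $F$, which has size at most $2n$: coordinates outside $F$ at $v_\ell$ were already integral in both children and are copied verbatim into $y^{v_\ell}$. Hence each node contributes at most $O(n)$ coordinate changes, for a total of $O(n \log T)$. The main delicate point---and the one that needs to be spelled out carefully in the bookkeeping---is verifying that entries of $\yold$ which are already $\pm 1$ at a leaf below a path-node $v_\ell$ remain excluded from the fractional pool $F$ at every ancestor $v_{\ell'}$ with $\ell' \geq \ell$; this is what ensures that re-rounding at higher levels does not cascade changes back into already-rounded subtrees and that $|F| \leq 2n$ at every node along the path. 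Once this monotonicity is established (it is immediate from the definition $F_v \subseteq F = F_{v^L} \cup F_{v^R}$ in line~\ref{l:BGcombine}), both the invariant preservation and the coordinate-change count follow directly.
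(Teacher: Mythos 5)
Your proposal is correct and follows essentially the same route as the paper: an induction along the leaf-to-root path containing $h$, combining the recomputed child with the untouched sibling at each ancestor, with the strengthened hypothesis that all fractional coordinates lie in the returned set $F_v$ (the paper's property~(ii)) and the per-level bound of at most $2n$ newly changed coordinates yielding $O(n\log T)$ total recourse. No gaps.
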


\begin{algorithm}[h]
  \caption{Distributed-\bargrin Update: $\DBGUpdate(v, \yold, h, a_h^{new})$}
  \label{algorithm:recursivebgupdate}
  \textbf{Input:} A node $v$ of $\calT$, assignment $\yold$ satisfying~(I1) and (I2), an index $j \in P_v$ where the corresponding vector $a_j$ changes to $a_j^{new}$. \\
  \textbf{Output:} $(y^v, F_v)$: an assignment $y^v_i \in [-1,1]$ for each $i \in P_v$, and $F_v \subseteq P_v$ is the index set of ``fractionally'' signed vectors, i.e., indices $i$ such that $-1 < y^v_i < 1$. 
  \begin{algorithmic}[1]
    \If{$v$ is not a leaf}
    \State Let $v_L$ and $v_R$ be the left and the right children of $v$ respectively. 
    \State Let $\ell \in  \{L,R\}$ be such that $j \in P_{v_\ell}$ and $\ell'$ denote $\{L,R\} \setminus \{\ell\}$. 
    \State $(y^{v_\ell}, F_{v_\ell}) \leftarrow \DBGUpdate(a_\ell, \yold, j, a_j^{new}), $ and $F_{v_{\ell'}} := \{i \in P_{v_{\ell'}} \mid -1 < \yold_i < 1\}. $
    \State Define $F := F_{v_\ell} \cup F_{v_{\ell'}}$, $x_i := y^{v_\ell}_i$ for all $i \in P_{v_\ell}, x_i := \yold_i$ for all $i \in P_{v_{\ell'}}$.  \label{l:combine}
    \Else 
       \State Define $F:= P_v, x_i = 0$ for all $i \in P_v$. 
    \EndIf
    \State Using~\Cref{lem:BG} find a vector $y' \in [-1,1]^{|F|}$ such that (i) $A_F \cdot y'= A_F \cdot  x|_F$, (ii) there are at most $n$ indices, call it $F_v \subseteq F$, such that $-1 < y_i' < 1$ (note that if $h \in F$, then column $h$ of $A_F$ is $a_h^{new}$).  \label{l:combine1}
    \State Define $y^v_i = x_i$ for $i \in P_v \setminus F$ and $y^v_i = y'_i$ for $i \in F$. 
    \State Return $(y^v, F_v)$. 
  \end{algorithmic}
  \end{algorithm}
  
  \begin{proof}
Let the index $h$ belong to $P_j$, where $j$ is a leaf node in $\calT$. Let $w_0 = j, w_1, w_2, \ldots, w_H=r,$ be the path from $j$ to the root $r$ of $\calT$. We prove the following by induction on $\ell$. The assignment $(y^{w_\ell}, F_{v_{\ell}})$ returned by $\DBGUpdate(w_\ell, \yold, h, a_h^{new})$ has the following properties: (i) $\sum_{i \in P_{w_\ell}} y^{w_\ell}_i a_i=0$, (ii) If $-1 < y_i < 1$ for some $i \in P_{w_\ell}$, then $i \in F_{w_\ell}$, (iii) $\yold|_{P_{w_\ell}}$ and $y^{w_\ell}$ differ in at most $2n(\ell+1)$ coordinates. 

The base case when $\ell=0$ follows easily because of line~\ref{l:combine1} and~\Cref{lem:BG}. Now suppose the induction hypothesis is true for $\ell-1$. Assume wlog that $w_{\ell-1}$ is the left child of $w_\ell$ and $w'$ be the right child of $w_\ell$. 
By induction hypothesis and property of $\yold$, we see that (here $x$ is the assignment defined during \DBGUpdate for $w_\ell$):
$$ \sum_{i \in P_{w_\ell}} x_i a_i = \sum_{i \in P_{w_{\ell-1}}} y^{w_{\ell-1}}_i a_i + \sum_{i \in P_{w'}} \yold_i a_i = 0. $$
This proves property~(i). Property~(ii) can be shown similarly. Again, it follows from induction hypothesis and the property of $\yold$ that $|F| \leq 2n$, and so (i) $x$ and $y^{w_\ell}$ differ in at most $2n$ coordinates, and (ii) $\yold|_{P_{w_\ell}}$ and $x$ differ in  at most $2n \ell $ coordinates. This implies property~(iii). 
\end{proof}

\begin{corollary}
\label{cor:BG}
The amortized recourse during a phase of the \DBGUpdate algorithm is $O(n \log N)$. 
\end{corollary}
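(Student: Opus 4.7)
The plan is to show that the bound follows essentially immediately from \Cref{cl:updateBG}, once we track the depth of the tree $\calT$ during a phase and connect the fractional change count to the integer recourse. During a phase with maximum active-vector count $N$, we maintain exactly $M = 2^{\ell+1} \leq 4N$ vectors in the system (by padding with zero vectors at the start of the phase), so the binary tree $\calT$ has $M/(2n) = O(N/n)$ leaves and height $H = O(\log(N/n)) = O(\log N)$. Crucially, the tree is fixed throughout the phase: each update---whether an insertion (which overwrites a padding zero vector) or a deletion (which replaces a vector by zero)---changes the content at exactly one leaf of $\calT$, so a single call to $\DBGUpdate(r, \yold, h, a_h^{\text{new}})$ suffices.

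First I would invoke \Cref{cl:updateBG} to argue that after such an update, the root-level fractional signing vectors $y^r$ and $\yold$ differ in at most $O(n \cdot H) = O(n \log N)$ coordinates. The argument in \Cref{cl:updateBG} shows inductively that at the ancestor $w_\ell$ of the modified leaf, the before/after signings differ in at most $2n(\ell+1)$ coordinates, because $\DBGUpdate$ only re-runs the basic-feasible-solution computation of \Cref{lem:BG} along the leaf-to-root path, and each such computation alters at most $|F| \leq 2n$ coordinates (a new $n$ introduced by the current level plus the $n$ inherited from the changed child).

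Next I would translate this fractional change bound into an integer-recourse bound. Recall that the maintained integer signing $\sigma$ is obtained from $y^r$ by copying the $\pm 1$ coordinates verbatim and independently rounding the (at most $n$) fractional coordinates in $F_r$ via an offline discrepancy algorithm. Hence the sign flips in $\sigma$ caused by a single update decompose into two parts: (i) the coordinates whose $y^r$-value changes between the old and new runs---bounded by $O(n\log N)$ by the previous paragraph---and (ii) a complete re-rounding of the $\le n$ fractional coordinates at the root (since these may be re-signed for free after each update to match the best offline algorithm). Both contributions are $O(n \log N)$, yielding $O(n\log N)$ recourse \emph{per update}, which is in particular an amortized bound.

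The main (minor) obstacle is just being careful about what $T$ and $N$ mean inside \Cref{cl:updateBG} versus in the corollary's statement. \Cref{cl:updateBG} writes $\log T$ where $T$ is the length of the sequence currently represented by $\calT$; inside a phase this is $M = \Theta(N)$, so $\log T = O(\log N)$, and the bound instantiates correctly. Combined with the separate $O(1)$ amortized cost of rebuilding $\calT$ at the start of each phase (charged against the $\Theta(N)$ updates needed to trigger the rebuild, as discussed in the main text), this gives the claimed $O(n\log N)$ amortized recourse per update throughout the phase.
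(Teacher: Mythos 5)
Your proposal is correct and follows essentially the same route as the paper: the corollary is an immediate consequence of applying \Cref{cl:updateBG} once per update during a phase (with the preconditions holding initially because \Cref{algorithm:recursivebg} is run at the start of the phase and being preserved inductively by the claim itself), together with the observation that the fixed phase tree has height $O(\log N)$. Your extra accounting for re-rounding the at most $n$ fractional root coordinates is a harmless refinement that the paper handles separately in the main text.
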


\begin{proof}
When a phase begins, we run~\Cref{algorithm:recursivebg} to ensure that the assignment $y$ satisfies the conditions stated in~\Cref{cl:updateBG} (for the assignment $\yold$). Using this result, we see that after each update operation, these conditions continue to be satisfied. Therefore, \Cref{cl:updateBG} shows that the recourse encountered after each 
update operation is $O(n \log N)$. 
\end{proof}

%\clearpage
{\small
\bibliographystyle{alpha}
\bibliography{bib,refs}
%\bibliography{bib}
}

\end{document}